\theoremstyle{plain}
\newtheorem{thm}{\protect\theoremname}
  \theoremstyle{definition}
  \newtheorem{defn}[thm]{\protect\definitionname}
  \theoremstyle{plain}
  \newtheorem{cor}[thm]{\protect\corollaryname}
  \theoremstyle{plain}
  \newtheorem{lem}[thm]{\protect\lemmaname}
  \theoremstyle{definition}
  \newtheorem{example}[thm]{\protect\examplename}
  \theoremstyle{plain}
  \newtheorem{prop}[thm]{\protect\propositionname}
\newcites{app}{appendix_ref}
  \providecommand{\corollaryname}{Corollary}
  \providecommand{\definitionname}{Definition}
  \providecommand{\examplename}{Example}
  \providecommand{\lemmaname}{Lemma}
  \providecommand{\propositionname}{Proposition}
\providecommand{\theoremname}{Theorem}
\begin{document}
\markboth{E.A. Meirom et al.}{Network Formation Games with Heterogeneous Players and the Internet Structure}

\title{Network Formation Games with Heterogeneous Players and the Internet
Structure}
\begin{abstract}
We study the structure and evolution of the Internet's Autonomous
System (AS) interconnection topology as a game with heterogeneous
players. In this network formation game, the utility of a player depends
on the network structure, e.g., the distances between nodes and the
cost of links. We analyze static properties of the game, such as the
prices of anarchy and stability and provide explicit results concerning
the generated topologies. Furthermore, we discuss dynamic aspects,
demonstrating linear convergence rate and showing that only a restricted
subset of equilibria is feasible under realistic dynamics. We also
consider the case where utility (or monetary) transfers are allowed
between the players. 
\end{abstract}

\author{Eli A. Meirom, Shie Mannor, Ariel Orda \affil{Tecnion - Israel Institute of Technology}}
\maketitle

\category{C.2.2}{Computer-Communication Networks}{Network Protocols}
\terms{Design, Algorithms, Performance}
\keywords{Wireless sensor networks, media access control, multi-channel, radio interference, time synchronization}
\acmformat{Eli .A. Meirom, Shie Mannor, Ariel Orda, 2014. Network Formation Games and the Internet Structure.}
\begin{bottomstuff}
This research was supported in part by the European Union through the CONGAS project (http://www.congas-project.eu/) in the 7th Framework Programme.
\end{bottomstuff}

\section{Introduction}

The structure of large-scale communication networks, in particular
that of the Internet, has carried much interest. The Internet is a
living example of a large, self organized, many-body complex system.
Understanding the processes that shape its topology would provide
tools for engineering its future.

The Internet is composed of multiple Autonomous Systems (ASs), which
are contracted by different economic agreements. These agreements
dictate the routing pathways among the ASs. With some simplifications,
we can represent the resulting network as a graph, where two nodes
(ASs) are connected by a link if traffic is allowed to traverse through
them. The statistical properties of this complex graph, such as the
degree distribution, clustering properties etc., have been extensively
investigated \citep{Gregori2013,Vazquez2002,Siganos2003}. However,
such findings per se lack the ability to either predict the future
evolution of the Internet nor to provide tools for shaping its development.

Most models, notably ``preferential attachment'' \citep{Barabasi1999},
emulate the network evolution by probabilistic rules and recover some
of the statistical aspects of the network. However, they fail to account
for many other features of the network \citep{1019306}, as they treat
the ASs as passive elements rather than economic, profit-maximizing
entities. Indeed, in this work we examine some findings that seem
to contradict the predictions of such models but are explained by
our model.

Game theory is by now a widely used doctrine in network theory and
computer science in general. It describes the behavior of interacting
rational agents and the resulting equilibria, and it is one of the
main tools of the trade in estimating the performance of distributed
algorithms \citep{Borkar2007}. In the context of communication networks,
game theory has been applied extensively to fundamental control tasks
such as bandwidth allocation \citep{Lazar1997}, routing \citep{Altman2000,Orda1993}
and flow control \citep{Altman1994}. It was also proven fruitful
in other networking domains, such as network security \citep{Roy2010}
and wireless network design \citep{Charilas2010}. 

Recently, there has been a surge of studies exploring networks created
by rational players, where the costs and utility of each player depend
on the network structure. Some studies emphasized the context of wireless
networks \citep{5062080} whereas other discussed the inter-AS topology
\citep{Anshelevich2011,Alvarez2012}. These works fall within the
realm of \emph{network formation games} \citep{Johari2006,Jackson1996}.
The focus of those studies has been to detail some specific models,
and then investigate the equilibria's properties, e.g., establishing
their existence and obtaining bounds on the ``price of anarchy''
and ''price of stability''. The latter bound (from above and below,
correspondingly) the social cost deterioration at an equilibrium compared
with a (socially) optimal solution. Taking a different approach,\citet{Lodhi2012a}
present an analytically-intractable model , hence  simulations are
used in order to obtain statistical characteristics of the resulting
topology. 

Nonetheless, most of these studies assume that the players are identical,
whereas the Internet is composed of many types of ASs, such as minor
ISPs, CDNs, tier-1 ASs etc. Only a few studies have considered the
effects of heterogeneity on the network structure. Addressing social
networks, \citep{Vandenbossche2012} describes a network formation
game in which the link costs are heterogeneous and the benefit depends
only on a player's nearest neighbors (i.e., no spillovers). In \citep{Johari2006},
the authors discuss directed networks formation, where the information
(or utility) flows in one direction along a link; the equilibria's
existence properties of the model's extension to heterogeneous players
was discussed in \citep{Alvarez2012}.

With very few exceptions, e.g., \citep{Arcaute2013}, the vast majority
of studies on the application of game theory to networks, and network
formation games in particular, focus on static properties. However,
it is not clear that the Internet, nor the economic relations between
ASs, have reached an equilibrium. In fact, dynamic inspection of the
inter-AS network presents evidence that the system may in fact be
far from equilibrium. Indeed, new ASs emergence, other quit business
or merge with other ASs, and new contracts are signed, often employing
new business terms. Hence, a \emph{dynamic} study of inter-AS network
formation games is called for.

The aim of this study is to address the above two major challenges,
namely \emph{heterogeneity} and \emph{dynamicity}. Specifically, we
establish an analytically-tractable model, which explicitly accounts
for the heterogeneity of players. Then, we investigate both its static
properties as well as its dynamic evolution.

We model the inter-AS connectivity as a network formation game with
\emph{heterogeneous} players that may share costs by monetary transfers.
We account for the inherent bilateral nature of the agreements between
players, by noting that the establishment of a link requires the agreement
of both nodes at its ends, while removing a link can be done unilaterally.
The main contributions of our study are as follows:
\begin{itemize}
\item We evaluate static properties of the considered game, such as the
prices of anarchy and stability and characterize additional properties
of the equilibrium topologies. In particular, the optimal stable topology
and examples of worst stable topologies are expressed explicitly.
\item We discuss the dynamic evolution of the inter-AS network, calculate
convergence rates and basins of attractions for the different final
states. Our findings provide useful insight towards incentive design
schemes for achieving optimal configurations. Our model predicts
the existence of a \emph{settlement-free} clique, and that most of
the other contracts between players include monetary transfers.
\end{itemize}
Game theoretic analysis is dominantly employed as a ``toy model''
for contemplating about real-world phenomena. It is rarely confronted
with real-world data, and to the best of our knowledge, it was never
done in the context of inter-AS network formation games. In this study
we go a step further from traditional formal analysis, and we do consider
real inter-AS topology data. A preliminary data analysis, which supports
our findings, appears in the appendix.

The paper is organized as follows. In the next section, we describe
our model. We discuss two variants, corresponding to whether utility
transfers (e.g., monetary transfers) are allowed or not. We present
static results in Section 3, followed by dynamic analysis in Section
4. Section 5 addresses the case of permissible monetary transfers,
both in the static and dynamic aspects. Finally, conclusions are presented
in Section 6. Full proofs and some technical details are omitted from
this version and can be found in the appendix.

\section{Model}

Our model is inspired by the inter-AS interconnection network, which
is formed by drawing a link between every two ASs that mutually agree
to allow bidirectional communication. The utility of each AS, or player,
depends on the resulting graph's connectivity. We can imagine this
as a game in which a player's strategy is defined by the links it
would like to form, and, if permissible, the price it will be willing
to pay for each. In order to introduce heterogeneity, we consider
two types of players, namely \emph{major league} (or t\emph{ype-A})
players, and \emph{minor league} (or \emph{type-B}) players. The former
may represent main network hubs (e.g., in the context of the Internet,
Level-3 providers), while the latter may represent local ISPs.

The set of type-A (type B) players is denoted by $T_{A}$ ( $T_{B}$).
A link connecting node $i$ to node $j$ is denoted as either $(i,j)$
or $ij$. The total number of players is $N=|T_{A}|+|T_{B}|$, and
we always assume $N\geq3$. The \emph{shortest distance} between nodes
$i$ and $j$ is the minimal number of hops along a path connecting
them and is denoted by $d(i,j)$. Finally, The degree of node $i$
is denoted by $deg(i)$.

\subsection{Basic model}

The utility of every player depends on the aggregate distance from
all other players. Most of the previous studies assume that each player
has a specific traffic requirement for every other player. This results
in a huge parameter space. However, it is reasonable to assume that
an AS does not have exact flow perquisites to every individual AS
in the network, but would rather group similar ASs together according
to their importance. Hence, a player would have a strong incentive
to maintain a good, fast connection to the major information and content
hubs, but would relax its requirements for minor ASs. Accordingly,
we represent the connection quality between players as their graph
distance, since many properties depend on this distance, for example
delays and bandwidth usage. As the connection to major players is
much more important, we add a weight factor $A$ to the cost function
in the corresponding distance term. The last contribution to the cost
is the link price, $c.$ This term represents factors such as the
link\textquoteright{}s maintenance costs, bandwidth allocation costs
etc. 

The structure of our cost function extends the work of \citet{Fabrikant2003}
and \citet{Corbo2005}. This model was studied extensively, including
numerous extenstions, e.g., \citet{Demaine2007,Anshelevich2003}.
Here, we focus on the hetereogenous dynamic case.

We allow different types of players to incur different link costs,
$c_{A},c_{B}$. For example, major player have greater financial resources,
reducing the effective link cost. They have incorporate advanced infrastructure
that allows them to cope successfully with the increased traffic.
Alternatively, players may evaluate the relative player's importance,
which is expressed by the factor $A$, differently. For example, a
search engine may spend significant resources in order to maintain
a fast connection to a content provider, in order to be able to index
its content efficiently. A domestic ISP or a university hub will care
less about the connection quality. As it will turn out, the relevant
quantity is $A/c$, and therefore it is sufficient to allow a variation
in one parameter only, which for simplicity will assume it is the
link cost $c.$ Formally, the (dis-)utility of players is represented
as follows.
\begin{defn}
The cost function,$\, C(i)$, of node $i$, is defined as:\label{cost-definition}
\begin{eqnarray*}
C_{A}(i) & \triangleq & deg(i)\cdot c_{A}+A\sum_{j\in T_{A}}d(i,j)+\sum_{j\in T_{B}}d(i,j)\\
C_{B}(i) & \triangleq & deg(i)\cdot c_{B}+A\sum_{j\in T_{A}}d(i,j)+\sum_{j\in T_{B}}d(i,j)
\end{eqnarray*}

where $A>1$ represents the relative importance of class A nodes over
class B nodes.

Then, the \emph{social cost} is defined as $\mathcal{S}=\sum_{i}C(i)$

Set $c\triangleq\left(c_{A}+c_{B}\right)/2$. We assume $c_{A}\leq c_{B}.$
The optimal (minimal) social cost is denoted as $\mathcal{S}_{optimal}$. 

\begin{defn}
The change in cost of player $i$ as a result of the addition of link
$(j,k)$ is denoted by $\Delta C(i,E+jk)\triangleq C\left(i,E\cup(j,k)\right)-C\left(i,E\right)$.

We will sometimes use the abbreviation $\Delta C(i,jk)$. When $(j,k)\in E$,
we will use the common notation $\Delta C(i,E-jk)\triangleq C\left(i,E\right)-C\left(i,E\setminus(j,k)\right)$.\end{defn}

\end{defn}
Players may establish links between them if they consider this will
reduce their costs. We take into consideration the agreement's bilateral
nature, by noting that the establishment of a link requires the agreement
of both nodes at its ends, while removing a link can be done unilaterally.
This is known as a \emph{pairwise-stable }equilibrium\citep{Jackson1996,Arcaute2013}. 
\begin{defn}
The players' strategies are \emph{pairwise-stable }if for all $i,j\in T_{A}\cup T_{B}$
the following hold:

a) if $ij\in E$, then $\Delta C(i,E-ij)>0$;
\end{defn}
b) if $ij\notin E$, then either $\Delta C(i,E+ij)>0$ or $\Delta C(j,E+ij)>0$. 

The corresponding graph is referred to as a \emph{stabilizable} graph.

\subsection{Utility transfer}

In the above formulation, we have implicitly assumed that players
may not transfer utilities. However, often players are able to do
so, in particular via monetary transfers. We therefore consider also
an extended model that incorporates such possibility. Specifically,
the extended model allows for a monetary transaction in which player
$i$ pays player $j$ some amount $P_{ij}$ iff the link $(i.j)$
is established. Player $j$ sets some minimal price $w_{ij}$ and
if $P_{ij}\geq w_{ij}$ the link is formed. The corresponding change
to the cost function is as follows.
\begin{defn}
\label{monetary cost definition}The cost function of player $i$
when monetary transfers are allowed is $\tilde{C}(i)\triangleq C(i)+\sum_{j,ij\in E}\left(P_{ij}-P_{ji}\right)$.
\end{defn}
Note that the social cost remains the same as in Def. \ref{monetary cost definition}
as monetary transfers are canceled by summation.

Monetary transfers allow the sharing of costs. Without transfers,
a link will be established only if \emph{both} parties, $i$ and $j$,
reduce their costs, $C(i,E+ij)<0$ and $C(j,E+ij)<0$. Consider, for
example, a configuration where $\Delta C(i,E+ij)<0$ and $\Delta C(j,E+ij)>0$.
It may be beneficial for player $i$ to offer a lump sum $P_{ij}$
to player $j$ if the latter agrees to establish $(i,j)$. This will
be feasible only if the cost function of both players is reduced.
It immediately follows that if $\Delta C(i,E+ij)+\Delta C(i,E+ij)<0$
then there is a value $P_{ij}$ such that this condition is met. Hence,
it is beneficial for both players to establish a link between them.
In a game theoretic formalism, if the \emph{core} of the two players
game is non-empty, then they may pick a value out of this set as the
transfer amount. Likewise, if the core is empty, or $\Delta C(i,E+ij)+\Delta C(j,E+ij)>0$,
then the best response of at least one of the players is to remove
the link, and the other player has no incentive to offer a payment
high enough to change the its decision. Formally:
\begin{cor}
\textup{\emph{\emph{}%
\emph{\label{lem:edges with monetary transfers.-1}When monetary transfers
are allowed, the link $(i,j)$ is established iff 
\[
\Delta C(i,E+ij)+\Delta C(j,E+ij)<0
\]
 The link is removed iff 
\[
\Delta C(i,E-ij)+\Delta C(j,E-ij)>0
\]
}}}

\textup{\emph{\emph{}%
\emph{\label{lem:edges with monetary transfers.}When monetary transfers
are allowed, the link $(i,j)$ is established iff $\Delta C(i,E+ij)+\Delta C(j,E+ij)<0$.
The link is removed iff $\Delta C(i,E-ij)+\Delta C(j,E-ij)>0$.}}}
\end{cor}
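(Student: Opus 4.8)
The plan is to read the statement as the specialization of the pairwise-stability notion to the transferable-utility cost $\tilde C$ of Definition~\ref{monetary cost definition}, and to reduce the existence of a mutually acceptable payment to a one-dimensional feasibility check. First I would note that, across a single link $(i,j)$, only the \emph{net} transfer $P\triangleq P_{ij}-P_{ji}$ is relevant, that $P$ may be taken to range over all of $\mathbb{R}$ (a negative value meaning $j$ pays $i$), and that adding or deleting $(i,j)$ leaves the degree and distance terms of every other link unchanged, hence also the monetary terms $P_{ik},P_{jk}$ of those links; so the only quantities that move are $\Delta\tilde C(i,E+ij)=\Delta C(i,E+ij)+P$ and $\Delta\tilde C(j,E+ij)=\Delta C(j,E+ij)-P$.

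For the establishment clause, the link is formed exactly when some transfer makes its addition a strict improvement for \emph{both} endpoints: there exists $P$ with $\Delta C(i,E+ij)+P<0$ and $\Delta C(j,E+ij)-P<0$. The first inequality reads $P<-\Delta C(i,E+ij)$ and the second $P>\Delta C(j,E+ij)$, so such a $P$ exists iff $\Delta C(j,E+ij)<-\Delta C(i,E+ij)$, i.e. iff $\Delta C(i,E+ij)+\Delta C(j,E+ij)<0$ --- equivalently, iff the core of the two-player cooperative game with the ``no link'' reservation values is nonempty, as remarked before the statement. For the removal clause, since a link can be dropped unilaterally, $(i,j)$ survives only if the endpoints can renegotiate $P$ so that each (weakly) prefers keeping $(i,j)$ in $E$ to living in $E\setminus ij$; with $\Delta C(\cdot,E-ij)=C(\cdot,E)-C(\cdot,E\setminus ij)$ this is the requirement $\Delta C(j,E-ij)\le P\le -\Delta C(i,E-ij)$, which is feasible iff $\Delta C(i,E-ij)+\Delta C(j,E-ij)\le 0$. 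Taking the contrapositive gives the stated condition: the link is removed iff $\Delta C(i,E-ij)+\Delta C(j,E-ij)>0$.

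The calculation is thus a one-line elimination of $P$ in each case; the only genuinely delicate points --- and the reason the authors hesitate between labelling this a lemma and a definition --- are (i) pinning down the sign convention for the net transfer and allowing it to be any real number, and (ii) the knife-edge situation in which the relevant sum of $\Delta C$'s is exactly $0$, where the bargaining interval collapses to a point and both players are indifferent; under the strict reading of pairwise stability adopted here this is resolved against having the link, which is precisely why both displayed conditions come out strict and why they remain mutually consistent for a link already present in $E$. Once these conventions are fixed, the two equivalences are immediate from the displayed inequalities.
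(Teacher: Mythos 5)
Your argument is correct and is essentially the paper's own: the authors justify this corollary in the paragraph immediately preceding it by observing that a mutually beneficial transfer $P_{ij}$ exists precisely when the two-player core is nonempty, i.e.\ when $\Delta C(i,E+ij)+\Delta C(j,E+ij)<0$, and symmetrically for removal. Your explicit elimination of the net transfer $P$ from the two one-sided inequalities, and your handling of the knife-edge case where the sum vanishes, simply make that same argument precise.
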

In the remainder of the paper, whenever monetary transfers are feasible,
we will state it explicitly, otherwise the basic model (without transfers)
is assumed.

\section{Basic model - Static analysis}

In this section we discuss the properties of stable equilibria. Specifically,
we first establish that, under certain conditions, the major players
group together in a clique (section \ref{sub:The-type-A-clique}).
We then describe a few topological characteristics of all equilibria
(section \ref{sub:Pair-wise-equilibria}). 

As a metric for the quality of the solution we apply the commonly
used measure of the social cost, which is the sum of individual costs.
We evaluate the \emph{price of anarchy}, which is the ratio between
the social cost at the worst stable solution and its value at the
optimal solution, and the \emph{price of stability}, which is the
ratio between the social cost at the best stable solution and its
value at the optimal solution (section \ref{sub:PoA and PoS}).

\subsection{\label{sub:The-type-A-clique}The type-A clique}

Our goal is understanding the resulting topology when we assume strategic
players and myopic dynamics. Obviously, if the link's cost is extremely
low, every player would establish links with all other players. The
resulting graph will be a clique. As the link's cost increase, it
becomes worthwhile to form direct links only with major players. In
this case, only the major players' subgraph is a clique. The first
observation leads to the following result.
\begin{lem}
If $c_{B}<1$ then the only stabilizable graph is a clique.
\end{lem}
 If two nodes are at a distance $L+1$ of each other, then there
is a path with $L$ nodes connecting them. By establishing a link
with cost $c$, we are shortening the distance between the end node
to $\sim L/2$ nodes that lay on the other side of the line. The average
reduction in distance is also $\approx L/2$, so by comparing $L^{2}\approx4c$
we obtain a bound on $L$, as follows:
\begin{lem}
\label{lem:The-longest-distance}The longest distance between any
node $i$ and node $j\in T_{B}$ is bounded by $2\sqrt{c_{B}}$. The
longest distance between nodes $i,j\in T_{A}$ is bounded by $\sqrt{(1-2A)^{2}+4c_{A}}-2\left(A-1\right)$.
In addition, if $c_{A}<A$ then there is a link between every two
type-A nodes.
\end{lem}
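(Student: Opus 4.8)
The plan is to argue by contradiction from clause (b) of pairwise stability: were some distance larger than the claimed bound (or, for the last assertion, were two type-$A$ nodes non-adjacent), then inserting the ``missing'' link between the two relevant endpoints would strictly lower \emph{both} their costs, so the graph could not be stabilizable. For a non-edge $(u,v)$ I decompose $\Delta C(u,E+uv)=c_{\mathrm{type}(u)}-R_u$, where $R_u\ge 0$ is the weighted decrease of $u$'s aggregate-distance term produced by adding $uv$ (a shrink of the distance to a type-$A$ node counts with weight $A$, to a type-$B$ node with weight $1$). With this notation, clause (b) fails at $(u,v)$ precisely when $R_u\ge c_{\mathrm{type}(u)}$ \emph{and} $R_v\ge c_{\mathrm{type}(v)}$.

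The engine is a lower bound on $R_i,R_j$ extracted from one shortest path $i=v_0,v_1,\dots,v_D=j$, with $D:=d(i,j)$. Since the path is shortest, $d(i,v_m)=m$ and $d(j,v_m)=D-m$; after inserting $ij$ the distance from $j$ to $v_m$ becomes at most $\min(D-m,\,1+m)$, a drop of at least $\max(0,\,D-1-2m)$, and symmetrically from $i$'s side. Summing these contributions along the path (bounding each weight below by $1$ and discarding any further savings toward off-path nodes) gives
\[
R_i,\ R_j\ \ge\ g(D):=\sum_{m\ge 0}\max(0,\,D-1-2m)\ =\ \lfloor D^2/4\rfloor .
\]
If in addition both endpoints are type-$A$, the saving toward $j=v_D$ itself carries weight $A$ and equals $A(D-1)$, so in that case
\[
R_i,\ R_j\ \ge\ A(D-1)+\sum_{m=1}^{D-1}\max(0,\,2m-D-1)\ =\ A(D-1)+\lfloor (D-2)^2/4\rfloor .
\]

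Now the three assertions follow. First, $j\in T_B$ pays $c_B$ and the other endpoint pays $c_{\mathrm{type}(i)}\le c_B$ (as $c_A\le c_B$), so if $g(D)\ge c_B$ then $R_i\ge c_{\mathrm{type}(i)}$ and $R_j\ge c_B$ and the graph is not stabilizable; hence $g(D)<c_B$, i.e.\ $D\le 2\sqrt{c_B}$ (the odd-$D$ subcase gives the marginally weaker $D^2\le 4c_B+1$, the discrepancy being the integer rounding already implicit in the ``$L^2\approx 4c$'' heuristic stated above the lemma). Second, for $i,j\in T_A$ both pay $c_A$, so stability forces $A(D-1)+\lfloor(D-2)^2/4\rfloor<c_A$; clearing the factor $4$ and completing the square in $D$ rewrites the binding (odd-length) case as $(D+2(A-1))^2<(1-2A)^2+4c_A$, that is, $D\le\sqrt{(1-2A)^2+4c_A}-2(A-1)$, the even-length case giving an even smaller upper bound. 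Third, if two type-$A$ nodes were non-adjacent then $D\ge 2$, so the second display already yields $R_i,R_j\ge A(D-1)\ge A>c_A$; then $\Delta C(i,E+ij)<0$ and $\Delta C(j,E+ij)<0$, contradicting clause (b), so every two type-$A$ nodes are joined.

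The only genuine work is the second display: isolating the weight-$A$ gain of reaching the far type-$A$ endpoint, evaluating the truncated arithmetic progressions $g(D)=\lfloor D^2/4\rfloor$ and $\sum_{m=1}^{D-1}\max(0,2m-D-1)=\lfloor(D-2)^2/4\rfloor$ in closed form, and checking that the resulting quadratic inequality collapses to exactly the stated radical with the odd-length case tight. The integrality of $D$ is the sole reason the clean statement $2\sqrt{c_B}$ is not literally the sharp inequality $D^2\le 4c_B+1$; one may either absorb this into the ``$\approx$'' already used in the text or record the bounds with ceilings for an exact form. Everything else — the split of $\Delta C$ and the per-node distance drops along a shortest path — is routine.
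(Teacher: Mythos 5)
Your proof is correct and follows essentially the same route as the paper's: both arguments sum the distance drops along a shortest path between the two endpoints (an arithmetic progression evaluating to $\lfloor D^2/4\rfloor$), isolate the weight-$A$ saving toward the far type-A endpoint to get the second bound, and invoke clause (b) of pairwise stability to force the quadratic inequality on $D$. Your handling of the integer rounding and the direct $D\geq 2\Rightarrow R\geq A>c_A$ argument for the clique claim are in fact slightly cleaner than the paper's own computation, which carries the same $\pm 1$ slack.
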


Lemma \ref{lem:The-longest-distance} indicates that if $1<c_{A}<A$
then the type $A$ nodes will form a clique (the ``nucleolus'' of
the network). The type $B$ nodes form structures that are connected
to the type $A$ clique (the network nucleolus). These structures
are not necessarily trees and will not necessarily connect to a single
point of the type-A clique only. This is indeed a very realistic scenario,
found in many configurations. In the appendix we compare this result
to actual data on the inter-AS interconnection topology.

If $c_{A}>A$ then the type-A clique is no longer stable. This setting
does not correspond to the observed nature of the inter-AS topology
and we shall focus in all the following sections on the case $1<c_{A}<A$.
Nevertheless, in the appendix we treat the case $c_{A}>A$ explicitly.

\subsection{\label{sub:Pair-wise-equilibria}Equilibria's properties}

Here we describe common properties of all pair-wise equilibria. We
start by noting that, unlike the findings of several other studies
\citet{Arcaute2013,5173479,NisanN.RoughgardenT.TardosE.2007}, in
our model, at equilibrium, the type-B nodes are not necessarily organized
in trees. This is shown in the next example.

\begin{figure}
\centering{}\includegraphics[width=0.75\columnwidth]{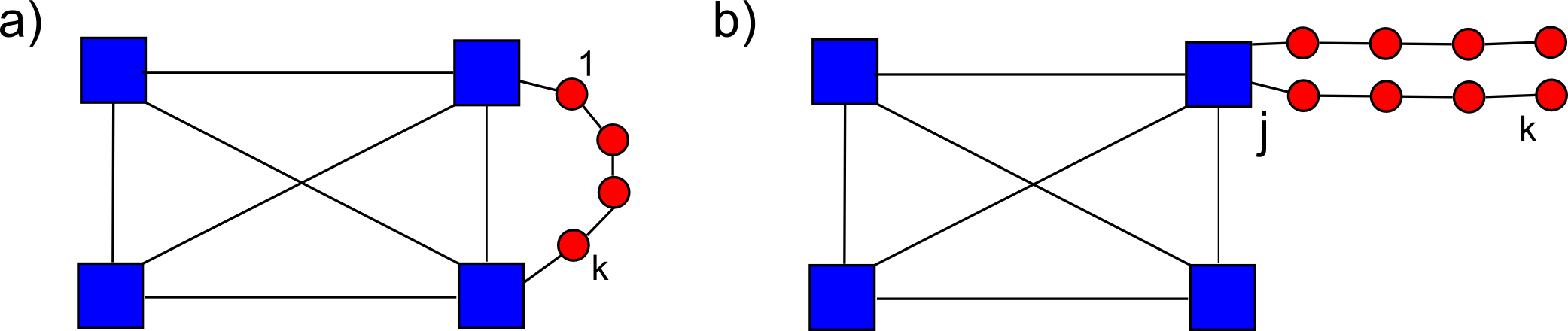}\caption{\label{fig:loop example}Non optimal networks. The type-A clique is
in blue squares, the type-B players are in red circles. a) The network
described in Example \ref{example}. b) \label{fig:A-poor-equilibrium}A
poor equilibrium, as described in the appendix.}
\end{figure}

\begin{example}
\label{example}Assume for simplicity that $c_{A}=c_{B}=c$. Consider
a line of length $k$ of type B nodes, $(1,2,3...,k)$ such that $\sqrt{8c}>k+1>\sqrt{2c}$
or equivalently $\left(k+1\right)^{2}<8c<4\left(k+1\right)^{2}$ .
In addition, the links $(j_{1},1)$ and $(j_{2},k)$ exist, where
$j\in T_{A}$, i.e., the line is connected at both ends to different
nodes of the type-A clique, as depicted in Fig \ref{fig:loop example}.
In \citep{Meirom2013} we show that this is a stabilizable graph.

\end{example}
A stable network cannot have two ``heavy'' trees, ``heavy'' here
means that there is a deep sub-tree with many nodes, as it would be
beneficial to make a shortcut between the two sub-trees(details appear
in the appendix). In other words, trees must be shallow and small.
This means that, while there are many equilibria, in all of them nodes
cannot be too far apart, i.e., a small-world property. Furthermore,
the trees formed are shallow and are not composed of many nodes.

\subsection{\label{sub:PoA and PoS}Price of Anarchy \& Price of Stability }

As there are many possible link-stable equilibria, a discussion of
the price of anarchy is in place. First, we explicitly find the optimal
configuration. Although we establish a general expression for this
configuration, it is worthy to also consider the limiting case of
a large network, $|T_{B}|\gg1,|T_{A}|\gg1$. Moreover, typically,
the number of major league players is much smaller than the other
players, hence we also consider the limit $|T_{B}|\gg|T_{A}|\gg1$.

\begin{figure}
\centering{}\includegraphics[width=0.7\columnwidth]{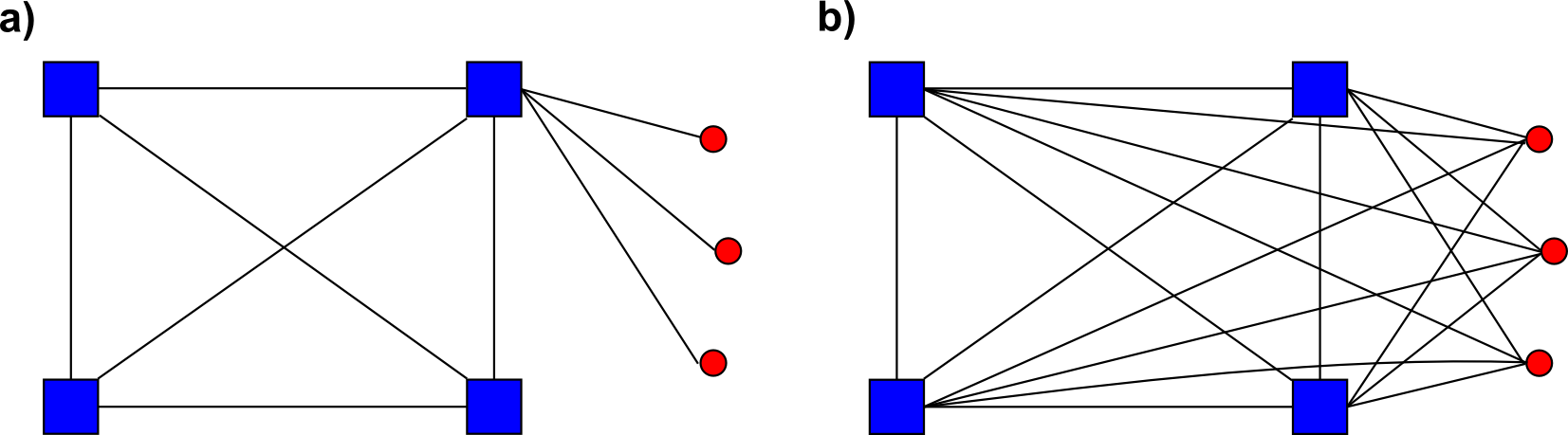}\caption{\label{fig:The-optimal-solution} \label{fig:The-optimal-state - monetary transfers}The
optimal solution, as described in Lemma \ref{lem:optimal solution}.
If $\left(A+1\right)/2<c$ the optimal solution is described by a),
otherwise by b). When monetary transfers (section \ref{sec:Monetary-transfers})
are allowed, both configurations are stabilizable. Otherwise, only
a) is stabilizable.}
\end{figure}

\begin{prop}
\label{lem:optimal solution}Consider the network where the type $B$
nodes are connected to a specific node $j\in T_{A}$ of the type-A
clique. The social cost in this stabilizable network (Fig. \ref{fig:The-optimal-solution}(a))
is 
\begin{eqnarray*}
\mathcal{S} & = & 2|T_{B}|\left(|T_{B}|-1+c+\left(A+1\right)(|T_{A}|-1/2)\right)+|T_{A}|\left(|T_{A}|-1\right)\left(c_{A}+A\right).
\end{eqnarray*}

Furthermore, if $|T_{B}|\gg1,|T_{A}|\gg1$ then, omitting linear terms
in $|T_{B}|,|T_{A}|$, 
\[
\mathcal{S}=2|T_{B}|(|T_{B}|+\left(A+1\right)|T_{A}|)+|T_{A}|^{2}\left(c+A\right).
\]
Moreover, if $\frac{A+1}{2}\le c$ then this network structure is
socially optimal and the price of stability is $1$, otherwise the
price of stability is
\[
PoS=\frac{2|T_{B}|(|T_{B}|+\left(A+1\right)|T_{A}|)+|T_{A}|^{2}\left(c_{A}+A\right)}{2|T_{B}|\left(|T_{B}|+\left(\frac{A+1}{2}+c\right)|T_{A}|\right)+|T_{A}|^{2}\left(c_{A}+A\right)}.
\]

Finally, if \textup{$|T_{B}|\gg|T_{A}|\gg1$, }\textup{\emph{then
the price of stability is asymptotically $1$.}}
\end{prop}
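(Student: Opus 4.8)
The plan is to establish the assertions in order — the closed form for $\mathcal S$, its large-network asymptotics, the characterisation of the social optimum (configuration (a) exactly when $\tfrac{A+1}{2}\le c$, with the stated price of stability otherwise), and the $|T_B|\gg|T_A|$ limit. First I would verify the closed form by plain bookkeeping. With $n_A=|T_A|$, $n_B=|T_B|$ and hub $j\in T_A$, the graph of Fig.~\ref{fig:The-optimal-solution}(a) has $\deg(j)=n_A-1+n_B$, $\deg(i)=n_A-1$ for $i\in T_A\setminus\{j\}$, $\deg(b)=1$ for $b\in T_B$, and distances $d=1$ inside the clique, $d(j,b)=1$, $d(i,b)=2$ for $i\ne j$, $d(b,b')=2$. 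Substituting into $C_A,C_B$ and summing over all nodes — equivalently, charging each present A--A pair $2A\,d+2c_A$, each A--B pair $(A+1)d+2c\,\mathbf{1}[\text{edge}]$ and each B--B pair $2d$ (none of which carries an edge) — and collecting terms gives the stated formula; discarding the terms linear in $n_A,n_B$ gives the asymptotic version. Stabilisability follows quickly: the clique is stable because $c_A<A$ (Lemma~\ref{lem:The-longest-distance}); no B--B link can be added (each endpoint would pay $c_B>1$ for a distance gain of $1$), and no A--B link $(k,b)$ with $k\ne j$ can be added (the type-A endpoint $k$ would pay $c_A>1$ for a gain of only $1$ and thus vetoes it); and no present link can be removed profitably — inside the clique because $c_A<A$, and the unique link of a leaf because removing it disconnects the leaf.

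For the social optimum I would argue as follows. A cost-minimising graph is connected (else $\mathcal S=\infty$) and has a clique on $T_A$, since inserting a missing A--A edge raises link cost by $2c_A$ but lowers the A--A distance term by at least $2A>2c_A$ and lowers nothing else. Fix that clique; for $b\in T_B$, let $\mathrm{ch}(b):=[\text{link cost of }b]+(A+1)\sum_{k\in T_A}d(b,k)+\sum_{b'\ne b}d(b,b')$, so that $\mathcal S=\mathcal S_{\mathrm{clique}}+\sum_{b}\mathrm{ch}(b)$ with each B--B pair split evenly between its endpoints. Because the clique has diameter $1$: if $b$ has $m=\deg_A(b)\ge1$ clique-neighbours then $\sum_k d(b,k)=2n_A-m$, so its link-plus-A-distance part is at least $2cm+(A+1)(2n_A-m)=2(A+1)n_A+m\bigl(2c-(A+1)\bigr)$; if $m=0$ it is $\ge c_B+2(A+1)n_A$, which exceeds the $m=1$ value because $c_A<A$; and always $\sum_{b'\ne b}d(b,b')\ge2(n_B-1)$. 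Hence when $2c\ge A+1$ each $\mathrm{ch}(b)\ge 2c+(A+1)(2n_A-1)+2(n_B-1)$, attained for all $b$ simultaneously by attaching every type-B node as a leaf of a single common hub — configuration (a) — so (a) is socially optimal; when $2c<A+1$ each $\mathrm{ch}(b)$ is instead minimised at $m=n_A$, yielding configuration (b) (every A--B pair at distance $1$), whose cost is computed just as before and is the denominator of the stated $PoS$.

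For the price of stability: if $\tfrac{A+1}{2}\le c$ the optimum (a) is itself stable, so $PoS=1$. If $\tfrac{A+1}{2}>c$, configuration (b) is \emph{not} stable — an A-node adjacent to a type-B node possessing a second A-neighbour strictly profits by dropping that edge, since it pays $c_A>1$ merely to hold its distance to a single type-B node at $1$ — so one must instead find the cheapest \emph{stable} graph. I would show it is (a): every stable graph carries the A-clique (Lemma~\ref{lem:The-longest-distance}), and combining the charging above with the structural fact (Section~\ref{sub:Pair-wise-equilibria}) that type-B trees in an equilibrium are shallow and small — so no type-B node can retain two A-neighbours and no long type-B chain can persist — forces $\mathrm{ch}(b)\ge 2c+(A+1)(2n_A-1)+2(n_B-1)$ for every $b$ in any stable graph, i.e.\ $\mathcal S\ge\mathcal S_{(a)}$; since (a) is stable it is the best stable graph, so $PoS=\mathcal S_{(a)}/\mathcal S_{optimal}$, the stated ratio. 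Finally, when $n_B\gg n_A\gg1$ the numerator and denominator of that ratio differ by $2n_An_B\bigl(\tfrac{A+1}{2}-c\bigr)=O(n_An_B)$ while each is $\ge 2n_B^2$, so $PoS=1+O(n_A/n_B)\to1$.

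The routine parts are the cost computation and the social-optimum bound, which are self-contained. The main obstacle is the price-of-stability claim for $\tfrac{A+1}{2}>c$: showing that \emph{no} stable graph is cheaper than configuration (a) relies on the structural description of equilibria (small, shallow type-B trees) from Section~\ref{sub:Pair-wise-equilibria}, whose full proof is deferred to the appendix.
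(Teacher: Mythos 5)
Your proposal is correct and follows essentially the same route as the paper: verify stability of configuration (a), compute its cost by direct bookkeeping, and decide optimality by noting that each additional type-A--type-B link changes the social cost by $2c-(A+1)$, so that configuration (a) or (b) is the optimum according to the sign of $\frac{A+1}{2}-c$. You are in fact slightly more careful than the paper on one point: when $\frac{A+1}{2}>c$ the stated $PoS$ formula requires that (a) be the \emph{cheapest stable} graph, which the paper asserts only implicitly, whereas you isolate the structural fact actually needed (no type-B node retains two type-A neighbours in equilibrium) and sketch the resulting lower bound.
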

\begin{proof}
This structure is immune to removal of links as a disconnection of
a $(type-B,type-A)$ link will disconnect the type-B node, and the
type-A clique is stable (lemma \ref{lem:The-longest-distance}). For
every player $j$ and $i\in T_{B}$, any additional link $(i,j)$
will result in $\Delta C(j,E+ij)\geq c_{B}-1>0$ since the link only
reduces the distance $d(i,j)$ from 2 to 1. Hence, player $j$ has
no incentive to accept this link and no additional links will be formed.
This concludes the stability proof. 

We now turn to discuss the optimality of this network structure. First,
consider a set of type-A players. Every link reduce the distance of
at least two nodes by at least one, hence the social cost change by
introducing a link is negative, since $2c_{A}-2A<0$. Therefore, in
any optimal configuration the type-A nodes form a complete graph.
The other terms in the social cost are due to the inter-connectivity
of type-B nodes and the type-A to type-B connections. As $deg(i)=1$
for all $i\in T_{B}$ the cost due to link's prices is minimal. Furthermore,
$d(i,j)=1$ and the distance cost to node $j$ (of type A) is minimal
as well. For all other nodes $j'$, $d(i,j')=2$. 

Assume this configuration is not optimal. Then there is a \emph{topologically
different} configuration in which there exists an additional node
$j'\in T_{A}$ for which $d(i,j')=1$ for some $i\in T_{B}$. Hence,
there's an additional link $(i,j)$. The social cost change is $2c+1+A$
. Therefore, if $\frac{A+1}{2}\le c$ this link reduces the social
cost. On the other hand, if $\frac{A+1}{2}>c$ every link connecting
a type-B player to a type-A player improves the social cost, although
the previous discussion show these link are unstable. In this case,
the optimal configuration is where all type-B nodes are connected
to all the type-A players, but there are no links linking type-B players.
This concludes the optimality proof.

The cost due to inter-connectivity of type A nodes is

\[
c_{A}|T_{A}|\left(|T_{A}|-1\right)+A|T_{A}|\left(|T_{A}|-1\right)=|T_{A}|\left(|T_{A}|-1\right)\left(c_{A}+A\right).
\]

The first expression is due to the cost of $|T_{A}|$ clique's links
and the second is due to distance (=1) between each type-A node. The
distance of each type B nodes to all the other nodes is exactly 2,
except to node $j$, to which its distance is 1. Therefore the social
cost due to type B nodes is 
\begin{align*}
 & 2|T_{B}|(|T_{B}|-1)+2c_{B}|T_{B}|+2\left(A+1\right)|T_{B}|\left(\left(|T_{A}|-1\right)+\left(A+1\right)+2\left(A+1\right)(|T_{A}|-1)\right)\\
= & 2|T_{B}|\left(|T_{B}|-1+c_{B}+\left(A+1\right)(|T_{A}|-1/2)\right).
\end{align*}

The terms on the left hand side are due to (from left to right) the
distance between nodes of type B, the cost of each type-B's single
link, the cost of type-B nodes due to the distance (=2) to all member
of the type-A clique bar $j$ and the cost of type $B$ nodes due
to the distance (=1) to node $j$. The social cost is 
\begin{eqnarray*}
\sum C(i) & = & 2|T_{B}|\left(|T_{B}|-1+c+\left(A+1\right)(|T_{A}|-1/2)\right)+|T_{A}|\left(|T_{A}|-1\right)\left(c_{A}+A\right).
\end{eqnarray*}

To complete the proof, note that if $\frac{A+1}{2}>c$ the latter
term in the social cost of the optimal (and unstable) solution is
\[
2|T_{B}|(|T_{B}|-1)+2c|T_{B}|\left(1+|T_{A}|\right)+\left(A+1\right)|T_{B}||T_{A}|=2|T_{B}|\left(|T_{B}|-1+\left(\frac{A+1}{2}+c\right)|T_{A}|\right).
\]

As the number of links is $|T_{B}|\left(1+|T_{A}|\right)$ and the
distance of type-B to type-A nodes is 1. The optimal social cost is
then
\[
2|T_{B}|\left(|T_{B}|-1+\left(\frac{A+1}{2}+c\right)|T_{A}|\right)+|T_{A}|\left(|T_{A}|-1\right)\left(c_{A}+A\right).
\]

Considering all quantities in the limit $|T_{B}|\gg|T_{A}|\gg1$ completes
the proof.\end{proof}

Next, we evaluate the price of anarchy. The social cost in the stabilizable
topology presented in Fig \ref{fig:A-poor-equilibrium}, composed
of a type-A clique and long lines of type-B players, is calculated
in \citep{Meirom2013}. The ratio between this value and the optimal
social cost constitutes a lower bound on the price of anarchy. An
upper bound is obtained by examining the social cost in any topology
that satisfies Lemma \ref{lem:The-longest-distance}. The result in
the large network limit is presented by the following proposition.

\begin{prop}
\label{thm:summary-of-results}If $c_{B}<A$ and $|T_{B}|\gg|T_{A}|\gg1$
the price of anarchy is $\Theta(c_{B})$. 
\end{prop}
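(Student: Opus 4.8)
The plan is to bracket the price of anarchy between a lower bound coming from an explicit bad equilibrium and an upper bound coming from the diameter constraint of Lemma~\ref{lem:The-longest-distance}, showing both are of order $c_B$. For the lower bound I would use the stabilizable graph of Figure~\ref{fig:A-poor-equilibrium}: the type-A clique with the type-B players arranged into chains attached to it, the chains being as long as stability permits (in the spirit of Example~\ref{example}). Two things must be checked. First, stabilizability: a chain cannot be contracted unilaterally without disconnecting its tail; the type-A clique is stable because $1<c_A<A$ (Lemma~\ref{lem:The-longest-distance}); and at the maximal chain length no chord and no type-A shortcut is jointly profitable for both of its endpoints --- this is exactly the content of Example~\ref{example} and is carried out in \citep{Meirom2013}. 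Second, the social cost, whose dominant contribution is the sum of pairwise distances among type-B players (players in distinct chains are at distance of the order of the chain length, and there are $\Theta(|T_B|^2)$ such pairs), to which one adds the lower-order within-chain distances, the $A$-weighted type-A--type-B distances, the link costs, and the $\Theta((c_A+A)|T_A|^2)$ clique term; the precise expression is in \citep{Meirom2013}. Carrying this out gives a social cost of order $c_B|T_B|^2$ when $|T_B|\gg|T_A|\gg1$, and dividing by $\mathcal{S}_{optimal}=\Theta(|T_B|^2)$ (Proposition~\ref{lem:optimal solution}) gives $PoA=\Omega(c_B)$.

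For the upper bound, let $G$ be any stabilizable graph. Since $1<c_A<A$, the type-A players form a clique, contributing $O(A|T_A|^2)$ to the social cost. By Lemma~\ref{lem:The-longest-distance} every distance involving a type-B player is at most $2\sqrt{c_B}$, so the $A$-weighted type-A--type-B term is $O(A\sqrt{c_B}\,|T_A||T_B|)$ and the type-B--type-B term is $O(\sqrt{c_B}\,|T_B|^2)$. For the link term $\sum_i deg(i)\,c$, the crude bound $|E|\le\binom{N}{2}$ already gives $O(c_B|T_B|^2)$ (one can sharpen $|E|$ by observing that a type-B player holding many links could profitably drop one, but that is not needed for the order of magnitude). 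Summing and using $c_B\ge1$ gives $\mathcal{S}(G)=O(c_B|T_B|^2)$ in the regime $|T_B|\gg|T_A|\gg1$, and dividing by $\mathcal{S}_{optimal}=\Theta(|T_B|^2)$ yields $PoA=O(c_B)$. Together with the lower bound this gives $PoA=\Theta(c_B)$.

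The whole load is on the lower bound: one has to identify a genuinely worst (or near-worst) stabilizable topology and compute its social cost carefully enough to extract the constant $c_B$, and in particular to verify stabilizability when the chains sit right at the length threshold --- that is the delicate step, and it rests on the explicit computation imported from \citep{Meirom2013}. Once Lemma~\ref{lem:The-longest-distance} is in hand, the upper bound is routine bookkeeping.
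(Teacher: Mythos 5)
Your overall architecture is the same as the paper's: an upper bound from Lemma \ref{lem:The-longest-distance} (all distances $O(\sqrt{c_B})$) plus the crude edge count $|E|=O(|T_B|^2)$, and a lower bound from the chain-forest equilibrium of Fig.~\ref{fig:A-poor-equilibrium}. The upper-bound half is fine and matches the paper's Lemma \ref{lem:The-worst-social utility-1}. The gap is in the lower bound, which you correctly identify as carrying the whole load. Your own accounting of the dominant term --- $\Theta(|T_B|^2)$ pairs of type-B players at distance ``of the order of the chain length'' --- gives a distance cost of order $k|T_B|^2$, and stability caps the chain length $k$ at $O(\sqrt{c_B})$ (again by Lemma \ref{lem:The-longest-distance}, which holds in \emph{every} equilibrium). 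So this witness has $\mathcal{S}=O(\sqrt{c_B}\,|T_B|^2)$: the distance term can never exceed order $\sqrt{c_B}\,|T_B|^2$ in a stable graph, and the link term of the chain forest is only $O(c_B|T_B|)$ because every type-B node there has degree at most two. The step from ``distances of order the chain length'' to ``social cost of order $c_B|T_B|^2$'' is therefore a non sequitur; as described, your construction certifies only $PoA=\Omega(\sqrt{c_B})$. The one term in your own upper bound that reaches $c_B|T_B|^2$ is the link-cost term, which requires $\Omega(|T_B|^2)$ edges --- something the chain forest does not have, and which you would need to exhibit in a \emph{stable} configuration (nontrivial, since a type-B node retains a link only if dropping it would raise its distance sum by more than $c_B>1$).

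Deferring to the cited computation does not close this. The paper's appendix (Lemma \ref{lem:poor eq}) records the inter-line distance contribution as $2|T_B|^2(k+2)^2$, quadratic in $k$, and that is the sole source of the $c_B$ factor in its lower bound; but a direct count of distances between nodes on two length-$k$ chains hanging off the same hub gives $\Theta(k^3)$ per ordered pair of chains, hence $\Theta(m^2k^3)=\Theta(|T_B|^2k)$ in total --- which agrees with your geometric description and is linear, not quadratic, in $k$. So either you supply an accounting (or a denser stable witness) that genuinely produces an $\Omega(c_B|T_B|^2)$ term, or the bound you can actually prove along these lines is $\Omega(\sqrt{c_B})$, leaving a $\sqrt{c_B}$ gap to the claimed $\Theta(c_B)$.
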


\section{Basic model - Dynamics}

The Internet is a rapidly evolving network. In fact, it may very well
be that it would never reach an equilibrium as ASs emerge, merge,
and draft new contracts among them. Therefore, a dynamic analysis
is a necessity. We first define the dynamic rules. Then, we analyze
the basin of attractions of different states, indicating which final
configurations are possible and what their likelihood is. We shall
establish that reasonable dynamics converge to \emph{just a few} equilibria.
Lastly, we investigate the speed of convergence, and show that convergence
time is \emph{linear} in the number of players.

\subsection{Setup \& Definitions}

At each point in time, the network is composed of a subset $N'\subset T_{A}\cup T_{B}$
of players that already joined the game. The cost function is calculated
with respect to the set of players that are present (including those
that are joining) at the considered time. The game takes place at
specific times, or \emph{turns}, where at each turn only a single
player is allowed to remove or initiate the formation of links. We
split each turn into \emph{acts}, at each of which a player either
forms or removes a single link. A player's turn is over when it has
no incentive to perform additional acts.
\begin{defn}
Dynamic Rule \#1: In player $i$'s turn it may choose to act $m\in\mathcal{N}$
times. In each act, it may remove a link $(i,j)\in E$ or, if player
$j$ agrees, it may establish the link $(i,j)$. Player $j$ would
agree to establish $(i,j)$ iff $C(j;E+(i,j))-C(j;E)<0$.
\end{defn}
The last part of the definition states that, during player's $i$
turn, all the other players will act in a greedy, rather than strategic,
manner. For example, although it may be that player $j$ prefers that
a link $(i,j')$ would be established for some $j'\neq j$, if we
adopt Dynamic Rule \#1 it will accept the establishment of the less
favorable link $(i,j).$ In other words, in a player's turn, it has
the advantage of initiation and the other players react to its offers.
This is a reasonable setting when players cannot fully predict other
players' moves and offers, due to incomplete information \citep{5173479}
such as the unknown cost structure of other players. Another scenario
that complies with this setting is when the system evolves rapidly
and players cannot estimate the condition and actions of other players.

The next two rules consider the ratio of the time scale between performing
the strategic plan and evaluation of costs. For example, can a player
remove some links, disconnect itself from the graph, and then pose
a credible threat? Or must it stay connected? Does renegotiating take
place on the same time scale as the cost evaluation or on a much shorter
one? The following rules address the two limits. 

\begin{defn}
Dynamic Rule \#2a: Let the set of links at the current act $m$ be
denoted as $E_{m}$. A link $(i,j)$ will be added if $i$ asks to
form this link and $C(j;E_{m}+ij)<C(j;E_{m})$. In addition, any link
$(i,j)$ can be removed in act $m.$\end{defn}

The alternative is as follows.
\begin{defn}
Dynamic Rule \#2b: In addition to Dynamic Rule \#2a, player $i$ would
only remove a link $(i,j)$ if $C(i;E_{m}-ij)>C(i;E_{m})$ and would
establish a link if both $C(j;E_{m}+ij)<C(j;E_{m})$ and $C(i;E_{m}+ij)<C(i;E_{m})$. 
\end{defn}
The difference between the last two dynamic rules is that, according
to Dynamic Rule \#2a, a player may perform a strategic plan in which
the first few steps will increase its cost, as long as when the plan
is completed its cost will be reduced.  On the other hand, according
to Dynamic Rule \#2b, its cost must be reduced \emph{at each act},
hence such ``grand plan'' is not possible. Note that we do not need
to discuss explicitly disconnections of several links, as these can
be done unilaterally and hence iteratively.Finally, the following lemma will be useful in the next section.
\begin{lem}
\label{lem:decay time}Assume $N$ players act consecutively in a
(uniformly) random order\textup{\emph{ at integer times, which we'll
denote by $t$.}} the probability $P(t)$ that a specific player did
not act $k\mathcal{\in N}$ times by $t\gg N$ decays exponentially.\end{lem}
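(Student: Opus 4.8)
The plan is to reduce the statement to a standard binomial tail estimate. I model each turn as an independent uniform choice of the acting player among the $N$ players; writing $X_{t}$ for the number of the first $t$ turns in which the fixed player acted, $X_{t}$ is then a sum of $t$ i.i.d. $\mathrm{Bernoulli}(1/N)$ variables, i.e. $X_{t}\sim\mathrm{Binomial}(t,1/N)$, and $P(t)=\Pr[X_{t}\le k-1]$. The whole task is to show this probability decays exponentially in $t$ for $k$ fixed.

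First I would write $P(t)=\sum_{j=0}^{k-1}\binom{t}{j}(1/N)^{j}(1-1/N)^{t-j}$ and bound each summand crudely: for $t>k$ and $0\le j\le k-1$ we have $\binom{t}{j}\le t^{j}\le t^{k-1}$, $(1/N)^{j}\le 1$, and $(1-1/N)^{t-j}\le(1-1/N)^{t-k+1}$. Summing the $k$ terms gives $P(t)\le k\,(1-1/N)^{-(k-1)}\,t^{k-1}\,(1-1/N)^{t}$, valid for all $t$ large enough that $t>k$ (which is where the hypothesis $t\gg N$ enters). Then, since $1-1/N<1$ and $k$ is fixed, the polynomial factor is absorbed: for any $\rho\in(1-1/N,\,1)$ we have $t^{k-1}\bigl((1-1/N)/\rho\bigr)^{t}\to 0$, so there is a constant $C=C(N,k,\rho)$ with $t^{k-1}(1-1/N)^{t}\le C\rho^{t}$ for all $t\ge 1$, whence $P(t)\le C'\rho^{t}$. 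This is the claimed exponential decay. (As an alternative that avoids the polynomial prefactor altogether: once $t\ge 2kN$ one has $k\le t/(2N)=\tfrac12\,\mathbb{E}[X_{t}]$, and the multiplicative Chernoff bound for the lower tail of a binomial yields directly $P(t)\le\exp(-t/(8N))$.)

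There is essentially no real obstacle here; the only points requiring a little care are (i) fixing the probabilistic model so that $X_{t}$ is genuinely $\mathrm{Binomial}(t,1/N)$ — in particular, that the acting player is chosen independently across turns, rather than, say, as a re-randomized permutation within blocks of $N$ turns — and (ii) noting, when the number of tolerated misses $k$ exceeds $1$, that the prefactor $t^{k-1}$ does not spoil exponential decay, which is immediate since an exponential with base strictly below $1$ dominates any fixed polynomial.
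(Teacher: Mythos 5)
Your proof is correct and follows essentially the same route as the paper: model each turn as an independent uniform choice of the acting player, so the count of acts by a fixed player in $t$ turns has lower-tail probability equal to a polynomial in $t$ times an exponentially decaying factor, and the exponential dominates. The only difference is that you work with the exact $\mathrm{Binomial}(t,1/N)$ tail (plus an optional Chernoff bound), whereas the paper writes down the Poisson CDF $e^{-t/N}\sum_{i=0}^{k}\frac{1}{i!}(t/N)^{i}$ as if it were exact; your version is, if anything, the more careful one.
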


\subsection{\label{sub:dynamical Results}Results}

After mapping the possible dynamics, we are at a position to consider
the different equilibria's basins of attraction. Specifically, we
shall establish that, in most settings, the system converges to the
optimal network, and if not, then the network's social cost is asymptotically
equal to the optimal social cost. The main reason behind this result
is the observation that a disconnected player has an immense bargaining
power, and may force its optimal choice. As the highest connected
node is usually the optimal communication partner for other nodes,
new arrivals may force links between them and this node, forming a
star-like structure. There may be few star centers in the graphs,
but as one emerges from the other, the distance between them is small,
yielding an optimal (or almost optimal) cost.

We outline the main ideas of the proof. The first few type-B players,
in the absence of a type-A player, will form a star. The star center
can be considered as a new type of player, with an intermediate importance,
as presented in Fig. \ref{fig:credible threat corr.}. We monitor
the network state at any turn and show that the minor players are
organized in two stars, one centered about a minor player and one
centered about a major player (Fig. \ref{fig:credible threat corr.}(a)).
Some cross links may be present (Fig.\,\ref{fig:cross-tiers}). By
increasing its client base, the incentive of a major player to establish
a direct link with the star center is increased. This, in turn, increases
the attractiveness of the star's center in the eyes of minor players,
creating a positive feedback loop. Additional links connecting it
to all the major league players will be established, ending up with
the star's center transformation into a member of the type-A clique.
On the other hand, if the star center is not attractive enough, then
minor players may disconnect from it and establish direct links with
the type-A clique, thus reducing its importance and establishing a
negative feedback loop. The star will become empty, and the star's
center $x$ will be become a stub of a major player, like every other
type-B player. The latter is the optimal configuration, according
to proposition \ref{lem:optimal solution}. We analyze the optimal
choice of the active player, and establish that the optimal action
of a minor player depends on the number of players in each structure
and on the number of links between the major players and the minor
players' star center $x$. The latter figure depends, in turn, on
the number of players in the star. We map this to a two dimensional
dynamical system and inspect its stable points and basins of attraction
of the aforementioned configurations.

\begin{thm}
\label{cor:credible threat part 3}If the game obeys Dynamic Rules
\#1 and \#2a, then, in any playing order:

a) The system converges to a solution in which the total cost is at
most 

\begin{eqnarray*}
\mathcal{S} & = & |T_{A}|\left(|T_{A}|-1\right)\left(c+A\right)+2c_{B}|T_{B}|+\left(A+1\right)\left(3|T_{A}||T_{B}|-|T_{A}|+|T_{B}|\right)+2\left(|T_{B}|-1\right)^{2};
\end{eqnarray*}

furthermore, by taking the large network limit  $|T_{B}|\gg|T_{A}|\gg1$,
we have $\mathcal{S}/\mathcal{S}_{optimal}\rightarrow1$ .

b) Convergence to the optimal stable solution occurs if either:

1) \textup{$A\cdot k_{A}>k+1$, }\textup{\emph{where $k\geq0$ is
the number of type-B nodes that first join the network, followed later
by $k_{A}$ consecutive type-A nodes (``initial condition'').}}

2)$A\cdot|T_{A}|>|T_{B}|$ (``final condition'').

c) In all of the above, if every player plays at least once in O(N)
turns, convergence occurs after O(N) steps. Otherwise, if players
play in a uniformly random order, the probability the system has not
converged by turn $t$ decays exponentially with $t$.\end{thm}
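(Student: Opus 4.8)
The plan is to first pin down a structural invariant that the network satisfies after \emph{every} act, and then to reduce the evolution of the only ``free'' degrees of freedom to a two-dimensional discrete dynamical system whose fixed points and basins of attraction yield parts (a) and (b), with the rate estimate (c) following from a counting argument together with Lemma~\ref{lem:decay time}. Concretely, I would prove by induction on the number of acts that at all times the graph consists of: the type-$A$ clique (complete and stable by Lemma~\ref{lem:The-longest-distance}); a set of type-$B$ ``stubs'' attached directly to type-$A$ nodes; at most one distinguished type-$B$ node $x$ that is the center of a star of type-$B$ leaves; and at most a bounded number of ``cross'' links joining $x$ (or its leaves) to the type-$A$ clique. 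The base case is that the first $k$ type-$B$ players, arriving before any type-$A$ player, are forced into a star: a disconnected newcomer has maximal bargaining power and by the distance/degree trade-off of Lemma~\ref{lem:The-longest-distance} attaches to the current center. The inductive step requires checking that every admissible move under Dynamic Rules \#1 and \#2a keeps the graph in this class.

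Next I would make the best responses quantitative. Writing $n$ for the number of type-$B$ leaves of the $x$-star and $k_A$ for the number of links from $x$ to type-$A$ nodes (the number $m$ of plain type-$B$ stubs is then determined by the global counts), I would evaluate $\Delta C$ from Definition~\ref{cost-definition} for each move available to an active minor player (move its attachment between $x$ and a type-$A$ node), to an active type-$A$ player (open or close a link to $x$), and to $x$ itself. Each such increment is affine in $(n,k_A)$, so the sign patterns are controlled by explicit thresholds: a type-$A$ node wants a link to $x$ once $n$ exceeds a threshold, and a minor player prefers $x$ once $k_A$ exceeds a threshold. This is the quantitative form of the positive/negative feedback loop described informally above.

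With the thresholds in hand I would analyze the induced dynamics on the pair $(n,k_A)$. The threshold lines partition the lattice region $0\le n\le |T_B|-1$, $1\le k_A\le |T_A|$ into cells on which the drift is monotone, and I would show there are exactly two absorbing regimes: a ``collapse'' regime, in which the star empties out and the configuration becomes the optimal star of Proposition~\ref{lem:optimal solution} (cost ratio exactly $1$), and an ``absorption'' regime, in which $x$ acquires links to all of $T_A$ and becomes an extra clique member, leaving the remaining $|T_B|-1$ minor players split as stubs. A monotonicity (Lyapunov-type) argument on a suitable function of $(n,k_A)$ rules out cycling and shows every admissible trajectory reaches one of the two regimes; determining which regime is reached from the \emph{initial} counts $(k,k_A)$ gives condition b1, and evaluating the thresholds at the \emph{terminal} counts $(|T_B|,|T_A|)$ gives condition b2. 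For part (a), summing Definition~\ref{cost-definition} over all nodes in the absorption regime gives exactly the displayed $\mathcal{S}$, while the collapse regime gives the strictly smaller optimal cost; since $\mathcal{S}_{optimal}=2|T_B|^2+2(A+1)|T_A||T_B|+|T_A|^2(c+A)$ up to lower-order terms by Proposition~\ref{lem:optimal solution} and the displayed $\mathcal{S}$ has the same leading term $2|T_B|^2$, the ratio tends to $1$ as $|T_B|\gg|T_A|\gg1$. For part (c), each act moves $(n,k_A)$ by a bounded amount through a range of size $O(N)$, so if every player acts within $O(N)$ turns the system reaches an absorbing regime in $O(N)$ turns and then halts; the random-order version follows by applying Lemma~\ref{lem:decay time} to the event that every player has taken its required $O(1)$ turns.

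The hard part will be the inductive step of the structural invariant together with the no-cycling claim, precisely because Dynamic Rule \#2a permits a player to carry out a ``grand plan'' whose first acts raise its own cost. Ruling this out requires enumerating the multi-act deviations available to a single active player and showing that each either keeps the graph in the two-star class or is not in fact cost-reducing when completed; establishing that these admissible deviations cannot make $(n,k_A)$ oscillate, rather than being driven monotonically toward one of the two regimes, is the technical core of the proof.
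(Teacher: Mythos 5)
Your proposal follows essentially the same route as the paper's proof: the same structural invariant (type-A clique, a star around a minor center $x$, stubs on a major player, plus cross-links), the same reduction to a two-dimensional discrete system in the star size and the number of clique--center links with affine thresholds, the same two absorbing regimes giving parts (a) and (b), and the same use of Lemma~\ref{lem:decay time} for part (c). The point you flag as the technical core --- controlling multi-act ``grand plans'' under Dynamic Rule \#2a and ruling out oscillation --- is exactly where the paper's argument is also the most delicate (it handles it via the feedback-loop/region analysis of Fig.~\ref{fig:The-phase-state}), so your plan matches the published proof in both structure and substance.
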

\begin{proof}
Assume $c_{A}\geq2$. Denote the first type-A player that establish
a link with a type-B player as $k$. First, we show that the network
structure is composed of a type-A (possibly empty) clique, a set of
type-B players $S$ linked to player $x$, and an additional (possibly
empty) set of type-B players $L$ connected to the type-A player $k$.
See Fig. \ref{fig:credible threat corr.}(a) for an illustration.
In addition, there is a set $D$ type-A nodes that are connected to
node $x$, the star center. After we establish this, we show that
the system can be mapped to a two dimensional dynamical system. Then,
we evaluate the social cost at each equilibria, and calculate the
convergence rate. We assume $(k,x)\in E$ and discuss the case $(k,x)\notin E$
in the appendix. 

\begin{figure}
\centering{}\includegraphics[width=0.75\columnwidth]{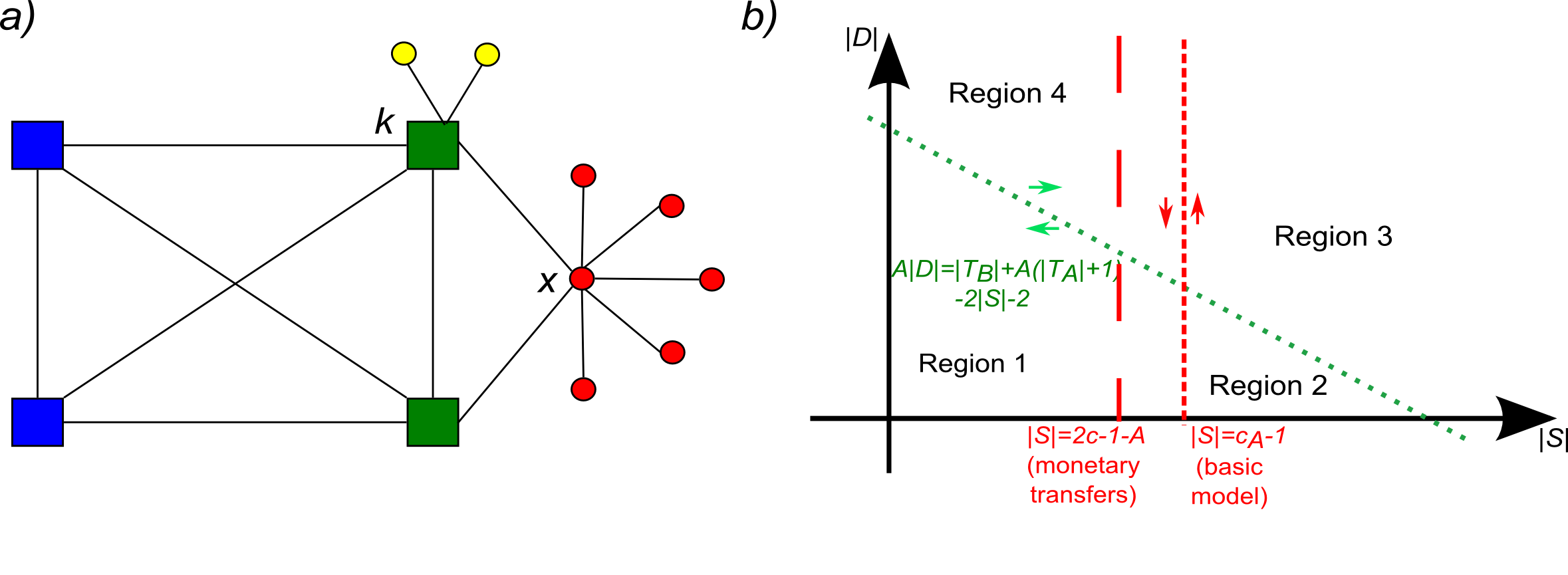}\caption{\label{fig:credible threat corr.}a) The network structures described
in Theorem \ref{cor:credible threat part 3}. The type-A clique contains
$|T_{A}|=4$ nodes (squares), and there are $|S|=5$ nodes in the
star (red circles). There are $|L|=2$ nodes that are connected directly
to node $k$ (yellow circles). The number of type-A nodes that are
connected to node 1, the star center, is $|D|=2$ (green squares).
b) The phase state of Theorem \ref{cor:credible threat part 3}. The
dotted green line is the $|S|$ increase / decrease nullcline. The
dotted (dashed) red line is the nullcline for the increase / decrease
in $|D|$ when monetary transfers are forbidden (allowed). (Proposition
\ref{prop:monetary-dyanmics-1}). \label{fig:The-phase-state}}
\end{figure}

We prove by induction. At turn $t\leq2$, after the first two players
joined the network, this is certainly true. Denote the active player
at time $t$ as $r.$ Consider the following cases:

1. $r\in T_{A}$: Since $1<c_{A}<A,$ all links to the other type-A
nodes will be established (lemma \ref{lem:optimal solution}) or maintained,
if $r$ is already connected to the network. Clearly, the optimal
link in $r$'s concern is the link with star center $x$. As $c_{B}<A$
every minor player will accept a link with a major player even if
it reduces its distance only by one. Therefore, the link $(r,x)$
is formed if the change of cost of the major player $r$,
\begin{equation}
\Delta C(r,E+rx)=c_{A}-|S|-1\label{eq: term1}
\end{equation}
is negative. In this case, the number of type A players connected
to the star's center, $|D|$, will increase by one. If this expression
is positive and player $r$ is connected to at least another major
player (as otherwise the graph is disconnected), the link will be
dissolved and $|D|$ will be reduced by one. It is not beneficial
for $r$ to form an additional link to any type-B player, as they
only reduce the distance from a single node by one (see the discussion
in lemma \ref{lem:optimal solution} in the appendix). 

2. $r\in T_{B},\, r\neq x$ : First, assume that $r$ is a newly arrived
player, and hence it is disconnected. Obviously, in its concern, a
link to the star's center, player $x$, is preferred over a link to
any other type-B player. Similarly, a link to a type-A player that
is linked with the star's center is preferred over a link with a player
that maintains no such link.

We claim that either $(r,k)$ or $(r,x)$ exists. Denote the number
of type-A player at turn $t$ as $m_{A}.$ The link $(r,x)$ is preferred
in $r$'s concern if the expression 

\begin{equation}
C(r,E+rk)-C(r,E+rx)=-A(1+m_{A}-|D|)+1+|S|-|L|\label{eq: term 2}
\end{equation}

is positive, and will be established as otherwise the network is disconnected.
If the latter expression is negative, $(r,k)$ will be formed. The
same reasoning as in case 1 shows that no additional links to a type-B
player will be formed. Otherwise, if $r$ is already connected to
the graph, than according to Dynamic Rule \#2a, $r$ may disconnect
itself, and apply its optimal policy, increasing or decreasing $|L|$
and $|S|$.

3. $r=x$, the star's center: $r$ may not remove any edge connected
to a type-B player and render the graph disconnected. On the other
hand, it has no interest in removing links to major players. On the
contrary, it will try to establish links with the major players, and
these will be formed if eq. \ref{eq: term1} is negative. An additional
link to a minor player connected to $k$ will only reduce the distance
to it by one and since $c_{B}>2$ player $x$ would not consider this
move worthy.

The dynamical parameters that govern the system dynamics are the number
of players in the different sets, $|S|$, $|L|$, and $|D|$. Consider
the state of the system after all the players have player once. Using
the relations $|S|+|L|+1=|T_{B}|,\: m_{A}=|T_{A}|$ we note the change
in $|S|$ depends on $|S|$ and $|L|$ while the change in $|D|$
depends only on $|S|.$ We can map this to a 2D dynamical, discrete
system with the aforementioned mapping. In Fig.\,\ref{fig:The-phase-state}
the state is mapped to a point in phase space $(|S|,|L|)$. The possible
states lie on a grid, and during the game the state move by an single
unit either along the $x$ or $y$ axis. There are only two stable
points, corresponding to $|S|=0,|D|=1$, which is the optimal solution
(Fig. \ref{fig:The-optimal-solution}(a)), and the state $|S|=|T_{B}|-1$
and $|D|=|T_{A}|$. 

If at a certain time expression \ref{eq: term1} is positive and expression
\ref{eq: term 2} is negative (region 3 in Fig.\,\ref{fig:The-phase-state}(b)),
the type-B players will prefer to connect to player $x$. This, in
turn, increases the benefit a major player gains by establishing a
link with player $x.$ The greater the set of type-A that have a direct
connection with $x$, having $|D|$ members, the more utility a direct
link with $x$ carries to a minor player. Hence, a positive feedback
loop is established. The end result is that all the players will form
a link with $x$. In particular, the type-A clique is extended to
include the type-B player $x$. Likewise, if the reverse condition
applies, a feedback loop will disconnect all links between node $x$
to the clique (except node $k$) and all type-B players will prefer
to establish a direct link with the clique. The end result in this
case is the optimal stable state. The region that is relevant to the
latter domain is region 1. 

However, there is an intermediate range of states, described by region
2 and region 4, in which the player order may dictate to which one
of the previous states the system will converge. For example, starting
from a point in region 4, if the type-A players move first, changing
the $|D|$ value, than the dynamics will lead to region 1, which converge
to the optimal solution. However, if the type-B players move first,
then the system will converge to the other equilibrium point.

We now turn to calculate the social cost at the different equilibria.
If $|D|=|T_{A}|$ and $|S|=|T'_{B}|-1$, The network topology is composed
of a $|T_{A}|$ members clique, all connected to the center $x$,
that, in turn, has $|T_{B}|-1$ stubs. The total cost in this configuration
is 
\begin{eqnarray}
S & = & |T_{A}|\left(|T_{A}|-1\right)\left(c_{A}+A\right)+2c_{B}|T_{B}|+\left(A+1\right)|T_{A}|+2\left(|T_{B}|-1\right)\nonumber \\
 &  & +2\left(|T_{B}|-1\right)\left(A+1\right)+2\left(|T_{B}|-2\right)\left(|T_{B}|-1\right)+\left(c_{B}+c_{A}\right)|T_{A}|/2\label{eq:cost at star}
\end{eqnarray}

where the costs are, from the left to right: the cost of the type-A
clique, the cost of the type-B star's links, the distance cost $(=1)$
between the clique and node $x$, the distance $(=1)$ cost between
the star's members and node $x$, the distance $(=2)$ cost between
the clique and the star's member, the distance $(=2)$ cost between
the star's members, and the cost due to major player link's to the
start center $x$. Adding all up, we have for the total cost
\begin{eqnarray}
\mathcal{S} & \leq & |T_{A}|\left(|T_{A}|-1\right)\left(c+A\right)+2c_{B}|T_{B}|+\left(A+1\right)\left(3|T_{A}||T_{B}|+|T_{B}|\right)+2\left(|T_{B}|-1\right)^{2}.\label{eq:social cost}
\end{eqnarray}

Convergence is fast, and as soon as all players have acted three times
the system will reach equilibrium. If every player plays at least
once in $o(N$) turns convergence occurs after $o(N)$ turns, otherwise
the probability the system did not reach equilibrium by time $t$
decays exponentially with $t$ according to lemma \ref{lem:decay time}
(in the appendix).

We now relax our previous assumption $c_{A}\geq2$. If $c_{A}\leq2$
and the active player $r\in T_{A}$ then it will form a link with
the star's center according to eq. \ref{eq: term1}. If $r\in S$
it may establish a link $(r,j)$ with a type A player, which will
later be replaced, in $j$'s turn, with the link $(j,x)$ according
to the previous discussion. In the appendix we discuss explicitly
the case where $(k,x)\notin E$ and show that in this case, additional
links may be formed, e.g., a link between one of $k'$s stubs, $i\in L$,
and the star's center $x$, as presented in Fig.\,\ref{fig:cross-tiers}.
These links only reduce the social cost, and do not change the dynamics,
and the system will converge to either one of the aforementioned states.
Taking the limit $T_{B}\rightarrow\infty$ and $T_{B}\in\omega\left(T_{A}\right)$
in eq. \ref{eq:social cost}, we get  $\mathcal{S}/\mathcal{S}_{optimal}\rightarrow1$.
This concludes the proof.

\begin{figure}
\begin{centering}
\includegraphics[width=0.6\columnwidth]{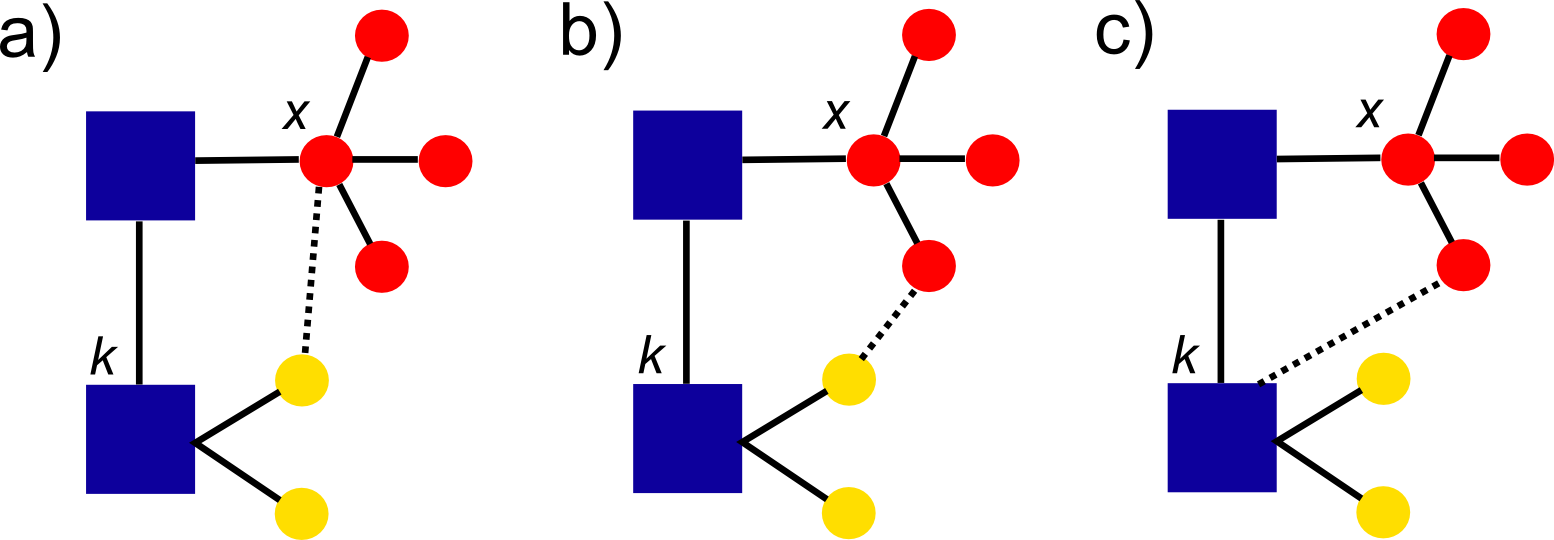}\caption{\label{fig:cross-tiers}Additional feasible cross-tiers links, as
described in \citet{Meirom2013}. The star players $S$ are in red,
the set $L$ is in yellow. a) a link between the star center and $i\in L$.
b) a cross-tier link $(i,j)$ where $i\in S,j\in L$. c) a minor player
- major player link, $(i,j)$ where $i\in T_{A}$ and $j\in S.$ }

\par\end{centering}

\end{figure}

\end{proof}

If the star's center has a principal role in the network, then links
connecting it to all the major league players will be established,
ending up with the star's center transformation into a member of the
type-A clique. This dynamic process shows how an effectively new major
player emerges out of former type-B members in a natural way. Interestingly,
Theorem \ref{cor:credible threat part 3} also shows that there exists
a transient state with a better social cost than the final state.
In fact, in a certain scenario, the transient state is better than
the optimal stable state.

So far we have discussed the possibility that a player may perform
a strategic plan, implemented by Dynamic Rule \#2a. However, if we
follow Dynamic Rule \#2b instead, then a player may not disconnect
itself from the graph. The previous results indicate that it is not
worthy to add additional links to the forest of type-B nodes. Therefore,
no links will be added except for the initial ones, or, in other words,
renegotiation will always fail. The dynamics will halt as soon as
each player has acted once. Formally:

\begin{prop}
\label{prop:credible theat part 4}If the game obeys Dynamic Rules
\#1 and \#2b, then the system will converge to a solution in which
the total cost is at most 
\begin{eqnarray*}
\mathcal{S} & = & |T_{A}|\left(|T_{A}|-1\right)\left(c_{A}+A\right)+3|T_{B}|^{2}+2c_{B}|T_{B}|+2|T_{A}||T_{B}|\left(A+1\right)\,.
\end{eqnarray*}
Furthermore, for $|T_{B}|\gg|T_{A}|\gg1$, we have $\mbox{\ensuremath{\mathcal{S}/\mathcal{S}_{optimal}\leq3/2.}}$
Moreover, if every player plays at least once in O(N) turns, convergence
occurs after O(N) steps. Otherwise, if players play in a uniformly
random order, the probability the system has not converged by turn
$t$ decays exponentially with $t$.\end{prop}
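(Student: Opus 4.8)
The plan is to re-run the structural induction from the proof of Theorem~\ref{cor:credible threat part 3}, exploiting the single new feature of Dynamic Rule~\#2b: a player may never perform an act that raises its own cost, so in particular a type-B node can never disconnect itself and is therefore \emph{frozen} the moment it joins. Concretely, I would show by induction on the turn number --- splitting into the cases $c_A\ge 2$ and $c_A\le 2$ as in Theorem~\ref{cor:credible threat part 3} --- that at every turn the graph consists of the type-A clique, a star $S$ of type-B nodes around a type-B node $x$, a set $L$ of type-B stubs of the first type-A node $k$ that ever links a type-B node, and a set $D\subseteq T_A$ (containing $k$ once $(k,x)\in E$) of type-A nodes joined directly to $x$; the degenerate sub-case $(k,x)\notin E$ and the cross-tier links of Fig.~\ref{fig:cross-tiers} are treated exactly as in Theorem~\ref{cor:credible threat part 3} and only lower the social cost. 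The only change relative to the \#2a analysis is that, because no type-B node restructures, $|S|$ and $|L|$ are non-decreasing and change only when a new type-B node arrives and attaches to $x$ or to $k$ according to the sign of \eqref{eq: term 2}; the feedback loops of Theorem~\ref{cor:credible threat part 3} can no longer \emph{move} already-placed stubs, so the process may halt at an interior state rather than at one of the two sinks.

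Next I would verify that renegotiation always fails. A connected type-B node cannot drop its unique link, for that would disconnect it, which Rule~\#2b forbids; and it never profits from a second link, since for any candidate new edge between two already-connected nodes at least one endpoint's sole benefit is a single unit-weight neighbour moved from distance two to distance one, a gain of $1$, strictly below that endpoint's link cost because $c_A>1$ and $c_A\le c_B$. The only exceptions are the clique edges (already present) and an edge $(j,x)$ with $j\in T_A\setminus D$, which is formed exactly when \eqref{eq: term1} is negative, i.e.\ when $|S|+1>c_A$; such edges only enlarge $D$, and since $|S|$ is frozen once all type-B nodes have arrived this can recur at most $|T_A|$ times. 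Hence every player acts $O(1)$ times, so the process stops after $O(N)$ turns whenever each player plays at least once in $O(N)$ turns; for a uniformly random playing order, Lemma~\ref{lem:decay time} together with a union bound over the $N$ players gives the claimed exponential tail on the non-convergence probability.

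It remains to bound the terminal cost. In any such configuration two type-B nodes are at distance at most $3$ (a leaf of $x$ reaches a leaf of $k$ through $x$ and $k$, using $(k,x)\in E$; the appendix handles $(k,x)\notin E$), and every type-B--type-A distance is at most $2$ to leading order: either $|S|<c_A$, a constant since $c_A<A$, so all but $O(|T_B|)$ type-B nodes are leaves of the single type-A node $k$ and the $O(1)$ star leaves contribute only lower-order terms, or $|S|\ge c_A$, which by the previous paragraph forces $D=T_A$, so $x$ and hence every star leaf lies within distance $2$ of all of $T_A$. Using $deg(i)=1$ for each type-B leaf, absorbing the remaining (centre- and $k$-incident) link costs into the displayed $2c_B|T_B|$ and $2(A+1)|T_A||T_B|$ terms, and applying these distance bounds in Definition~\ref{cost-definition}, gives
\[
\mathcal{S}=|T_A|(|T_A|-1)(c_A+A)+3|T_B|^2+2c_B|T_B|+2|T_A||T_B|(A+1).
\]
By Proposition~\ref{lem:optimal solution}, $\mathcal{S}_{optimal}$ equals $2|T_B|^2$ plus terms of order $|T_A||T_B|$ and $|T_A|^2$; in the regime $|T_B|\gg|T_A|\gg 1$ both $\mathcal{S}$ and $\mathcal{S}_{optimal}$ are dominated by their $|T_B|^2$ terms, namely $3|T_B|^2$ and $2|T_B|^2$, whence $\mathcal{S}/\mathcal{S}_{optimal}\le 3/2$.

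I expect the first step to be the main obstacle: one must make sure Rule~\#2b cannot generate terminal configurations outside the list of Theorem~\ref{cor:credible threat part 3} --- in particular that a fresh type-B node always attaches to the \emph{unique} existing star centre, so that no second type-B star ever nucleates, and that the degenerate sub-cases $(k,x)\notin E$ and $c_A\le 2$, together with the admissible cross-tier links of Fig.~\ref{fig:cross-tiers}, behave as under Rule~\#2a and only decrease $\mathcal{S}$. Checking that a connected type-B node never gains from an extra link relies on the small-diameter structure above and on $c_A>1$, and must be verified for each position a type-B node can occupy (a leaf of $x$, a leaf of $k$, or the centre $x$ itself).
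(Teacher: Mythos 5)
Your proposal is correct and follows essentially the same route as the paper: reuse the structural induction of Theorem~\ref{cor:credible threat part 3}, observe that Rule~\#2b freezes each type-B node at the position it takes on arrival so renegotiation fails, bound type-B--type-B distances by $3$ and type-B--type-A distances by $2$ to get the $3|T_B|^2$ term, and compare with the $2|T_B|^2$ leading term of $\mathcal{S}_{optimal}$. The only blemishes are cosmetic: your ``gain of $1$'' claim does not literally cover an $S$--$L$ cross edge (where each endpoint gains $2$, which is why the paper restricts to $c_A\ge 2$ in the main case), and ``all but $O(|T_B|)$'' should read ``all but $O(1)$''.
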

\begin{proof}
We discuss the case $c_{A}\geq2.$ The extension for $c_{A}<2$ appears
in the appendix. The first part of the proof follows the same lines
of the previous theorem (Theorem \ref{cor:credible threat part 3}).
We claim that at any given turn, the network structure is composed
of the same structures as before (See Fig. \ref{fig:credible threat corr.}(a))
. Here, we discuss the scenario where $(k,x)\in E,$ and we address
the other possibility, which may give rise to the structures shown
in Fig. \ref{fig:cross-tiers} in the appendix. 

We prove by induction. Clearly, at turn one the induction assumption
is true. Note that for newly arrived players, are not affected by
either Dynamic Rules \#2a or \#2b. Hence, we only need to discuss
the change in policies of existing players. The only difference from
the dynamics described in the Theorem \ref{cor:credible threat part 3}
is that the a type-B players may not disconnect itself. In this case,
as the discussion there indicates the star center $x$ will refuse
a link with $i\in L$ as it only reduce $d(i,x)$ by two. Equivalently,
$k$ will refuse to establish additional links with $i\in|S|.$

In other words, as soon the first batch of type A player arrives,
all type-B players will become stagnant, either they become leaves
of either node $k$, $|L|$, or members of the star $|S|$, according
to the the sign of \ref{eq: term 2} at the time they. The maximal
distance between a type-A player and a type B player is $2$. The
maximal value of the type B - type B term is the social cost function
is when $|L|=|S|=|T_{B}|/2$. In this case, this term contributes
$3|T_{B}|^{2}$ to the social cost. Therefore, the social cost is
bounded by 
\begin{equation}
\mathcal{S}=|T_{A}|\left(|T_{A}|-1\right)\left(c_{A}+A\right)+3|T_{B}|^{2}+2c_{B}|T_{B}|+2|T_{A}||T_{B}|\left(A+1\right)\label{eq:bound for 2b-1}
\end{equation}

where we included the type-A clique's contribution to the social cost
and used $c_{B}\geq c_{A}.$ Taking the limit $N\rightarrow\infty$
in eq. \ref{eq:bound for 2b-1} and using $T_{A}\in\omega(1)$, $T_{B}\in\omega(T_{A})$,
we obtain $\mbox{\ensuremath{\mathcal{S}/\mathcal{S}_{optimal}\leq3/2}}$. \end{proof}

Theorem \ref{cor:credible threat part 3} and Proposition \ref{prop:credible theat part 4}
shows that the intermediate network structures of the type-B players
are not necessarily trees, and additional links among the tier two
players may exist, as found in reality. Furthermore, our model predicts
that some cross-tier links, although less likely, may be formed as
well. If Dynamic Rule \#2a is in effect, These structures are only
transient, otherwise they might remain permanent. 

The dynamical model can be easily generalized to accommodate various
constraints. Geographical constraints may limit the service providers
of the minor player. The resulting type-B structures represent different
geographical regions. Likewise, in remote locations state legislation
may regulate the Internet infrastructure. If at some point regulation
is relaxed, it can be modeled by new players that suddenly join the
game.

\section{Monetary transfers }

\label{sec:Monetary-transfers}

So far we assumed that a player cannot compensate other players for
an increase in their costs. However, contracts between different ASs
often do involve monetary transfers. Accordingly, we turn to consider
the effects of introducing such an option on the findings presented
in the previous sections. As before, we first consider the static
perspective and then turn to the dynamic perspective.

\subsection{Statics}

In the previous sections we showed that, if $A>c_{A}>1,$ then it
is beneficial for each type-A player to be connected to all other
type-A players. We focus on this case. 

Monetary transfers allow for a redistribution of costs. It is well
known in the game theoretic literature that, in general, this process
increases the social welfare.Indeed, the next proposition indicates
an improvement on Proposition \ref{lem:optimal solution}. Specifically,
it shows that the optimal network is always stabilizable, even when
$\frac{A+1}{2}>c$. Without monetary transfers, the additional links
in the optimal state (Fig. \ref{fig:The-optimal-state - monetary transfers}),
connecting a major league player with a minor league player, are unstable
as the type-A players lack any incentive to form them. By allowing
monetary transfers, the minor players can compensate the major players
for the increase in their costs. It is worthwhile to do so only if
the social optimum of the two-player game implies it. The existence
or removal of an additional link does not inflict on any other player,
as the distance between every two players is at most two.

\begin{prop}
\label{prop:optimality under monetary}\textup{\emph{The price of
stability is $1$.}} If $\frac{A+1}{2}\leq c\,,$ then Proposition
\ref{lem:optimal solution} holds. Furthermore, if $\frac{A+1}{2}>c$,
then the optimal stable state is such that all the type $B$ nodes
are connected to all nodes of the type-A clique. In the latter case,
the social cost of this stabilizable network is $\mathcal{S}=2|T_{B}|\left(|T_{B}|+\left(\frac{A+1}{2}+c\right)|T_{A}|\right)+|T_{A}|^{2}\left(c+A\right).$
Furthermore, if $|T_{B}|\gg1,|T_{A}|\gg1$ then, omitting linear terms
in $|T_{B}|,|T_{A}|$, $\mathcal{S}=2|T_{B}|(|T_{B}|+\left(A+c\right)|T_{A}|)+|T_{A}|^{2}\left(c+A\right).$
\end{prop}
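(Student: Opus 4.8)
The plan is to exploit the fact that monetary transfers leave the social cost unchanged (the $P_{ij}-P_{ji}$ terms cancel in $\mathcal{S}$, as noted after Definition~\ref{monetary cost definition}). Hence the social optimum is still exactly the graph identified in Proposition~\ref{lem:optimal solution}: the ``single hub'' network of Fig.~\ref{fig:The-optimal-solution}(a) when $\frac{A+1}{2}\le c$, and the network in which every type-$B$ node is adjacent to every type-$A$ node (Fig.~\ref{fig:The-optimal-solution}(b)) when $\frac{A+1}{2}>c$. Since the optimal cost is the same number as in the basic model, it suffices to show that this optimal graph is stabilizable under the transfer model; optimality then forces the price of stability to be $1$. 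To invoke Corollary~\ref{lem:edges with monetary transfers.} link-by-link I would first record the key structural fact that every pairwise distance in either candidate graph is $1$ or $2$, and that every single-link change other than deleting a type-$B$ stub in Fig.~\ref{fig:The-optimal-solution}(a) alters only the two endpoints' costs: deleting a non-stub edge merely moves one distance from $1$ to $2$ (no type-$B$ node is disconnected when $|T_A|\ge2$, and the degenerate case $|T_A|=1$ collapses the two configurations into one), and adding an edge can only shrink a distance that is already $\le 2$ between its endpoints. The lone disconnecting deletion is ruled out immediately, since then $\Delta C(i,E-ij)=-\infty$ and the two-player sum is negative, so that link is retained. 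Thus no cascading externalities arise and Corollary~\ref{lem:edges with monetary transfers.} applies once per link type.

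For the regime $\frac{A+1}{2}\le c$ I would re-examine the stabilizability argument of Proposition~\ref{lem:optimal solution} against the two-player criterion. Deleting a clique edge contributes $\Delta C(j,E-jj')+\Delta C(j',E-jj')=2(c_A-A)<0$ (since $c_A<A$), so it is kept; a non-existent type-$B$--type-$B$ shortcut would give a two-player surplus $2(c_B-1)>0$ ($c_B\ge c_A>1$), so it is not formed; and a link from a type-$B$ node to a non-hub type-$A$ node gives $(c_B-A)+(c_A-1)=2c-(A+1)\ge 0$, so it is not formed either. Together with the retained stubs, this re-establishes the whole content of Proposition~\ref{lem:optimal solution} in the transfer model and gives price of stability $1$.

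For $\frac{A+1}{2}>c$ the target graph is Fig.~\ref{fig:The-optimal-solution}(b). Stability reduces to three checks: a type-$A$ clique edge is kept because its deletion again contributes $2(c_A-A)<0$; a missing type-$B$--type-$B$ edge is not formed because its two-player surplus is $2(c_B-1)>0$; and, crucially, a type-$B$--type-$A$ edge is kept because its deletion contributes $(c_B-A)+(c_A-1)=2c-(A+1)<0$ in this regime --- this is precisely where the hypothesis $\frac{A+1}{2}>c$ is used, and it is the mirror image of the fact (from Proposition~\ref{lem:optimal solution}) that the same inequality makes those inter-tier links socially beneficial. No other links are possible, so the graph is stabilizable and, being the social optimum, yields price of stability $1$. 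For the cost, I would run the same bookkeeping as in Proposition~\ref{lem:optimal solution}: the type-$A$ clique contributes $|T_A|(|T_A|-1)(c_A+A)$, while each type-$B$ node carries $|T_A|$ unit-cost links, sits at distance $1$ from all $|T_A|$ clique members and at distance $2$ from the other $|T_B|-1$ type-$B$ nodes, producing the $2|T_B|\big(|T_B|+(\frac{A+1}{2}+c)|T_A|\big)$ term; summing and discarding lower-order terms gives the asymptotic expression.

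I expect the only genuine subtlety to be the reduction in the first paragraph --- arguing cleanly that Corollary~\ref{lem:edges with monetary transfers.} may be applied one link at a time with no third-party repercussions --- which rests entirely on the ``all distances $\le 2$, and still $\le 2$ after any single move'' property together with a little care about the $|T_A|=1$ corner case and about the one disconnecting deletion. Once that reduction is secured, each individual link check is a one-line sign comparison and the social-cost accounting is routine.
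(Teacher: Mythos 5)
Your proposal is correct and follows essentially the same route as the paper: reduce to the optimal configurations of Proposition \ref{lem:optimal solution} (unchanged since transfers cancel in the social cost), then verify stabilizability link-by-link via the two-player criterion of Corollary \ref{lem:edges with monetary transfers.}, with the inequality $2c-(A+1)<0$ doing the work in the $\frac{A+1}{2}>c$ regime exactly as in the paper's appendix proof. Your explicit justification that single-link changes have no third-party repercussions (all distances stay $\le 2$) is a point the paper only asserts in passing, so it is a welcome, if minor, tightening rather than a departure.
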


In the network described by Fig. \ref{fig:The-optimal-state - monetary transfers},
the minor players are connected to multiple type-A players. This emergent
behavior, where ASs have multiple uplink-downlink but very few (if
at all) cross-tier links, is found in many intermediate tiers.

Next, we show that, under mild conditions on the number of type-A
nodes, the price of anarchy is $3/2$, i.e., \emph{a fixed number}
that does not depend on any parameter value. As the number of major
players increases, the motivation to establish a direct connection
to a clique member increases, since such a link reduces the distance
to all clique members. As the incentive increases, players are willing
to pay more for this link, thus increasing, in turn, the utility of
the link in a major player's perspective. With enough major players,
all the minor players will establish direct links. Therefore, any
stable equilibrium will result in a very compact network with a diameter
of at most three. This is the main idea behind the following theorem.

\begin{thm}
\label{prop:maximal distance from clique with money}The maximal distance
of a type-B node from a node in the type-A clique is $\max\left\{ \left\lfloor \sqrt{\left(A|T_{A}|\right)^{2}+4cA|T_{A}|}-A|T_{A}|\right\rfloor ,2\right\} $.
Moreover, if \textup{$|T_{B}|\gg1,|T_{A}|\gg1$ }\textup{\emph{then
the price of anarchy is upper-bounded by 3/2.}}\end{thm}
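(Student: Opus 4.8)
The plan is to establish the distance bound first, by a deviation argument in the spirit of Lemma~\ref{lem:The-longest-distance}, and then to obtain the price of anarchy by plugging that bound into a head‑to‑head comparison of social costs with the optimum of Proposition~\ref{prop:optimality under monetary}. Throughout I use that, with monetary transfers available, a link is formed as soon as the \emph{combined} cost change of its two endpoints is negative, and conversely a stable graph admits no profitable link addition in this combined sense (Corollary~\ref{lem:edges with monetary transfers.}).

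For the distance bound, fix a pairwise‑stable graph and a type‑$B$ node $v$ whose distance to the nearest clique node $j$ is $D\ge 3$; let $v=u_{0},u_{1},\dots,u_{D}=j$ be a shortest path, so that $u_{1},\dots,u_{D-1}\in T_{B}$ (if some $u_{i}$ with $i<D$ were in $T_{A}$, then $j$ would not be the nearest clique node). I would consider the candidate link $(v,j)$ and lower‑bound its combined benefit $-\bigl(\Delta C(v,E+vj)+\Delta C(j,E+vj)\bigr)$. On $v$'s side the degree cost rises by $c_{B}$, but $d(v,j)$ collapses from $D$ to $1$ while $d(v,j')$ collapses from at least $D$ to at most $2$ for every other clique node $j'$ (using a single clique edge), so the weighted type‑$A$ term drops by at least $A\bigl(|T_{A}|(D-2)+1\bigr)$; in addition, the familiar ``bisection'' of the $v$--$j$ line shortens $v$'s distance to roughly half of $u_{1},\dots,u_{D-1}$, worth a further $\Theta\bigl((D-2)^{2}\bigr)$. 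On $j$'s side the degree cost rises by $c_{A}$, $d(j,v)$ drops from $D$ to $1$, and the symmetric bisection is worth another $\Theta\bigl((D-2)^{2}\bigr)$. Summing the two contributions and using $c_{A}+c_{B}=2c$, pairwise stability forces the resulting lower bound to be non‑positive, which after collecting the degree and distance terms is a quadratic inequality in $D$; its largest admissible value is $\sqrt{(A|T_{A}|)^{2}+4cA|T_{A}|}-A|T_{A}|$. Since $D$ is an integer we may pass to the floor, and since the argument required $D\ge 3$ — a type‑$B$ stub of the clique sits at distance $2$, exactly as in the optimal star of Proposition~\ref{lem:optimal solution} — the bound is $\max\bigl\{\lfloor\,\cdot\,\rfloor,2\bigr\}$; tightness follows from a line attached at one clique node, in the spirit of Example~\ref{example}, for which this shortcut is precisely marginal.

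For the price of anarchy, take $|T_{B}|\gg|T_{A}|\gg 1$. The distance bound already puts every type‑$B$ node within $O(1)$ hops of the clique, and I would sharpen this to distance $1$: a type‑$B$ node two hops out is a leaf of some type‑$B$ hub $w$, and linking it directly to a clique node not yet reached by $w$ shortens its distance simultaneously to the $\Theta(|T_{A}|)$ clique nodes and — once $w$ reaches most of the clique, which stability itself forces when $|T_{A}|$ is large — to the $\Theta(|T_{B}|)$ type‑$B$ nodes lying behind $w$, so the combined benefit is positive for $|T_{A}|,|T_{B}|$ large, contradicting stability. Hence in the limit every type‑$B$ node is a stub of the (possibly slightly enlarged) clique, so any two type‑$B$ nodes are at distance at most $3$ and any type‑$B$ node at distance at most $2$ from any clique node. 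Writing $\mathcal{S}=\sum_{i}deg(i)\,c_{(i)}+A\sum_{i}\sum_{j\in T_{A}}d(i,j)+\sum_{i}\sum_{j\in T_{B}}d(i,j)$, the type‑$B$‑to‑type‑$B$ distance sum is at most $3|T_{B}|^{2}$, whereas the degree term, the weighted type‑$A$ term and the type‑$A$‑to‑type‑$B$ term are all $O\bigl(|T_{A}||T_{B}|+|T_{A}|^{2}\bigr)=o(|T_{B}|^{2})$; hence $\mathcal{S}\le 3|T_{B}|^{2}(1+o(1))$ for \emph{every} stable graph. By Proposition~\ref{prop:optimality under monetary} (with Proposition~\ref{lem:optimal solution}) the optimal social cost in the same limit is $2|T_{B}|^{2}(1+o(1))$, so the ratio is at most $3/2+o(1)$.

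The step I expect to be the main obstacle is the bookkeeping in the deviation argument: the type‑$B$ material hanging off the clique need not be a single line but may branch and attach at several clique nodes, so the ``distance‑to‑the‑whole‑clique'' and ``path‑bisection'' savings have to be read off one shortest $v$--$j$ path together with the triangle inequality through clique edges, and one must check that the floor and the $A|T_{A}|$‑scaling of the saving survive this generality. A secondary difficulty is the sharpening in the price‑of‑anarchy part: a crude ``distance $\le 2$ from the clique'' estimate only yields type‑$B$ distances $\le 5$ and hence $\mathrm{PoA}\le 5/2$, so pinning the constant at exactly $3/2$ genuinely requires the observation that a mis‑placed type‑$B$ node can shortcut to $\Theta(|T_{B}|)$ of its peers, which is what collapses the whole type‑$B$ set onto the immediate neighbourhood of the clique.
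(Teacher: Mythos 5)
Your first half — the distance bound — is essentially the paper's own argument: take a type-B node at distance $k>2$ from a clique node, add the direct link, and observe that the combined cost change is bounded by $2c$ minus the usual $\Theta(k^{2})$ bisection savings minus a term of order $(k-2)A|T_{A}|$ coming from the simultaneous collapse of the distances to \emph{all} clique members to at most $2$; pairwise stability under Corollary~\ref{lem:edges with monetary transfers.} then yields the quadratic threshold. The paper anchors the path at a clique node adjacent to the target rather than at the nearest clique node, but that is cosmetic; this part of your proposal is sound and matches the intended proof.

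The genuine gap is in your sharpening step for the price of anarchy. The claim that in the limit ``every type-B node is a stub of the clique'' is false: take a single type-B hub $w$ adjacent to every clique node, with all remaining type-B nodes as leaves of $w$. For a leaf $v$ and any clique node $b$, the combined change from adding $(v,b)$ is $2c-A-1\ge 0$ when $\tfrac{A+1}{2}\le c$ (the link shortens $v$'s distance only to $b$ itself, and to \emph{none} of $v$'s co-leaves, since those remain at distance $2$ through $w$), so no such link forms and the configuration is stable with every leaf at distance $2$ from the clique. Your stated mechanism — that linking $v$ to the clique shortens its distance ``to the $\Theta(|T_{B}|)$ type-B nodes lying behind $w$'' — is exactly what fails here. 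Fortunately the conclusion you need is weaker and the shortcut idea can be redirected: what must be ruled out is a type-B node $u$ with $\Theta(|T_{B}|)$ peers at distance $\ge 4$, for then adding $(u,a)$ to a suitable clique node $a$ saves at least $1$ per such peer plus $A-1$, so the combined change is at most $2c-(A-1)-\Theta(|T_{B}|)<0$ and the link forms. Hence each type-B node has only $O(c)$ peers at distance $\ge 4$, the type-B distance sum is $3|T_{B}|^{2}\bigl(1+o(1)\bigr)$, and the ratio against the optimum $2|T_{B}|^{2}\bigl(1+o(1)\bigr)$ of Proposition~\ref{prop:optimality under monetary} gives $3/2$. (For what it is worth, the paper's own proof simply asserts the type-B pairwise distance bound of $3$ without this argument, so you are right that something extra is needed — you just supplied the wrong extra step.)
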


This theorem shows that by allowing monetary transfers, the maximal
distance of a type-B player to the type-A clique depends inversely
on the number of nodes in the clique and the number of players in
general. The number of ASs increases in time, and we may assume the
number of type-A players follows. Therefore, we expect a decrease
of the mean ``node-core distance'' in time. Our data analysis, which
appears in the appendix, indicates that this real-world distance indeed
decreases in time.

\subsection{Dynamics\label{sub: monetary Dynamics}}

We now consider the dynamic process of network formation under the
presence of monetary transfers. For every node $i$ there may be several
nodes, indexed by \emph{j, }such that $\Delta C(j,ij)+\Delta C(i,ij)<0,$
and player \emph{i }needs to decide on the order of players with which
it will ask to form links. We point out that the order of establishing
links is potentially important. The order by which player player $i$
will establish links depends on the pricing mechanism. There are several
alternatives and, correspondingly, several possible ways to specify
player \emph{i's }preferences\emph{, }each leading to a different
dynamic rule. 

Perhaps the most naive assumption is that if for player $j,$ $\Delta C(j,ij)>0$,
then the price it will ask player $i$ to pay is $P_{ij}=\max\{\Delta C(j,ij),0\}.$
In other words, if it is beneficial for player $j$ to establish a
link, it will not ask for a payment in order to do so. Otherwise,
it will demand the minimal price that compensates for the increase
in its costs. This dynamic rule represents an efficient market. This
suggests the following preference order rule.

\begin{defn}
Preference Order \#1: Player $i$ will establish a link with a player
$j$ such that $\Delta C(i,ij)+min\{\Delta C(j,ij),0\}$ is minimal.
The price player $i$ will pay is $P_{ij}=max\{\Delta C(j,ij),0\}$.\end{defn}

As established by the next theorem, Preference Order \#1 leads to
the optimal equilibrium fast. In essence, if the clique is large enough,
then it is worthy for type-B players to establish a direct link to
the clique, compensating a type-A player, and follow this move by
disconnecting from the star. Therefore, monetary transfers increase
the fluidity of the system, enabling players to escape from an unfortunate
position. Hence, we obtain an improved version of Theorem \ref{cor:credible threat part 3}.
\begin{thm}
Assume the players follow Preference Order \#1 and Dynamic Rule \#1,
and either Dynamic Rule \#2a or \#2b. If $\frac{A+1}{2}>c$, then
the system converges to the optimal solution. If every player plays
at least once in O(N) turns, convergence occurs after o(N) steps.
Otherwise, e.g., if players play in a random order, convergence occurs
exponentially fast. 
\end{thm}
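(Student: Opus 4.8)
The plan is to leverage the dynamical analysis of Theorem~\ref{cor:credible threat part 3}, but to replace the two-dimensional phase portrait with one in which Preference Order~\#1 collapses one of the two stable points. Recall that in the money-free setting the state of the network at the end of every full round is captured by $(|S|,|D|)$, with nullclines governed by eq.~\eqref{eq: term1} and eq.~\eqref{eq: term 2}, and with two stable fixed points: the optimal configuration $|S|=0,|D|=1$, and the ``everyone joins the star'' configuration $|S|=|T_B|-1,|D|=|T_A|$. First I would argue that, under Preference Order~\#1 and Def.~\ref{monetary cost definition}, the decision faced by a type-B player $r$ in $S$ is no longer ``stay in the star vs.\ link to $k$'' but rather ``stay in the star vs.\ pay a type-A clique member $j$ the price $P_{rj}=\max\{\Delta C(j,E+rj),0\}$ and link directly into the clique.'' Since the clique has $|T_A|$ members and a direct link to it shortens $r$'s distance to all of them, the relevant inequality for $r$ to defect from the star is, up to lower-order terms, $\Delta C(r,E+rj) + \min\{\Delta C(j,E+rj),0\} < 0$; using the cost function in Def.~\ref{cost-definition} this reduces to a comparison of the form $A|T_A| + (\text{benefit from reaching the rest of the clique in one hop}) > c_A + (\text{star-center savings})$, which, when $\tfrac{A+1}{2}>c$, is satisfied regardless of $|S|$.

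The key step is therefore to show that the ``star'' fixed point ceases to be stable once monetary transfers with Preference Order~\#1 are allowed. Concretely: at any state with $|S|>0$, pick any type-B player $r\in S$; I would show that the combined cost change $\Delta C(r,E')+\Delta C(j,E')$ of adding the direct link $rj$ to a clique member $j$ (and, under Rule~\#2a, subsequently dropping $(r,x)$) is strictly negative when $\tfrac{A+1}{2}>c$ — this is exactly the two-player-core condition from Corollary~\ref{lem:edges with monetary transfers.}, and it is precisely the condition under which the optimal network of Prop.~\ref{prop:optimality under monetary} differs from the star. Hence every player in $S$ has a profitable (transfer-compensated) deviation that strictly decreases $|S|$, so the only fixed point of the induced dynamics is $|S|=0$, i.e.\ the optimal solution of Prop.~\ref{prop:optimality under monetary}. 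Under Rule~\#2b one checks that the same move — forming $rj$ — already decreases $r$'s own cost (after the transfer) and decreases $j$'s cost (it is compensated to exactly break even, or strictly gains), so it is admissible act-by-act as well; the disconnection of $(r,x)$ then happens unilaterally in $r$'s next act, which is always permitted.

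For the convergence-rate claim I would reuse the bookkeeping of Theorem~\ref{cor:credible threat part 3}(c): once a player has acted, its links are in their final configuration unless a later arrival or a later clique-growth event changes the sign of its governing inequality; but under Preference Order~\#1 the sign is monotone — additional arrivals only increase $|T_A|$ or $|T_B|$ in the direction that keeps the ``defect to the clique'' inequality satisfied — so each player needs to act only a bounded number of times (I expect ``twice'': once to attach, once to migrate to its clique uplink and drop the star edge). Therefore if every player plays at least once per $O(N)$ turns, the process terminates in $O(N)$ turns; and if the playing order is uniformly random, Lemma~\ref{lem:decay time} gives that the probability of not having converged by turn $t$ decays exponentially in $t$.

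The main obstacle I anticipate is the careful verification that Preference Order~\#1 does not create a \emph{new} undesirable fixed point: one must rule out intermediate configurations in which a type-B player prefers to attach to \emph{another} type-B player (a partially-grown secondary star) rather than to the clique, and must handle the edge case where the clique is still small during the early turns (the $A\cdot k_A > k+1$ type of initial condition in Theorem~\ref{cor:credible threat part 3}). I would dispatch this by noting that Preference Order~\#1 always makes the clique the cheapest available target once $|T_A|$ exceeds a small constant — because the transfer price a clique member demands, $\max\{c_A - (\text{distance savings}),0\}$, is $O(1)$ while the distance benefit of a clique uplink is $\Theta(A|T_A|)$ — so any transient secondary star is dismantled within one additional round of each of its members, leaving the optimal configuration as the unique attractor. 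The large-network cost ratio and the remaining case $c_A<2$ are then handled exactly as in the appendix of Theorem~\ref{cor:credible threat part 3}.
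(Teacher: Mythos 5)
There is a genuine gap: you have misidentified the configuration the dynamics must converge to. In the regime $\frac{A+1}{2}>c$ the optimal solution is \emph{not} the state $|S|=0$, $|D|=1$ in which every type-B node has a single uplink into the clique (Fig.~\ref{fig:The-optimal-solution}(a)); by Proposition~\ref{prop:optimality under monetary} it is the state in which every type-B node is linked to \emph{every} member of the type-A clique (Fig.~\ref{fig:The-optimal-solution}(b)). Your proposed terminal state is therefore suboptimal in exactly this parameter range -- indeed, the fact that it is suboptimal here is why the price of stability exceeds $1$ without transfers. Worse, it is not even a fixed point of the transfer dynamics: for any type-B player $i$ and any clique member $j$ with $d(i,j)=2$, the joint cost change of adding $(i,j)$ is $\Delta C(i,E+ij)+\Delta C(j,E+ij)=2c-A-1<0$, so by Corollary~\ref{lem:edges with monetary transfers.} the link will be formed. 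Your bookkeeping claim that each player acts a bounded number of times and then stops at the single-uplink state is thus internally inconsistent. A second, related omission is that you never verify that type-B--type-B links dissolve; this requires checking $\Delta C(i,E-ij)+\Delta C(j,E-ij)=2c-2>0$ once all type-B nodes are within distance two of each other, and without it you cannot conclude that the limit is the optimal structure rather than one carrying extra minor-minor links.

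Beyond the gap, your route through the two-dimensional phase portrait of Theorem~\ref{cor:credible threat part 3} is unnecessary and is what leads you astray: the hypothesis $2c<A+1$ makes \emph{every} missing link involving a type-A endpoint jointly profitable \emph{regardless} of the current configuration, since $\Delta C(i,E+ij)+\Delta C(j,E+ij)\leq 2c+(1-d(i,j))(2+\delta_{i,A}(A-1)+\delta_{j,A}(A-1))\leq 2c-A-1<0$ whenever $d(i,j)\geq 2$. The paper's proof exploits this directly: after each player acts twice, all type-A--type-A and type-A--type-B links exist; then every type-B--type-B distance is $2$, so all type-B--type-B links are jointly unprofitable and are removed on each player's third action; the resulting network is the one of Proposition~\ref{prop:optimality under monetary}, which is optimal and stable. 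No tracking of $(|S|,|L|,|D|)$ or of feedback loops is needed. The rate claims then follow exactly as you suggest, via the decay lemma.
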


Yet, the common wisdom that monetary transfers, or utility transfers
in general, should increase the social welfare, is contradicted in
our setting by the following proposition. Specifically, there are
certain instances, where allowing monetary transfers yields a decrease
in the social utility. In other words, if monetary transfers are allowed,
then the system may converge to a sub-optimal state.

\begin{prop}
Assume $\frac{A+1}{2}\leq c$. Consider the case where monetary transfers
are allowed and the game obeys Dynamic Rules \#1,\#2a and Preference
Order \#1. Then:

a) The system will either converge to the optimal solution or to a
solution in which the social cost is 
\begin{eqnarray*}
\mathcal{S} & = & |T_{A}|\left(|T_{A}|-1\right)\left(c_{A}+A\right)+2\left(|T_{B}|-1\right)^{2}+\left(A+1\right)\left(3|T_{A}||T_{B}|-|T_{A}|+|T_{B}|\right)+2c|T_{B}|.
\end{eqnarray*}
For $|T_{B}|\rightarrow\infty$, $|T_{B}|\in\omega\left(|T_{A}|\right)$
we have $S/S_{optimal}\rightarrow1\,$. In addition, if one of the
first $\left\lfloor c-1\right\rfloor $ nodes to attach to the network
is of type-A then the system converges to the optimal solution. 

b) For some parameters and playing orders, the system converges to
the optimal state if monetary transfers are forbidden, but when transfers
are allowed it fails to do so. This is the case, for example, when
the first $k$ players are of type-B, and $2c-A-1<k<c-1$.\end{prop}

\begin{proof}
a) We claim that, at any given turn $t$, the network is composed
of the same structures as in Theorem \ref{cor:credible threat part 3}.
We use the notation described there. See Fig. \ref{fig:credible threat corr.}
for an illustration. We assume that the link $(k,x)$ exists and elaborate
in the appendix on the scenario that, at some point, the link $(k,x)$
is removed.

We prove by induction. At turn $t=1$ the induction hypothesis is
true. We'll discuss the different configurations at time $t$. 

1. $r\in T_{A}$: As before, all links to the other type-A nodes will
be established or maintained, if $r$ is already connected to the
network. The link $(r,x)$ will be formed if the change of cost of
player $r$,
\begin{equation}
\Delta C(r,E+rx)+\Delta C(x,E+rx)=2c-A-|S|-1\label{eq: term1-1}
\end{equation}
is negative. In this case $|S|$ will increase by one. If this expression
is positive and $(r,x)\in E,$ the link will be dissolved and $|D|$
will be reduced. It is not beneficial for $r$ to form an additional
link to any type-B player, as they only reduce the distance from a
single node by 1 and $\frac{A+1}{2}\leq c$. 

2. $r\in T_{B}$ :The discussion in Theorem \ref{cor:credible threat part 3}
shows that a newly arrived may choose to establish its optimal link,
which would be either $(r,k)$ or $(r,x)$ according to the sign of
expression \ref{eq: term 2}. As otherwise the graph is disconnected,
such link will cost nothing. Similarly, if $r$ is already connected,
it may disconnect itself as an intermediate state and use its improved
bargaining point to impose its optimal choice. Hence, the formation
of either $(r,k)$ or $(r,x)$ is not affected by the inclusion of
monetary transfers to the basic model. Assume the optimal move for
$r$ is to be a member of the star, $r\in S$. If $\Delta C(k,E+kr)+\Delta C(r,E+kr)=2c-A|m_{A}|-1-|L|$
is negative, than this link will be formed. In this case, $r$ is
a member of both $S$ and $L$, and we address this by the transformation
$|S|\leftarrow|S|$, $|L|\leftarrow|L|+1$ and $|T_{B}|\leftarrow|T_{B}|+1.$
Similarly, if $r\in L$ than it will establish links with the star
center $x$ if and only if $2c<|S|+1$. The analogous transformation
is, $|S|\leftarrow|S|+1$, $|L|\leftarrow|L|$ and $|T_{B}|\leftarrow|T_{B}|+1.$
The rest of the proof follows along the lines of Theorem \ref{cor:credible threat part 3}
and is detailed in the appendix.

b) If dynamic rule \#2a is in effect, the nullcline represented by
eq. \ref{eq: term1-1} is shifted to the left compared to the nullcline
of eq. \ref{eq: term1}, increasing region 3 and region 2 on the expanse
of region 1 and region 4. Therefore, there are cases where the system
would have converge to the optimal state, but allowing monetary transfers
it would converge to the other stable state. Intuitively, the star
center may pay type-A players to establish links with her, reducing
the motivation for one of her leafs to defect and in turn, increasing
the incentive of the other players to directly connect to it. Hence,
monetary transfers reduce the threshold for the positive feedback
loop discussing in Theorem \ref{cor:credible threat part 3}.\end{proof}

The latter proposition shows that the emergence of an effectively
new major league player, namely the star center, occurs more frequently
with monetary transfers, although the social cost is hindered. 

A more elaborate choice of a price mechanism is that of ``strategic''
pricing. Specifically, consider a player $j^{*}$ that knows that
the link $(i,j^{*})$ carries the least utility for player $i$. It
is reasonable to assume that player $j$ will ask the minimal price
for it, as long as it is greater than its implied costs. We will denote
this price as $P_{ij^{*}}$. Every other player $x$ will use this
value and demand an additional payment from player $i$, as the link
$(i,x)$ is more beneficial for player $i$. Formally,

\begin{defn}
Pricing mechanism \#2: Set $j^{*}$ as the node that maximizes $\Delta C(i,E+ij*)$.
Set $P_{ij^{*}}=\max\{-\Delta C(j*,E,ij*),0\}$. Finally, set $\alpha_{ij}=\Delta C(i,E+ij)-\left(\Delta C(i,E+ij^{*})+P_{ij^{*}}\right).$
The price that player $j$ requires in order to establish \emph{$(i,j)$
}is\emph{ }$P_{ij}=\max\{0,\alpha_{ij},-\Delta C(j,E+ij)\}$.\end{defn}

As far as player $i$ is concerned, all the links $(i,j)$ with $P_{ij}=\alpha_{ij}$
carry the same utility, and this utility is greater than the utility
of links for which the former condition is not valid. Some of these
links have a better connection value, but they come at a higher price.
Since all the links carry the same utility, we need to decide on some
preference mechanism for player $i$. The simplest one is the ``cheap''
choice, by which we mean that, if there are a few equivalent links,
then the player will choose the cheapest one. This can be reasoned
by the assumption that a new player cannot spend too much resources,
and therefore it will choose the ``cheapest'' option that belongs
to the set of links with maximal utility.

\begin{defn}
Preference order \#2: Player $i$ will establish links with player
$j$ if player $j$ minimizes $\Delta\tilde{C}(i,ij)=\Delta C(i,ij)+P_{ij}$
and $\Delta\tilde{C}(i,ij)<0$.

If there are several players that minimize $\Delta\tilde{C}(i,ij)$,
then player $i$ will establish a link with a player that minimizes
$P_{ij}$. If there are several players that satisfy the previous
condition, then one out of them is chosen randomly.\end{defn}
Note that low-cost links have a poor ``connection value'' and therefore
the previous statement can also be formulated as a preference for
links with low connection value. 

We proceed to consider the dynamic aspects of the system under such
conditions.

\begin{prop}
\label{prop:monetary-dyanmics-1}Assume that:

A) Players follow Preference Order \#2 and Dynamic Rule \#1, and either
Dynamic Rule \#2a or \#2b. 

B) There are enough players such that $2c<T_{A}\cdot A+T_{B}^{2}/4$.

C) At least one out of the first $m$ players is of type-A, where
$m$ satisfies $m\geq\sqrt{A^{2}+4c-1}-A$.

Then, if the players play in a non-random order, the system converges
to a state where all the type-B nodes are connected directly to the
type-A clique, except perhaps lines of nodes with summed maximal length
of $m$. In the large network limit, $\mathcal{S}/\mathcal{S}_{optimal}<3/2+c$
.

D) If $2c>(A-1)+|T_{B}|/|T_{A}|$ then the bound in (C) can be tightened
to $\mathcal{S}/\mathcal{S}_{optimal}<3/2$.\end{prop}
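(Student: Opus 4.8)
The proof follows the skeleton of Theorem~\ref{cor:credible threat part 3}: establish a structural invariant by induction on turns, reduce the dynamics to a low-dimensional discrete system, locate the basin of the target configuration, and finally bound the social cost. The new ingredient, relative to Theorem~\ref{cor:credible threat part 3}, is a family of type-B \emph{lines} created by Preference Order~\#2. First I would prove by induction that at every turn the graph consists of the type-A clique together with the star structure of Theorem~\ref{cor:credible threat part 3} (the sets $S,L,D$ anchored at $x$ and $k$, cf.\ Fig.~\ref{fig:credible threat corr.}) \emph{plus} a collection of vertex-disjoint type-B paths, each attached by one endpoint to a clique node or to $x$, of total length at most $m$. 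The case analysis for the active player is exactly that of Theorem~\ref{cor:credible threat part 3}, with one modification: under Pricing Mechanism~\#2 every link a (newly arrived, or under Rule~\#2a re-attaching) type-B player would consider is priced so as to carry the \emph{same} net utility $\Delta\tilde{C}$, so Preference Order~\#2 makes the player pick the cheapest such link, which by construction is the one of poorest connection value --- the tail of an existing short line rather than a fresh clique stub. This is exactly what lets lines appear, and one checks that extending a short line keeps $\Delta\tilde{C}$ no larger than that of a clique stub only while the line stays short.

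The quantitative heart is the line-length bound, and this is where I expect the main difficulty. I would show that a line of length $\ell>m$ is neither created nor sustained, via the monetary-transfer analog of Lemma~\ref{lem:The-longest-distance}/Theorem~\ref{prop:maximal distance from clique with money}: the tail node $v$ of such a line, at distance $\ell$ from its clique anchor and $\ell+1$ from the remaining $|T_A|-1$ clique nodes, gains from a chord to another clique node $j'$ an amount dominated by $A\ell + A(|T_A|-1)(\ell-1) + (|T_B|-\ell)(\ell-1)$, which exceeds $c_A+c_B$ plus the minimal compensation $j'$ demands precisely when $\ell(\ell+2A)>4c-1$, i.e.\ $\ell>\sqrt{A^2+4c-1}-A$. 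The subtlety --- and the obstacle --- is that the compensation set by Pricing Mechanism~\#2 depends on the incumbent clique node's own $\Delta C$, which varies with how many type-B stubs it already carries, so one must verify via Corollary~\ref{lem:edges with monetary transfers.} that the chord is \emph{jointly} profitable for \emph{every} admissible incumbent, and likewise that the comparison used in the induction above degrades monotonically in line length. Condition~(C) (a type-A node among the first $m$ arrivals) then guarantees that the clique is in place before any line can reach length $m$, so no further extension ever occurs; hence the total line length stays $\le m$ forever, and the only non-clique-attached type-B nodes at convergence lie in such lines.

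With the line structure pinned (bounded total length $m$, treated as a nuisance parameter), I would then reduce, exactly as in Theorem~\ref{cor:credible threat part 3}, to a discrete two-dimensional system in $(|S|,|L|)$ with $|D|$ slaved to $|S|$, whose only stable states are the star-collapsed one ($|S|=0$, $x$ an ordinary clique stub --- the target) and the one in which $x$ is absorbed into the clique. The separating nullcline is the sign of the analog of eq.~\ref{eq: term1-1} under Pricing Mechanism~\#2 together with eq.~\ref{eq: term 2}; condition~(B), $2c<|T_A|A+|T_B|^2/4$, places this nullcline so that from any state consistent with condition~(C) --- reached before the star can fatten past the critical size $\sim 2c-A-1$ --- each completed round of play drifts the state toward the star-collapsed equilibrium. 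The convergence rate is then inherited verbatim: $O(N)$ turns if every player acts within $O(N)$ turns, and exponentially fast with probability one under a uniformly random order via Lemma~\ref{lem:decay time}.

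Finally I would compute the social cost at the limiting configuration: $|T_A|(|T_A|-1)(c_A+A)$ from the clique, plus a type-B stub contribution no larger than that of the star-collapsed state of Theorem~\ref{cor:credible threat part 3} (equivalently, of the money-optimal state of Proposition~\ref{prop:optimality under monetary}), hence within $3/2$ of $\mathcal{S}_{optimal}$ in the limit $|T_B|\gg|T_A|\gg1$, plus the line correction. A node at depth $d$ in a line adds at most $d$ to its distance to each of the $\le|T_B|$ stubs and $\le d$ to its own link/compensation term, and similarly inflates the clique's and stubs' distances to it; summing over $d=1,\dots,m$ gives a correction of order $m^2|T_B|$, which by the threshold $m^2\le 4c-1+2Am$ is at most $\sim c|T_B|$, a correction-to-optimum ratio below $c$ once $|T_B|\gg1$; this yields $\mathcal{S}/\mathcal{S}_{optimal}<3/2+c$. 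For part~(D), I would show that the strengthened inequality $2c>(A-1)+|T_B|/|T_A|$ makes the marginal comparison of the first paragraph strict already at line length zero --- a fresh clique stub strictly beats even a length-one tail, because sitting one hop further out now costs more than $2c$ can save in compensation --- so no line is ever formed, the correction vanishes, and the bound tightens to $\mathcal{S}/\mathcal{S}_{optimal}<3/2$.
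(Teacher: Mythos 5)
You correctly identify the quantitative heart of the argument --- under Pricing Mechanism \#2 a new type-B arrival takes the cheapest (poorest connection value) link, i.e.\ the tail of an existing line, and a line of length $\ell \geq m \geq \sqrt{A^2+4c-1}-A$ triggers a jointly profitable chord to the clique (via the analog of Lemma \ref{lem:shortcut-1}, which replaces the tail link by a direct clique link). That matches the paper. However, you then graft the entire star machinery of Theorem \ref{cor:credible threat part 3} onto this setting: the sets $S,L,D$, the star center $x$, the two-dimensional $(|S|,|L|)$ phase space, and a nullcline placed by condition (B). None of that appears in the paper's proof, and for good reason: under Preference Order \#2 no star ever forms, precisely because minor players do \emph{not} attach to the best-connected node --- they attach to the worst one, the end of the single growing line. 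The structural invariant is simply ``type-A clique $+$ one line of bounded length $+$ type-B stubs attached directly to the clique,'' proved by induction on arrivals, with type-A arrivals absorbed into the clique by the dissolve-and-rewire step of Lemma \ref{lem:shortcut-1}. Condition (B) is not a nullcline-placement condition; it merely guarantees there are enough players for the rerouting inequality $2c < m_X A + x^2/4$ to eventually hold. Your framing asks you to analyze basins of attraction of a dynamical system that does not arise, and the claim that the state ``drifts toward the star-collapsed equilibrium'' has no referent here.

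The second, more consequential gap is in part (D) and in your accounting of the $+c$ excess. The line contributes negligibly to the social cost in the large network limit --- its total length is $m=O(\sqrt{c})$, a constant independent of $N$, so its correction is $O(c|T_B|)=o(|T_B|^2)$ and cannot be the source of the additive $c$ in the ratio bound. In the paper the $+c$ comes from the link-cost term $2c|T_A||T_B|$: after convergence a type-B node may carry up to $|T_A|$ links into the clique, each costing $c$. Part (D) then removes exactly this term: because Preference Order \#2 spreads the type-B stubs roughly uniformly, each clique node carries about $|T_B|/|T_A|$ stubs, so an additional type-B--to--clique chord has joint cost change $2c-(A-1)-|T_B|/|T_A|>0$ under the hypothesis of (D), and no such extra links form; the remaining link cost is linear in $|T_B|$ and the ratio drops to $3/2$. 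Your version of (D) --- that the strengthened inequality prevents lines from forming --- addresses a term that was already asymptotically negligible and leaves the actual $2c|T_A||T_B|$ term in place, so it does not yield the claimed tightening.
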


In order to obtain the result in Proposition \ref{prop:maximal distance from clique with money},
we had to assume a large limit for the number of type-A players. Here,
on the other hand, we were able to obtain a similar result yet without
that assumption, i.e., solely by dynamic considerations. 

It is important to note that, although our model allows for monetary
transfers, in \emph{every} resulting agreement between major players
no monetary transaction is performed. In other words, our model predicts
that the major players clique will form a \emph{settlement-free} interconnection
subgraph, while in major player - minor player contracts transactions
will occur, and they will be of a transit contract type. Indeed, this
observation is well founded in reality.

\section{Conclusions}

Does the Internet resembles a clique or a tree? Is it contracting
or expanding? Can one statement be true on one segment of the network
while the opposite is correct on a different segment? The game theoretic
model presented in this work, while abstracting many details away,
focuses on the essence of the strategical decision-making that ASs
perform. It provides answers to such questions by addressing the different
roles ASs play.

The static analysis has indicated that in all equilibria, the major
players form a clique. Our model predicts that the major players clique
will form a \emph{settlement-free} interconnection subgraph, while
in major player - minor player contracts transactions will occur,
and they will be of a transit contract type. This observation is supported
by the empirical evidence,showing the tight tier-1 subgraph, and the
fact these ASs provide transit service to the other ASs. 

We discussed multiple dynamics, which represent different scenarios
and playing orders. The dynamic analysis showed that, when the individual
players act selfishly and rationally, the system will converge to
either the optimal configuration or to a state in which the social
cost differs by a negligible amount from the optimal social cost.
This is important as a prospective mechanism design. Furthermore,
although a multitude of equilibria exist, the dynamics restrict the
convergence to a limited set. In this set, the minor players' dominating
structures are lines and stars. We also learned that, as the number
of major players increase, the distance of the minor players to the
core should decrease. This theoretical finding was also confirmed
empirically (see the appendix).

In our model, ASs are lumped into two categories. The extrapolation
of our model to a general (multi-tier) distribution of player importance
is an interesting and relevant future research question, the buds
of which are discussed in the appendix. In addition, there are numerous
contract types (e.g., p2p, customer-to-provider, s2s) ASs may form.
While we discussed a network formed by the main type (c2p), the effect
of including various contract types is yet to be explored.

\section*{}
\bibliographystyle{acmsmall}

\pagebreak{}

\appendix

\section{Appendix: Data analysis}

\begin{figure}
\centering{}\includegraphics[width=1\textwidth]{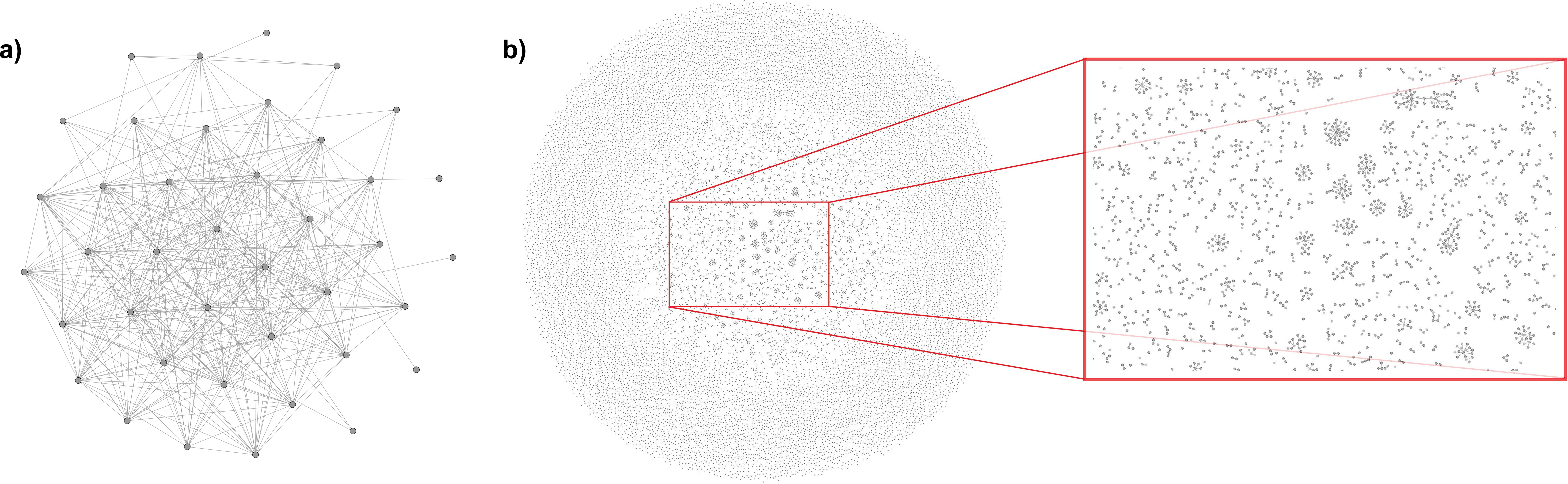}\caption{Structure of the AS topology. a) The sub-graph of the top 40 ASs,
according to CAIDA ranking, in January, 2006. b) The minor nodes sub-graph
was created by omitting nodes in higher k-core $(k\geq3$) and removing
any links from the shell to the core. The subgraph contains 16,442
nodes, which are $\sim75\%$ of the ASs in the networks. Left: The
full network map. Singleton are displayed in the exterior and complex
objects in the center. Right: A zoom-in on a sample (red box) of the
subgraph. The complex structure are mainly short lines and stars (or
star-like objects).}
\label{fig:clique subgraph}
\end{figure}

As discussed in the Introduction, the Internet is composed of autonomous
subsystems, each we consider to be a player. It is one particular
case to which our model can be applied, and in fact it has served
as the main motivation for our study. Accordingly, in this appendix
we compare our theoretical findings with actual monthly snapshots
of the inter-AS connectivity, reconstructed from BGP update messages
\citeapp{Gregori2011}.

Our model predicts that, for $A>c>1$ , the type-A (``major league'')
players will form a highly connected subset, specifically a clique
(Section \ref{sub:The-type-A-clique}). The type-B players, in turn,
form structures that are connected to the clique. Figure \ref{fig:clique subgraph}
presents the graph of a subset of the top 100 ASs per January 2006,
according to CAIDA ranking \citeapp{CAIDA}. It is visually clear
that the inter-connectivity of this subset is high. Indeed, the top
100 ASs graph density, which is the ratio between the number of links
present and the number of possible links, is 0.23, compared to a mean
$0.024\pm0.004$ for a random connected set of 100. It is important
to note that we were able to obtain similar results by ranking the
top ASs using topological measures, such as betweeness, closeness
and k-core analysis \citeapp{Meirom2013} .

Although, in principle, there are many structures the type-B players
(``minor players'') may form, the dynamics we considered indicates
the presence of stars and lines mainly (Sections \ref{sub:dynamical Results}
and \ref{sub: monetary Dynamics}). While the partition of ASs to
just two types is a simplification, we still expect our model to predict
fairly accurately the structures at the limits of high-importance
ASs and marginal ASs. A \emph{$k$-core} of a graph is the maximal
connected subgraph in which all nodes have degree of at least $k$.
The \emph{$k$-shell} is obtained after the removal of all the $k$-core
nodes. In Fig. \ref{fig:clique subgraph}, a snapshot of the sub-graph
of the marginal ASs is presented, using a $k$-core separation ($k=3$),
where all the nodes in the higher cores are removed. The abundance
of lines and stars is visually clear. In addition, the spanning tree
of this subset, which consists of 75\% of the ASs in the Internet,
is formed by removing just 0.02\% of the links in this sub-graph,
a strong indication for a forest-like structure.

In the dynamic aspect, we expect the type-A players sub-graph to converge
to a complete graph. We evaluate the mean node-to-node distance in
this subset as a function of time by using quarterly snapshots of
the AS graph from January, 2006 to October, 2008. Indeed, the mean
distance decreases approximately linearly. The result is presented
in Fig. \ref{fig:shell-core distance graph}. Also, the distance value
tends to $1$, indicating the almost-completeness of this sub-graph.

\begin{figure}
\centering{}\includegraphics[width=0.9\columnwidth]{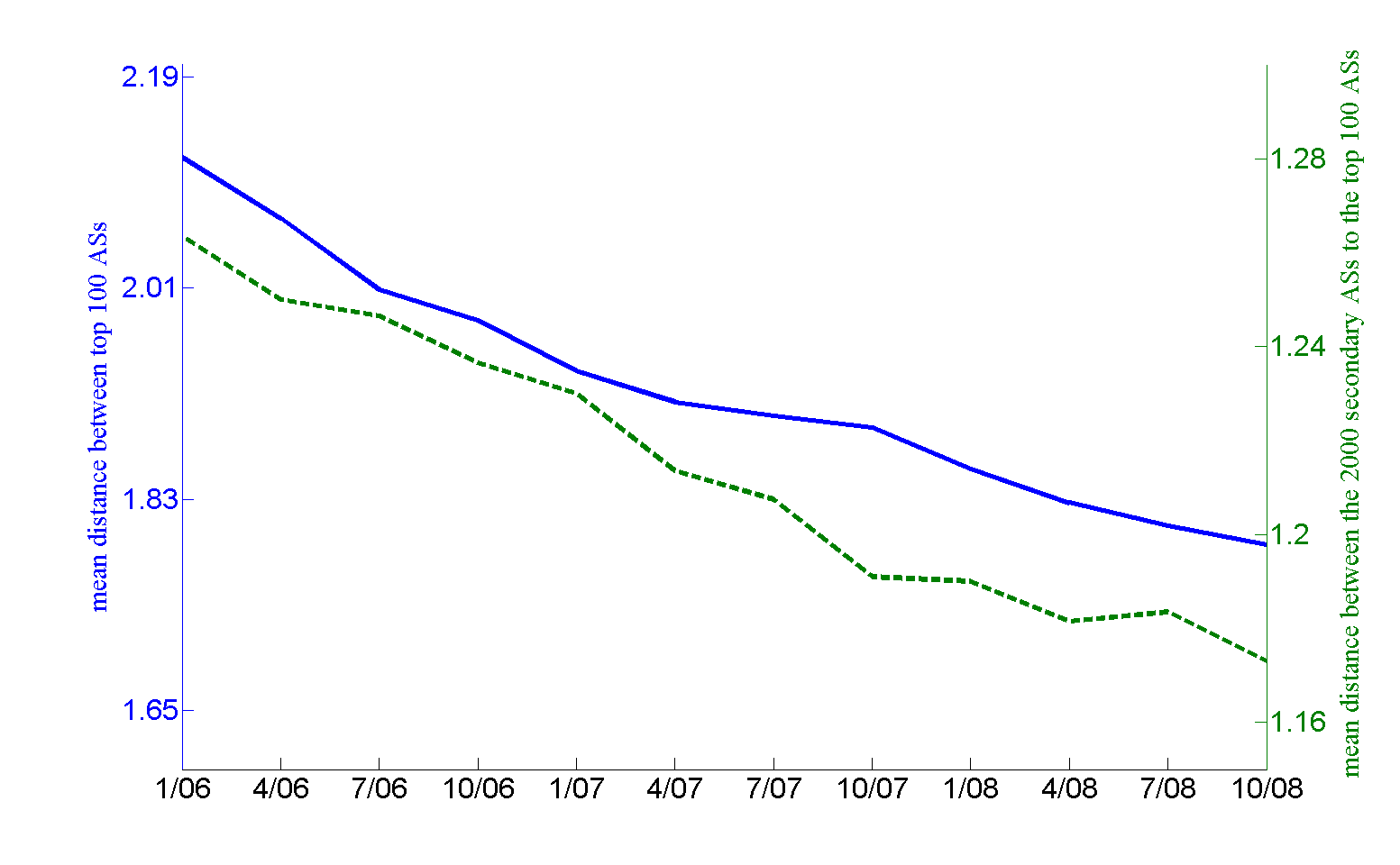}\caption{In solid blue: the mean distance of an AS in the top 100 ASs CAIDA
ranking to all the other top 100 ASs, from January, 2006 to October,
2008. In dashed green: The mean shortest distance of a secondary AS
(ranked 101-2100) from any top AS (ranked 1-100).}
\label{fig:shell-core distance graph}
\end{figure}

For a choice of core $\mathcal{C}$, the \emph{node-core distance}
of a node $i\notin\mathcal{C}$ is defined as the shortest path from
node $i$ to any node in the core. In Section \ref{sec:Monetary-transfers},
we showed that, by allowing monetary transfers, the maximal distance
of a type-B player to the type-A clique (the maximal ``node-core
distance'' in our model) depends inversely on the number of nodes
in the clique and the number of players in general. Likewise, we expect
the mean ``node-core'' distance to depend inversely on the number
of nodes in the clique. The number of ASs increases in time, and we
may assume the number of type-A players follows. Therefore, we expect
a decrease of the aforementioned mean ``node-core distance'' in
time. Fig \ref{fig:shell-core distance graph} shows the mean distance
of the secondary leading 2000 ASs, ranked 101-2100 in CAIDA ranking,
from the set of the top 100 nodes. The distance decreases in time,
in agreement with our model. Furthermore, our dynamics indicate that
the type-B nodes would be organized in stars, for which the mean ``node-core''
distance is close to two, and in singleton trees, for which the ``node-core''
distance is one. Indeed, as predicted, the mean ``node-core'' distance
proves to be between one and two.

It is widely assumed that the evolvement of the Internet follows a
``preferential attachment'' process \citeapp{Barabasi1999}. According
to this process, the probability that a new node will attach to an
existing node is proportional (in some models, up to some power) to
the existing node's degree. An immediate corollary is that the probability
that a new node will connect to any node in a set of nodes is proportional
to the set's sum of degrees. The sum of degrees of the secondary ASs
set is \textasciitilde{}1.9 greater than the sum of degrees in the
core, according to the examined data \citeapp{Gregori2011}. Therefore,
a ``preferential attachment'' class model predicts that a new node
is likely to attach to the shell rather than to the core. As all the
nodes in the shell have a distance of at least one from the core,
the new node's distance from the core will be at least two. Since
the initial mean ``shell-core'' distance is \textasciitilde{}1.26,
a model belonging to the ``preferential attachment'' class predicts
that the mean distance will be pushed to two, and in general increase
over time. However, this is contradicted by the data that shows (Fig
\ref{fig:shell-core distance graph}) a decrease of the aforementioned
distance. The slope of the latter has the 95\% confidence bound of
$(-3.1\cdot10^{-3},-2.3\cdot10^{-3}$) hops/month, a strong indication
of a negative trend, in disagreement with the ``preferential attachment''
model class. In contrast, this trend is predicted by our model, per
the discussion in Section \ref{sec:Monetary-transfers}. In fact,
if the Internet is described by a random, power law (``scale free'')
network, then the mean distance should grow as $\Theta(\log N)$ or
$\Theta(\log\log N)$ (\citeapp{Cohen2003}). However, experimental
observations shows that the mean distance grows slower than that (\citeapp{Pastor-Satorras2001}
), and it fact it may even be reduced with the network size, as predicted
by our model. 


\pagebreak{}

\section{Appendix : Detailed proofs}

\subsection{Basic model - Static Analysis}

In this section we discuss the properties of stable equilibria. Specifically,
we first establish that, under certain conditions, the major players
group together in a clique (section \ref{sub:The-type-A-clique-1}).
We then describe a few topological characteristics of all equilibria
(section \ref{sub:Pair-wise-equilibria-1}). 

As a metric for the quality of the solution we apply the commonly
used measure of the social cost, which is the sum of individual costs.
We evaluate the \emph{price of anarchy}, which is the ratio between
the social cost at the worst stable solution and its value at the
optimal solution, and the \emph{price of stability}, which is the
ratio between the social cost at the best stable solution and its
value at the optimal solution (section \ref{sub:PoA and PoS-1}).

\subsubsection{Preliminaries}

The next lemma will be useful in many instances. It measures the benefit
of connecting the two ends of a long line of players, as presented
in \ref{fig:long line-1}. If the line is too long, it is better for
both parties at its end to form a link between them.
\begin{lem}
\label{lem:shortcut benefit}Assume of lone line having $k$ nodes,
$(x_{1},x_{2},...x_{k}).$ By establishing the link $(x_{1},x_{k})$
the sum of distances $\sum_{i}d(x_{1},x_{i})$ $\left(\sum d(x_{k},x_{i})\right)$
is reduced by
\[
\frac{k\left(k-2\right)+mod(k,2)}{4}
\]
\end{lem}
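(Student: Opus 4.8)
The plan is to compute the sum $\sum_{i=1}^{k} d(x_1, x_i)$ before and after adding the link $(x_1, x_k)$, and to take the difference. Before the link is added, the line is a path, so $d(x_1, x_i) = i - 1$ for each $i$, giving $\sum_{i=1}^{k} d(x_1, x_i) = \sum_{j=0}^{k-1} j = \binom{k}{2} = \tfrac{k(k-1)}{2}$. After adding $(x_1, x_k)$, the graph is a cycle on $k$ vertices, so the distance from $x_1$ to $x_i$ becomes $\min\{\, i-1,\; k-(i-1)\,\}$. The new sum is therefore $\sum_{i=1}^{k} \min\{i-1, k-i+1\}$, and I would evaluate this in closed form.

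First I would handle the two parity cases separately. Reindexing with $j = i-1$ running from $0$ to $k-1$, the new sum is $\sum_{j=0}^{k-1}\min\{j, k-j\}$. When $k=2m$ is even, $\min\{j,k-j\}$ runs $0,1,\dots,m-1,m,m-1,\dots,1$, so the sum is $m + 2\sum_{j=0}^{m-1} j = m + m(m-1) = m^2 = k^2/4$. When $k=2m+1$ is odd, the values are $0,1,\dots,m,m,m-1,\dots,1$, so the sum is $2\sum_{j=0}^{m} j - 0 = m(m+1) = (k^2-1)/4$. In both cases one can write the new sum uniformly as $\tfrac{k^2 - \operatorname{mod}(k,2)}{4}$, using that $k^2 \equiv \operatorname{mod}(k,2)\pmod 4$ — indeed $k$ even gives $k^2/4$ and $k$ odd gives $(k^2-1)/4$.

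Then the reduction is
\[
\frac{k(k-1)}{2} - \frac{k^2 - \operatorname{mod}(k,2)}{4}
= \frac{2k(k-1) - k^2 + \operatorname{mod}(k,2)}{4}
= \frac{k^2 - 2k + \operatorname{mod}(k,2)}{4}
= \frac{k(k-2) + \operatorname{mod}(k,2)}{4},
\]
which is exactly the claimed formula. By the symmetry of the line (the map $x_i \mapsto x_{k+1-i}$ is a graph automorphism of both the path and the cycle), the identical computation gives the reduction of $\sum_i d(x_k, x_i)$, so the parenthetical claim follows for free.

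I do not anticipate a genuine obstacle here; the only mild subtlety is getting the parity bookkeeping exactly right and recognizing that $\min\{j,k-j\}$ is the correct cycle distance (valid since the line has no other edges, so the only two routes from $x_1$ to $x_i$ are the two arcs of the cycle). Writing the even and odd cases and then unifying them via $k^2 \equiv \operatorname{mod}(k,2) \pmod 4$ is the cleanest presentation; alternatively one could just state the result split by parity and note it matches $\tfrac{k(k-2)+\operatorname{mod}(k,2)}{4}$ in each case.
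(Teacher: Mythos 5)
Your proof is correct and takes essentially the same route as the paper's: both compute the path sum $\tfrac{k(k-1)}{2}$, then the cycle sum split by parity ($k^2/4$ for even $k$, $(k^2-1)/4$ for odd $k$), and subtract. Your explicit identification of the new distances as the cycle metric $\min\{j,k-j\}$ and the unified formula $\tfrac{k^{2}-\operatorname{mod}(k,2)}{4}$ are just a slightly cleaner packaging of the identical computation.
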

\begin{proof}
Without the link $(x_{1},x_{k})$ the sum of distances is given by
the algebraic series
\[
\sum_{i}d(x_{1},x_{i})=\sum_{i=1}^{k-1}i=\frac{k\left(k-1\right)}{2}
\]

If $k$ is odd, than the the addition of the link $(x_{1},x_{k})$
we have 
\begin{eqnarray*}
\sum_{i}d(x_{1},x_{i}) & = & 2\sum_{i=1}^{\left\lfloor k/2\right\rfloor }i=\left(\left\lfloor k/2\right\rfloor +1\right)\left\lfloor k/2\right\rfloor \\
 & = & \left(\left(k-1\right)/2+1\right)\left(\left(k-1\right)/2\right)\\
 & = & \frac{k^{2}-1}{4}
\end{eqnarray*}

If $k$ is even, the corresponding sum is 
\begin{eqnarray*}
\sum_{i}d(x_{1},x_{i}) & = & \sum_{i=1}^{k/2}i+\sum_{i=1}^{k/2-1}i\\
 & = & \frac{\left(k/2+1\right)k}{4}+\frac{\left(k/2-1\right)k}{4}\\
 & = & \frac{k^{2}}{4}
\end{eqnarray*}

We conclude that the difference for $k$ even is 
\[
\frac{k^{2}}{4}-\frac{k}{2}=\frac{k\left(k-2\right)}{4}
\]

and for odd $k$ is
\[
\frac{k^{2}}{4}-\frac{k}{2}+\frac{1}{4}=\frac{k\left(k-2\right)+1}{4}
\]

\end{proof}

\subsubsection{\label{sub:The-type-A-clique-1}The type-A clique}

Our goal is understanding the resulting topology when we assume strategic
players and myopic dynamics. Obviously, if the link's cost is extremely
low, every player would establish links with all other players. The
resulting graph will be a clique. As the link's cost increase, it
becomes worthwhile to form direct links only with major players. In
this case, only the major players' subgraph is a clique. The first
observation leads to the following result.
\begin{lem}
If $c_{B}<1$ then the only stabilizable graph is a clique.
\end{lem}
In a clique $d(i,j)=1$ for all $i,j$. Assume $d(i,j)>1$. Then by
establishing a link $(i,j)$ the cost of both parties is reduced,
as each party reduces its distance to at least one player, and $c_{A}<c_{B}<1$.
Hence we can't have $d(i,j)>1$.

\begin{figure}
\centering{}\includegraphics[width=0.7\columnwidth]{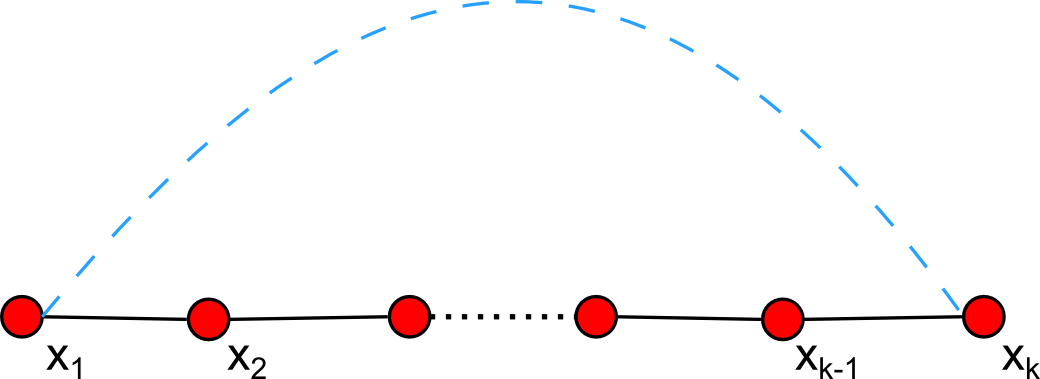}\caption{\label{fig:long line-1}The scenario described in Lemma \ref{lem:shortcut benefit}.
The additional link is dashed in blue.}
\end{figure}

In fact, we can use the same reasoning to generalize for $c>1$. If
two nodes are at a distance $L+1$ of each other, then there is a
path with $L$ nodes connecting them. By establishing a link with
cost $c$, we are shortening the distance between the end node to
$\sim L/2$ nodes that lay on the other side of the line. The average
reduction in distance is also $\approx L/2$, so by comparing $L^{2}\approx4c$
we obtain a bound on $L$, as follows:
\begin{lem}
\label{lem:The-longest-distance-2}The longest distance between any
node $i$ and node $j\in T_{B}$ is bounded by $2\sqrt{c_{B}}$. The
longest distance between nodes $i,j\in T_{A}$ is bounded by $\sqrt{(1-2A)^{2}+4c_{A}}-2\left(A-1\right)$.
In addition, if $c_{A}<A$ then there is a link between every two
type-A nodes.\end{lem}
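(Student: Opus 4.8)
The plan is to contradict pairwise stability: if a missing edge $(i,j)$ would weakly lower the cost of \emph{both} $i$ and $j$, the graph cannot be stabilizable, since part (b) of pairwise stability forces $\Delta C(i,E+ij)>0$ or $\Delta C(j,E+ij)>0$ for every missing edge. (A stabilizable graph is connected, else a disconnected pair would both strictly benefit from a joining link.) So fix a stabilizable graph and two nodes $i,j$ with $j\in T_B$, put $D=d(i,j)$, and take a shortest path $i=x_0,x_1,\dots,x_D=j$. After inserting $(i,j)$ one has $d(i,x_m)\le\min\{m,\;D+1-m\}$, so the distance from $i$ to $x_m$ drops by at least $r_m:=\max\{0,\;2m-D-1\}$; the sum $\sum_{m=0}^{D}r_m$ is exactly the quantity computed in Lemma~\ref{lem:shortcut benefit} for the $(D+1)$-node line, hence $\ge\frac{D^2-1}{4}$, and the same bound holds from $j$'s side by reversing the path.

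Since distances to type-A path nodes carry weight $A>1$ in the cost, the true cost reduction is at least this unweighted count; inserting the edge costs $c_B$ to the endpoint $j\in T_B$ and at most $c_B$ to $i$ (as $c_A\le c_B$). Hence both $\Delta C(i,E+ij)$ and $\Delta C(j,E+ij)$ are $\le c_B-\frac{D^2-1}{4}$. If $\frac{D^2-1}{4}\ge c_B$ they are both $\le 0$, which is impossible; therefore $D<\sqrt{4c_B+1}$, which is the announced bound $2\sqrt{c_B}$ up to the harmless ``$+1$'' under the root.

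For $i,j\in T_A$ I would rerun the argument but isolate the endpoint $j$: it is type-A, its distance from $i$ drops by $r_D=D-1$, contributing $A(D-1)$ to the reduction; the internal nodes $x_1,\dots,x_{D-1}$ contribute, with weight $\ge 1$, at least $\sum_{m=1}^{D-1}r_m=\frac{D^2-1}{4}-(D-1)=\frac{(D-1)(D-3)}{4}$. So $\Delta C(i,E+ij)\le c_A-A(D-1)-\frac{(D-1)(D-3)}{4}$, and symmetrically at $j$. Writing $x=D-1$, requiring this to be $\le 0$ is the quadratic inequality $x^2+(4A-2)x-4c_A\ge 0$, whose positive root gives $x\le\sqrt{(2A-1)^2+4c_A}-(2A-1)$, i.e. $D\le\sqrt{(1-2A)^2+4c_A}-2(A-1)$. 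When $c_A<A$ (hence $c_A<2A$) this bound forces $D<3$; and if two type-A nodes were at distance exactly $2$ they would share a neighbour, so adding the direct edge cuts their $A$-weighted mutual distance from $2$ to $1$ at cost $c_A<A$, strictly helping both endpoints — impossible. Hence the type-A nodes form a clique.

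The computation of $\sum_m r_m$ is routine (Lemma~\ref{lem:shortcut benefit}), as are the quadratics. The points that need care are: one must bound $\Delta C$ at \emph{both} endpoints, so the sub-cases ``the other endpoint lies in $T_A$'' vs. ``in $T_B$'' must each be checked, using $c_A\le c_B$ to reduce to the worst case; the credit given to the intermediate path nodes must be taken in the worst case that they are all type-B, since any type-A intermediate node only helps (via the weight $A$); and one must not double-count the reduction to $j$ when splitting off its $A$-weighted term. Tracking the even/odd parity term of Lemma~\ref{lem:shortcut benefit} (which only ever strengthens the bound) and the leading-order rounding from $\sqrt{4c_B+1}$ to $2\sqrt{c_B}$ is the remaining bookkeeping, and is the only place where the stated constant is not reproduced on the nose.
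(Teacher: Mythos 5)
Your proof is correct and follows essentially the same route as the paper's: both apply the line-shortening count of Lemma~\ref{lem:shortcut benefit} to each endpoint of a shortest path, invoke pairwise stability to force the shortcut, and solve the resulting quadratic, splitting off the $A$-weighted term when the far endpoint is type-A (which yields the stated bound exactly and, for $c_A<A$, forces distance $<3$ and hence the clique). The lower-order mismatch you flag between $\sqrt{4c_B+1}$ and the stated $2\sqrt{c_B}$ is present in the paper's own derivation as well, so it is not a defect of your argument.
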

\begin{proof}
We bound the maximal distance between two nodes by conisdering the
cost reduction of establishing a direct link between the two nodes
$i,j$ at the perimeter of a length $k$ line. We show that if the
line length is $\geq\left\lfloor 2\sqrt{c_{B}}\right\rfloor $ then
it is beneficial to establish such link. Assume $d(i,j)=k\geq\left\lfloor 2\sqrt{c_{B}}\right\rfloor >1$
and $i\in T_{B}$. Then there exist nodes $(x_{0}=i,x_{2},..x_{k-1}=j)$
such that $d(i,x_{\alpha})=\alpha$. By adding a link $(i,j)$ the
change in cost of node $i$, $\Delta C(i,E+ij)$ is, according to
lemma \ref{lem:shortcut benefit}, 
\begin{align*}
 & \Delta C(i,E+ij)\\
= & c_{B}-\sum_{\alpha=1}^{k-1}d(i,x_{\alpha})(1+\delta_{x_{\alpha},A}(A-1))\\
 & +\sum_{\alpha=1}^{k-1}d'(i,x_{\alpha})(1+\delta_{x_{\alpha},A}(A-1))\\
= & c_{B}-\sum_{\alpha=1}^{k-1}\left(d(i,x_{\alpha})-d'(i,x_{\alpha})\right)(1+\delta_{x_{\alpha},A}(A-1))\\
< & c_{B}-\sum_{\alpha=1}^{k-1}d(i,x_{\alpha})+\sum_{\alpha=1}^{k-1}d'(i,x_{\alpha})\\
< & c_{B}-\frac{k\left(k-2\right)+mod(k,2)}{4}\\
< & c_{B}-\frac{k\left(k-2\right)}{4}<0
\end{align*}
 where $d'(i,x_{\alpha})<d(i,x_{\alpha})$ is the distance after the
addition of the link $(i,j)$ and $\delta_{x_{\alpha},A}=1$ iff $x_{\alpha}\in T_{A}$.
Therefore, it is of the interest of player $i$ to add the link. 

Consider the case that $j\in T_{A}$ and 
\begin{equation}
d(i,j)=k-1\geq\sqrt{(1-2A)^{2}+4c_{B}}-2\left(A-1\right)>1\label{eq: distance-1}
\end{equation}
 The change in cost after the addition of the link $(i,j)$ is 
\begin{align*}
 & c_{B}-\sum_{\alpha=1}^{k-1}d(i,x_{\alpha})(1+\delta_{x_{\alpha},A}(A-1))\\
 & +\sum_{\alpha=1}^{k-1}d'(i,x_{\alpha})(1+\delta_{x_{\alpha},A}(A-1))\\
= & c_{B}-\sum_{\alpha=1}^{k-1}\left(d(i,x_{\alpha})-d'(i,x_{\alpha})\right)(1+\delta_{x_{\alpha},A}(A-1))\\
< & c_{B}-\sum_{\alpha=1}^{k-2}d(i,x_{\alpha})+\sum_{\alpha=1}^{k-2}d'(i,x_{\alpha})+A\left(d(i,x_{k})-d'(i,x_{k})\right)\\
< & c_{B}-\frac{k\left(k-2\right)+mod(k-1,2)}{4}-\left(k-1\right)A\\
 & +\left(k-1\right)+A-1\\
< & c_{B}-\frac{k\left(k-2\right)}{4}-\left(A-1\right)\left(k-2\right)\\
= & c_{B}-k^{2}/4-k(A-1)\\
< & 0
\end{align*}

Therefore it is beneficial for player $i$ to establish the link.
Similarly, if $i\in T_{A}$ then eq. \ref{eq: distance-1} is replaced
by $k-1\geq\sqrt{(1-2A)^{2}+4c_{A}}-2\left(A-1\right)>1$.

In particular, if we do not omit the $mod(k-1,2)$ term and set $k=3$
we get that if $2\sqrt{(-1+A)A+c_{A}}-2(A-1)<2$ the distance between
two type-A nodes is smaller than 2, in other words, they connected
by a link. The latter expression can be recast to the simple form
$c<A$.

Recall that if the cost of both parties is reduced (the change of
cost of node $j$ is obtained by the change of summation to $0..k-1$)
a link connecting them will be formed. Therefore, if $i,j\in T_{B}$
then maximal distance between then is $d(i,j)\leq max\{2\sqrt{c_{B}},1\}$
as otherwise it would be beneficial for both $i,j$ to establish a
link that will reduce their mutual distance to $1.$ Likewise, if
$i,j\in T_{A}$ then 
\[
d(i,j)\leq\sqrt{(1-2A)^{2}+4c_{A}}-2\left(A-1\right)
\]
using an analogous reasoning. If $i\in T_{A}$ and $j\in T_{B}$ then
it'll be worthy for player $i$ to establish the link only if $d(i,j)\geq\left\lfloor 2\sqrt{c_{B}}\right\rfloor $.
In this case it'll be also worthy for player $j$ to establish the
link since 
\[
d(i,j)\geq\left\lfloor 2\sqrt{c_{B}}\right\rfloor \geq\sqrt{(1-2A)^{2}+4c_{B}}-2\left(A-1\right)
\]
and the link will be established. Notice however that if 
\[
\left\lfloor 2\sqrt{c_{B}}\right\rfloor \geq d(i,j)\geq\sqrt{(1-2A)^{2}+4c_{A}}-2\left(A-1\right)
\]
 then although it is worthy for player $j$ to establish the link,
it is isn't worthy for player $i$ to do so and the link won't be
established. This concludes our proof.
\end{proof}
Lemma \ref{lem:The-longest-distance-2} indicates that if $1<c_{A}<A$
then the type $A$ nodes will form a clique (the ``nucleolus'' of
the network). The type $B$ nodes form structures that are connected
to the type $A$ clique (the network nucleolus). These structures
are not necessarily trees and will not necessarily connect to a single
point of the type-A clique only. This is indeed a very realistic scenario,
found in many configurations.

If $c_{A}>A$ then the type-A clique is no longer stable. This setting
does not correspond to the observed nature of the inter-AS topology
and we shall focus in all the following sections on the case $1<c_{A}<A$.
Nevertheless, as a flavor of the results for $c_{A}>A$ we present
the following proposition, which is stated for general heterogeneity
of players, rather than a dichotomy of types. Here, we denote that
the relative importance of player $j$ in player $i$'s concern is
as $A_{ij}$.
\begin{prop}
\label{cor:general optimality-1}Assume the cost function of player
$i$ is given by the form
\[
C(i)\triangleq deg(i)\cdot c+\sum_{j\in T_{A}}A_{ij}d(i,j)
\]

where either $c>A_{ij}$ or $c>A_{ji}$. Then a star is a stable formation.
Furthermore, if $A_{ij}=A_{j}$ define $j^{*}$ as the node for which
$A_{j^{*}}$is maximal. Then a star with node $j^{*}$ at its center
is the optimal stable structure in terms of social utility. \end{prop}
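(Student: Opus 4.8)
The plan is to prove the two assertions separately, reading $T_A$ in the general cost function as the entire player set so that the distance term in $C(i)$ runs over everybody.

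\emph{Stability of a star.} I would take $G$ to be the star with an arbitrary centre $x$ and check the two clauses of pairwise stability directly. Deleting any edge $(x,v)$ disconnects $v$, so $d(v,u)\to\infty$ for every other player $u$; hence $\Delta C(v,E-xv)=+\infty>0$, and symmetrically $\Delta C(x,E-xv)=+\infty>0$, so clause (a) holds for every edge. The only non-edges are pairs of leaves $\{i,j\}$, and in a star every shortest path between two leaves already runs through $x$, so inserting $(i,j)$ changes only $d(i,j)$, from $2$ to $1$; thus $\Delta C(i,E+ij)=c-A_{ij}$ and $\Delta C(j,E+ij)=c-A_{ji}$, and the hypothesis ``$c>A_{ij}$ or $c>A_{ji}$'' makes at least one of these positive, giving clause (b). So $G$ is stabilizable.

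\emph{Social optimality.} Now put $A_{ij}=A_j$, write $\Sigma=\sum_j A_j$ and $D_G(j)=\sum_i d_G(i,j)$, so $\mathcal S(G)=2|E(G)|c+\sum_j A_j D_G(j)$ and a minimizer is connected. I would first note that the hypothesis forces at most one player to satisfy $A_v\ge c$ (two such would violate the condition on their own pair), and if such a player exists it must be the maximizer $j^*$. The engine is the elementary distance bound: in any connected $G$ the $\deg_G(j)$ neighbours of $j$ are at distance $1$ and the remaining $N-1-\deg_G(j)$ vertices at distance $\ge 2$, so $D_G(j)\ge 2(N-1)-\deg_G(j)$. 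Summing against the weights $A_j$ and using $\sum_j A_j\deg_G(j)=\sum_{uv\in E(G)}(A_u+A_v)$ gives
\[
\mathcal S(G)\;\ge\; 2(N-1)\Sigma+\sum_{uv\in E(G)}(2c-A_u-A_v)\;=\;2(N-1)\Sigma+\sum_v (c-A_v)\deg_G(v),
\]
and this bound is tight for the star centred at $j^*$, since there $D(j^*)=N-1=2(N-1)-(N-1)$ and every leaf $v$ has $D(v)=2(N-1)-1$. It then remains to minimize $F(G):=\sum_v (c-A_v)\deg_G(v)$ over connected $G$. I would substitute $\deg_G(v)=1+e_v$ with $e_v\ge 0$: connectivity gives $\sum_v e_v=2|E(G)|-N\ge N-2$ while always $e_v\le N-2$, and $F(G)=(Nc-\Sigma)+\sum_v(c-A_v)e_v$. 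The coefficient $c-A_v$ is strictly positive for every $v\ne j^*$ and is the smallest coefficient at $v=j^*$ regardless of the sign of $c-A_{j^*}$; hence the linear functional $\sum_v (c-A_v)e_v$ is minimized by loading all the mass onto $j^*$, i.e. $e_{j^*}=N-2$ and $e_v=0$ otherwise, which is realized exactly by the star at $j^*$. Combining with the displayed inequality (tight for that star) gives $\mathcal S(G)\ge \mathcal S(\text{star at }j^*)$ for every connected $G$; together with the first part this shows the $j^*$-star is the optimal stable structure.

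The step I expect to be delicate is the last part taken as a whole: one must verify that a \emph{single} configuration, the $j^*$-star, simultaneously saturates the distance lower bound \emph{and} minimizes the resulting degree functional, and that this degree optimum is graph-realizable. The subtlety is precisely the sign of $c-A_{j^*}$ (this proposition lives in the regime $c_A>A$, where the naive ``fewer edges and shorter distances are always better'' heuristic can fail because a direct link to $j^*$ may lower the social cost); the substitution $\deg_G(v)=1+e_v$ is what neutralizes it, since the binding cap $e_v\le N-2$ is itself attained by the star.
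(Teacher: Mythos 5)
Your proof is correct, and the optimality half takes a genuinely different route from the paper's. The stability argument is the same in both: removal of any star edge disconnects a node (infinite penalty for both endpoints), and the only candidate additions are leaf--leaf links, where the hypothesis ``$c>A_{ij}$ or $c>A_{ji}$'' kills at least one side of the pairwise-formation condition. For optimality, the paper only computes the social cost of a star with an arbitrary centre $x_0$, observes that it is an affine decreasing function of $A_{x_0}$ (so the best star is centred at $j^*$), and then rules out non-star \emph{stable} optima by arguing that any better configuration would contain a triangle, one of whose links some player would sever --- an argument that is tied to the stability restriction and is somewhat delicate (it does not obviously dispose of triangle-free alternatives such as cycles). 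You instead prove the stronger, unconditional statement that the $j^*$-star minimizes $\mathcal S$ over \emph{all} connected graphs, by combining the pointwise bound $D_G(j)\ge 2(N-1)-\deg_G(j)$ with a small linear program over degree sequences ($\deg_G(v)=1+e_v$, $0\le e_v\le N-2$, $\sum_v e_v\ge N-2$), whose optimum $e_{j^*}=N-2$, $e_v=0$ is graph-realizable precisely by the $j^*$-star; your observation that the hypothesis permits at most one node with $A_v\ge c$, necessarily $j^*$, is exactly what makes all the other LP coefficients strictly positive and the sign of $c-A_{j^*}$ irrelevant. This buys you global efficiency rather than efficiency among equilibria (the claimed ``optimal stable structure'' then follows a fortiori from stability of the star), and it avoids the paper's triangle argument entirely; the paper's computation, in exchange, produces the explicit closed-form social cost of the star, which your tight lower bound $2(N-1)\Sigma+Nc-\Sigma+(c-A_{j^*})(N-2)$ reproduces.
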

\begin{proof}
Clearly, it is not worthy for player either player $i$ or player
$j$ to reduce their distance from 2 to the 1 since either 
\[
\Delta C(i,E+ij)=c-A_{ij}>0
\]

or 
\[
\Delta C(j,E+ij)=c-A_{ji}>0
\]

and the link $(i,j)$ will not be established. It is also not possible
to remove any links without disconnecting the network. This proofs
the stability of the star.

Regarding the optimality of the network structure, a player must have
be connected to at least one node in order to be connected to the
network. With no additional links, the minimal distance to all other
nodes is 2 and the discussion before indicates it is not beneficial
to add extra links to reduce the distance to only one node. The social
cost of a star with $x_{0}$ at its center is 
\begin{eqnarray*}
\sum C(i) & = & c\left(N-1\right)+2(N-1)\sum_{x\neq x_{0}}A_{x}\\
 &  & +(N-1)A_{x_{0}}+\sum_{x\neq x_{0}}A_{x}\\
 & = & c\left(N-1\right)+2(N-1)\sum_{x}A_{x}\\
 &  & -(N-1)A_{x_{0}}-\sum_{x\neq x_{0}}A_{x}
\end{eqnarray*}

where $N$ is the number of players, $d(x,x')=2$ for all $x,x'\neq x_{0}$
and $d(x,x_{0})=1$. The first two terms are constants. In order to
minimize the latter expression, one needs to maximize
\[
(N-1)A_{x_{0}}+\sum_{x\neq x_{0}}A_{x}=(N-2)A_{x_{0}}+\sum_{x}A_{x}
\]

the latter is clearly maximized by choosing $A_{x_{0}}$as maximal.
Hence, the optimal star is a star with $x_{0}$at its center.

Assume the optimal stable structure is not a star. Then, there is
at least three nodes $i,y_{1},y_{2}$ such that $d(i,y_{1})=d(i,y_{2})=d(y_{1},y_{2})=1$
as in the star configuration the link's term is minimal and $d(x,x')=2$
for all $x,x'\neq x_{0}$ and $d(x,x_{0})=1$. However, the above
discussion shows the annihilation of at least one of the links of
the clique $(y_{1},y_{2},i$) is beneficial for at least one of the
players and this structure would not be stable.
\end{proof}

\subsubsection{\label{sub:Pair-wise-equilibria-1}Equilibria's properties}

Here we describe common properties of all pair-wise equilibria. We
start by noting that, unlike the findings of several other studies
\citet{Arcaute2013,5173479,Fabrikant2003,NisanN.RoughgardenT.TardosE.2007},
in our model, at equilibrium, the type-B nodes are not necessarily
organized in trees. This is shown in the next example.

\begin{figure}
\centering{}\includegraphics[width=0.9\columnwidth]{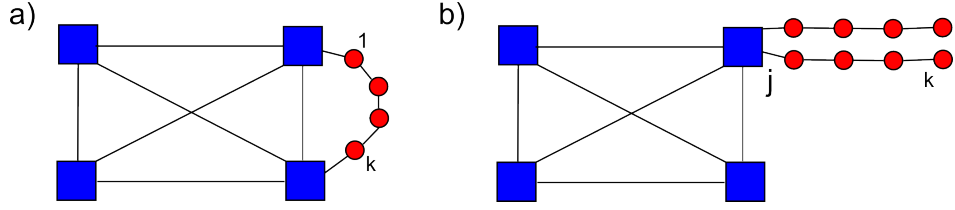}\caption{\label{fig:loop example-2}Non optimal networks. The type-A clique
is in blue squares, the type-B players are in red circles. a) The
network described in Example \ref{example-1}. b) \label{fig:A-poor-equilibrium-2}A
poor equilibrium, as described in the appendix.}
\end{figure}

\begin{example}
\label{example-1}Assume for simplicity that $c_{A}=c_{B}=c$. Consider
a line of length $k$ of type B nodes, $(1,2,3...,k)$ such that $\sqrt{8c}>k+1>\sqrt{2c}$
or equivalently $\left(k+1\right)^{2}<8c<4\left(k+1\right)^{2}$ .
In addition, the links $(j_{1},1)$ and $(j_{2},k)$ exist, where
$j\in T_{A}$, i.e., the line is connected at both ends to different
nodes of the type-A clique, as depicted in Fig \ref{fig:loop example-2}.
We show in the appendix that this is a stabilizable graph.

We now show that this structure is stabilizable. For simplicity, assume
$mod(k-1,4)=0$ ($k$ is odd and $\frac{k-1}{2}$ is odd).

\begin{figure}
\centering{}\includegraphics[width=0.8\columnwidth]{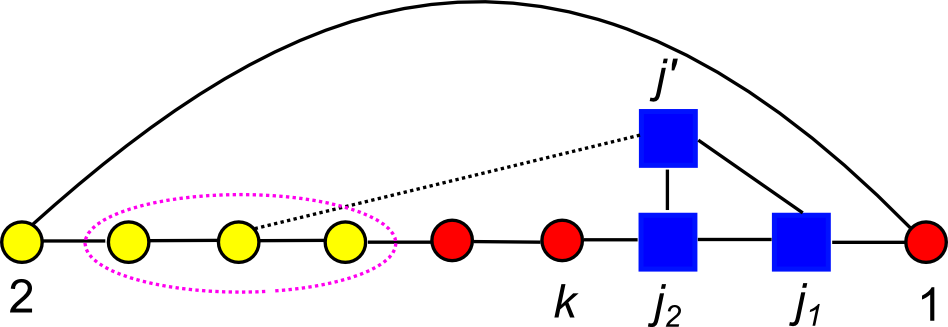}\caption{\label{fig:example-1}A line of $k=7$ nodes. By removing the link
$(1,2)$ only the distances from player 1 to the yellow players are
affected. By establishing the link $(j,4)$ only the distances from
player $j'$ to the players encircled by the purple dashed ellipse
are affected.}
\end{figure}

Any link removal $(x_{1},x_{2})$ in the circle $(j_{1},1...k,j_{2},j_{1})$
will result in a line with nodes $x_{1}$ and $x_{2}$ at its ends
(Fig. \ref{fig:example-1}). The type-B players that have the most
incentive to disconnect a link are either node $1$ or node $k$,
as the type-A nodes will be closest to either one of them after the
link removal (at distances one and two hops, Fig. \ref{fig:example-1}).
Therefore, if players $1$ or $k$ would not deviate, no type-B player
will deviate as well.

W.l.o.g, we discuss node 1. Since $c_{B}<A$, it is not beneficial
for it to disconnect the link $(j_{1},1)$. Assume the link $(1,2)$
is removed. A simple geometric observation shows that the distance
to nodes $\{2,..,\frac{k+3}{2}\}$ is affected, while the distance
to all the other nodes remains intact (Fig. \ref{fig:example-1}).
The mean increase in distance is $\frac{k+1}{2}$ and the number of
affected nodes is $\frac{k+1}{2}$ . However, 
\[
\Delta C(1,E-12)=c-\frac{(k+1)}{2}^{2}<0
\]
 and player $1$ would prefer the link to remain. The same calculation
shows that it is not beneficial for player $j_{1}$ to disconnect
$(j_{1},1)$.

Clearly, if it not beneficial for $j\in T_{A},\, j\neq j_{1},j_{2}$
to establish an additional link to a type-B player then it is not
beneficial to do so for $j_{1}$ or $j_{2}$ as well. The optimal
additional link connecting $j$ and a type-B player is $\mathcal{E}=(j,\frac{k+1}{2})$,
that is, a link to the middle of the ring (Fig. \ref{fig:example-1}).
A similar geometric observation shows that by establishing this link,
only the distances to nodes $\left\{ \frac{k-1}{4},...,\frac{3k+1}{4}\right\} $
are affected (Fig. \ref{fig:example-1}). The reduction in cost is
\[
\Delta C(j,E+\mathcal{E})=c-\frac{(k+1)}{8}^{2}>0
\]

and it is not beneficial to establish the link. 

In order to complete the proof, we need to show that no additional
type-B to type-B links will be formed. By establishing such link,
the distance of at least one of the parties to the type-A clique is
unaffected. The previous calculation shows that by adding such link
the maximal reduction of cost due to shortening the distance to type-B
players is bounded from above by $\frac{(k+1)}{8}^{2}$. Therefore,
as before, no additional type-B to type-B links will be formed.

This completes the proof that this structure is stabilizable.
\end{example}
Next, we bound from below the number of equilibria. For simplicity,
we discuss the case where $c_{A}=c_{B}=c$. We accomplish that by
considering the number of equilibria where the type-B players are
organized in a forest (multiple trees) and the allowed forest topologies.
The following lemma restricts the possible sets of trees in an equilibrium.
Intuitively, this lemma states that we can not have two ``heavy''
trees, ``heavy'' meaning that there is a deep sub-tree with many
nodes, as it would be beneficial to make a shortcut between the two
sub-trees.

\begin{figure}
\centering{}\includegraphics[width=0.6\columnwidth]{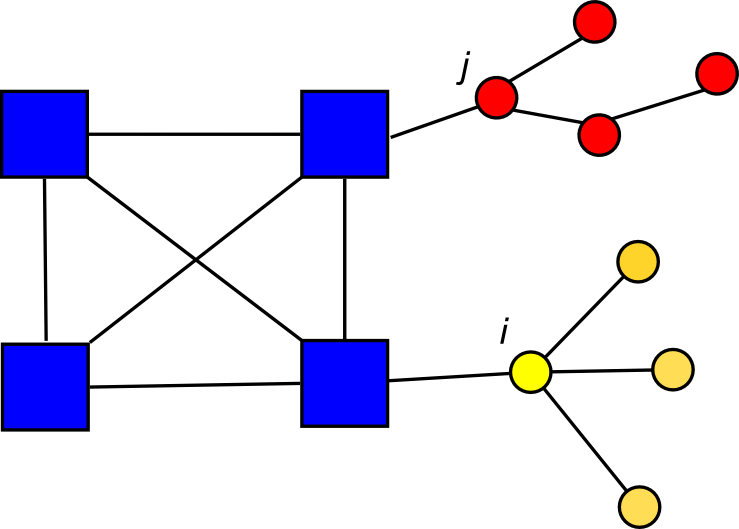}\caption{\label{fig:forest-1}Node $j$ is on the third level of the tree of
formed by starting a BFS from node $i$, as discussed in lemma \ref{lem:subtree-1}.
The forest of type-B nodes is composed of two trees, in yellow and
red (lemma \ref{lem:forest sub-graph-1}). Their roots are $i$ and
$j,$ correspondingly. The maximal depth in this forest is three.}
\end{figure}

\begin{lem}
\label{lem:subtree-1}Assume $c_{A}=c_{B}=c$. Consider the BFS tree
formed starting from node $i.$ Assume that node $j$ is $k$ levels
deep in this tree. Denote the sub-tree of node $j$ in this tree by
$\mathcal{T}_{i}(j)$ (Fig. \ref{fig:forest-1}) In a link stable
equilibrium, the number of nodes in sub-trees satisfy either \textup{$|\mathcal{T}_{i}(j)|<c/k$
}\textup{\emph{or }}\textup{$|\mathcal{T}_{j}(i)|<c/k$.}\end{lem}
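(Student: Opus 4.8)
The plan is to prove the contrapositive: if \emph{both} subtrees were large, the configuration could not be link stable, because the endpoints $i$ and $j$ would both prefer to add the edge $(i,j)$. So suppose, towards a contradiction, that in a link-stable equilibrium we had $|\mathcal{T}_i(j)|\ge c/k$ and $|\mathcal{T}_j(i)|\ge c/k$. Since $j$ lies $k\ge 2$ levels below $i$ in the BFS tree, $d(i,j)=k>1$, so $(i,j)\notin E$ and part (b) of the pairwise-stability definition demands $\Delta C(i,E+ij)>0$ or $\Delta C(j,E+ij)>0$. I will show instead that both increments are non-positive, which is the desired contradiction.

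The estimate I would use is the ``shortcut'' bound in the spirit of Lemma~\ref{lem:shortcut benefit}, read off the BFS tree rooted at $i$. Take any $v\in\mathcal{T}_i(j)$ and let $\ell_v\ge 0$ be the depth of $v$ below $j$ in that tree. A BFS tree realizes shortest paths and the $i$--$v$ tree path passes through $j$, so $d(i,v)=k+\ell_v$. After inserting the edge $(i,j)$, the walk formed by the new edge $i\to j$ followed by the $\ell_v$ (still present) tree edges from $j$ down to $v$ has length $1+\ell_v$; hence the $i$--$v$ distance drops by at least $k-1$. Every distance weight appearing in $C(i)$ is at least $1$ (it is $A>1$ for destinations in $T_A$ and $1$ otherwise), and the only other effect of adding $(i,j)$ on $C(i)$ is the single new link of price $c=c_A=c_B$. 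Summing the per-vertex savings over $\mathcal{T}_i(j)$ gives
\[
\Delta C(i,E+ij)\le c-(k-1)\,|\mathcal{T}_i(j)|,
\]
and the mirror-image computation in the BFS tree rooted at $j$ gives
\[
\Delta C(j,E+ij)\le c-(k-1)\,|\mathcal{T}_j(i)|.
\]

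Feeding in $|\mathcal{T}_i(j)|,|\mathcal{T}_j(i)|\ge c/(k-1)$ makes both right-hand sides non-positive, so neither increment is strictly positive and part (b) of pairwise stability fails for the pair $(i,j)$ --- contradiction. Reversing the implication, in any link-stable equilibrium at least one of $|\mathcal{T}_i(j)|,|\mathcal{T}_j(i)|$ is below $c/(k-1)$, which matches the stated threshold $c/k$ up to a harmless off-by-one in how levels are counted (a vertex at BFS-distance $k$ yields a per-descendant saving of $k-1$). The one point I expect to need genuine care is exactly this accounting: justifying $d(i,v)=k+\ell_v$ on the nose from the BFS structure so that savings are neither over- nor under-counted, and being careful which vertices are charged to $\mathcal{T}_i(j)$ versus $\mathcal{T}_j(i)$ so that nothing is counted twice. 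Everything else --- the vacuous case $k\le 1$, where $(i,j)$ already lies in $E$, and the remark that type-$A$ destinations only make the deviation more attractive --- is routine.
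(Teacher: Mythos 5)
Your argument is the same as the paper's: negate both inequalities, add the shortcut $(i,j)$, and lower-bound the aggregate distance saving over each subtree to conclude that both $\Delta C(i,E+ij)$ and $\Delta C(j,E+ij)$ are negative, contradicting pairwise stability; the observation that the weights $A>1$ on type-A destinations only help, and that distances outside the subtree cannot increase, is also how the paper discards those terms. The one substantive point is the constant, and there your accounting is the correct one while the paper's is not: a descendant $v$ of $j$ at depth $\ell_v$ below $j$ has $d(i,v)=k+\ell_v$ and, after the shortcut, $d'(i,v)\le 1+\ell_v$, so the guaranteed saving per descendant is $k-1$; the paper asserts a saving of $k$ per node (which would require $d'(i,v)=\ell_v$, i.e., a zero-length new edge), and that is how it reaches the stated threshold $c/k$. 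What your (correct) estimate actually yields is the threshold $c/(k-1)$, so the ``harmless off-by-one'' you flag is not a defect of your write-up but an error in the lemma's constant and in the paper's own proof: for $c/k\le|\mathcal{T}_i(j)|,|\mathcal{T}_j(i)|<c/(k-1)$ neither argument produces a contradiction. The weakening is immaterial downstream (Lemma \ref{lem:forest sub-graph-1} invokes this bound with $k=2$ and only needs the order of magnitude), and your proof, with the threshold read as $c/(k-1)$, is complete.
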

\begin{proof}
Assume $|\mathcal{T}_{i}(j)|>c/k$. Consider the change in cost of
player $i$ after the addition of the link $(i,j)$
\begin{eqnarray*}
 &  & \Delta C(i,E+ij)\\
 & = & c+\sum_{x_{\alpha}\in\mathcal{T}i(j)}d'(i,x_{\alpha})(1+\delta_{x_{\alpha},A}(A-1))\\
 &  & +\sum_{x_{\alpha}\notin\mathcal{T}_{i}(j)}d'(i,x_{\alpha})(1+\delta_{x_{\alpha},A}(A-1))\\
 &  & -\sum_{x_{\alpha}\in\mathcal{T}_{i}(j)}d(i,x_{\alpha})(1+\delta_{x_{\alpha},A}(A-1))\\
 &  & -\sum_{x_{\alpha}\notin\mathcal{T}_{i}(j)}d(i,x_{\alpha})(1+\delta_{x_{\alpha},A}(A-1))\\
 & = & c+\sum_{x_{\alpha}\in\mathcal{T}_{i}(j)}\left(d'(i,x_{\alpha})-d(i,x_{\alpha})\right)(1+\delta_{x_{\alpha},A}(A-1))\\
 &  & +\sum_{x_{\alpha}\notin\mathcal{T}_{i}(j)}\left(d'(i,x_{\alpha})-d(i,x_{\alpha})\right)(1+\delta_{x_{\alpha},A}(A-1))\\
 & < & c+\sum_{x_{\alpha}\in\mathcal{T}_{i}(j)}\left(d'(i,x_{\alpha})-d(i,x_{\alpha})\right)\\
 & = & c-k|T_{i}(j,k)|\\
 & < & 0
\end{eqnarray*}

since the distance was shorten by $k$ for every node in the sub-tree
of $j$. 

Therefore, it is beneficial for player $i$ to establish the link.
Likewise, if $|\mathcal{T}_{j}(i)|<c/k$ then it would be beneficial
for player $j$ to establish the link and the link will be established.
Hence, one of the conditions must be violated.
\end{proof}
The following lemma considers the structure of the type-B players'
sub-graph. It builds on the results of lemma \ref{lem:subtree-1}
to reinforce the restrictions on trees, showing that trees must be
shallow and small. 
\begin{lem}
\label{lem:forest sub-graph-1}Assume $c_{A}=c_{B}=c$. If the sub-graph
of type-B nodes is a forest (Fig. \ref{fig:forest-1}), then there
is at most one tree with depth greater than $\sqrt{c/2}$ and there
is at most one tree with more than $c/2$ nodes. The maximal depth
of a tree in the forest is $\sqrt{2c}-1$. Every type-B forest in
which every tree has a maximal depth of $\min\left\{ \sqrt{c/2},\sqrt{c}-3\right\} $
and at most $\min\left\{ c/2,\sqrt{c}\right\} $ nodes is stabilizable.\end{lem}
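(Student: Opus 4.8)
The lemma has a necessary part --- three structural constraints that every stable type-B forest must satisfy --- and a sufficient part, and the plan is to treat them in that order; the common engine is the ``line shortcut'' estimate of Lemma~\ref{lem:shortcut benefit} (and the per-vertex refinement of it already used in the proofs of Lemmas~\ref{lem:The-longest-distance-2} and~\ref{lem:subtree-1}) together with Lemma~\ref{lem:subtree-1} itself. Throughout recall that here $c_A=c_B=c$, that we work in the regime $1<c<A$ so that the type-A vertices form a complete subgraph by Lemma~\ref{lem:The-longest-distance-2}, and that every tree of the forest must touch that clique. \emph{At most one tree with more than $c/2$ vertices.} If two trees $T_1,T_2$ were pendant on clique vertices $j_1,j_2$ at roots $r_1,r_2$, then in the BFS tree rooted at $r_1$ every vertex of $T_2$ hangs below $r_2$, which itself sits at level $k\le 3$ (level $2$ when $j_1=j_2$, using $d(j_1,j_2)\le 1$); hence $|\mathcal T_{r_1}(r_2)|=|T_2|$ and symmetrically $|\mathcal T_{r_2}(r_1)|=|T_1|$, and Lemma~\ref{lem:subtree-1} forces one of them below $c/k\le c/2$ --- a contradiction. (A tree touching the clique several times is only more strongly connected; apply the argument at the attachment nearest the other root.)

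\emph{At most one deep tree, and the global depth cap.} For the first claim suppose two trees have depths $d_1,d_2$ realised by leaves $v_1,v_2$, and insert the chord $(v_1,v_2)$. Every one of the $d_2+1$ ancestors of $v_2$ then moves closer to $v_1$, and these reductions form precisely the arithmetic progression evaluated in Lemma~\ref{lem:shortcut benefit}; they sum to at least $(d_1+1)(d_2+1)$ for $v_1$ (with a further positive contribution from the clique vertices when $d_1\ne d_2$), and symmetrically for $v_2$. When both depths exceed $\sqrt{c/2}$ this exceeds $c$ on both sides, so the chord would already be present, contradicting that the type-B subgraph is a forest. For the global cap, note that among admissible shapes the eccentricity is largest for a path both of whose ends are pendant on the clique: a shape with a single attachment has radius at most $\sqrt c$ (a lone branch of length $\ell$ satisfies $\ell\le 2\sqrt c$ by Lemma~\ref{lem:The-longest-distance-2}, and two branches of length $\ell$ would admit the chord just described), already below the claimed value, while for a bridging path the argument of Example~\ref{example-1} shows that once its length exceeds $\sqrt{8c}-1$ some interior vertex profits from a chord to the clique --- so the radius is at most $\sqrt{2c}-1$.

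\emph{Sufficiency.} Given a forest in which each tree is pendant on a single clique vertex, has depth at most $\min\{\sqrt{c/2},\sqrt c-3\}$ and at most $\min\{c/2,\sqrt c\}$ vertices, I would verify pairwise stability by excluding every deviation. No clique edge $(j,j')$ is deleted, since that raises $d(j,j')$ from $1$ to $2$ while saving only $c<A$, a net loss; no pendant edge and no internal tree edge is deleted, since that disconnects a subtree; no clique edge is added, as the clique is already complete. For a candidate new edge $(j,u)$ with $j\in T_A$ and $u$ at depth $\alpha$ in a tree $T$, the only distances that drop for $j$ are those to the subtree of $u$ (each by the constant $\alpha+1$) and to the shallow ancestors of $u$ (an arithmetic-progression term of order $\mathrm{depth}(T)^{2}$); optimising $(\alpha+1)\,|\mathcal T(u)|$ over the position of $u$ yields at most about $\frac14(|T|+\mathrm{depth}(T))^{2}$, and under the stated bounds the total decrease stays strictly below $c$, so $j$ refuses. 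For a candidate new type-B--type-B edge $(u,w)$ the same quadratic estimate bounds the decrease of the shallower-or-smaller endpoint, again below $c$, so it refuses; hence nothing is added or removed.

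\emph{Where the difficulty lies.} Step~1 and the global cap are quick. The real bookkeeping is the first claim of Step~2 and the sufficiency step: one has to optimise over where the inserted chord sits and over the shapes of the affected subtrees, using the exact $\frac14 m^{2}$-type estimate of Lemma~\ref{lem:shortcut benefit} rather than a crude product bound, and it is exactly this optimisation that fixes the constants $\sqrt{c/2}$, $\sqrt c$ and $\sqrt c-3$ and forces the two $\min$'s. A secondary nuisance is handling trees that meet the clique more than once, and the small-$c$ range in which the $\min$'s make the sufficient condition (nearly) vacuous; beyond that, everything reduces to the arithmetic of $1+2+\cdots+m$.
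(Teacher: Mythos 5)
Your decomposition and toolkit match the paper's: the ``at most one large tree'' claim via Lemma~\ref{lem:subtree-1} applied to the two roots (which sit at BFS level $k\ge 2$ of one another, so $c/k\le c/2$), and the sufficiency claim via a product bound of the form (size of the far subtree)$\,\times\,$(per-vertex distance saving)$\,<c$. On sufficiency you are in fact more complete than the paper, whose proof only rules out cross-tree type-B links and silently omits link removals, type-A--type-B additions and intra-tree chords; your case analysis and the estimate $(\alpha+1)\,|\mathcal T(u)|\le\frac14\left(|T|+\mathrm{depth}(T)\right)^2<c$ close these cases cleanly. You also supply an argument for the $\sqrt{2c}-1$ depth cap (via the $\sqrt{8c}$ instability threshold for bridging paths from Example~\ref{example-1}), a claim the paper states but never proves.

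The one step that does not close as written is ``at most one tree of depth $>\sqrt{c/2}$.'' Inserting the chord $(v_1,v_2)$ between the two deepest leaves saves, for $v_1$, an arithmetic progression summing to $(d_1+1)(d_2+1)$ plus lower-order terms; with $d_1,d_2$ just above $\sqrt{c/2}$ this is only about $c/2+\sqrt{2c}+1$, which is \emph{less} than the link cost $c$ once $c\ge 12$, so the chord is not profitable and no contradiction results. The threshold your computation actually supports is ``depth $>\sqrt{c}-1$,'' not $\sqrt{c/2}$. Be aware that the paper's own one-line proof of this claim has the identical factor-$\sqrt{2}$ shortfall: it compares the leaf-to-leaf distance $2\sqrt{c/2}+1$ against the $2\sqrt{c}$ bound of Lemma~\ref{lem:The-longest-distance-2}, and $\sqrt{2c}+1<2\sqrt{c}$ already for $c\ge 3$, so no contradiction is obtained there either. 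You have therefore reproduced, by a more explicit route, a constant the paper does not actually justify; to make the step sound you must either weaken the depth threshold to order $\sqrt{c}$ or exhibit a more profitable deviation than the leaf-to-leaf chord.
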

\begin{proof}
Assume there are two trees $S_{1},S_{2}$ that have depth greater
than $\sqrt{c/2}$ . The distance between the nodes at the lowest
level is greater than $2\sqrt{c/2}+1$ as the trees are connected
by at least one node in the type-A clique, $d(i,j)\geq2$ (Fig. \ref{fig:forest-1}).
This contradicts with Lemma \ref{lem:The-longest-distance-2}. 

Assume there are two trees $S_{1},S_{2}$ with roots $i,j$ that have
more than $c/2$ nodes. In the BFS tree that is started from node
$i$ node $j$ is at least in the second level (as they are connected
by at least one node in the type-A clique). This contradicts with
Lemma \ref{lem:subtree-1}.

Finally, following the footsteps of Lemma \ref{lem:subtree-1} proof,
consider two trees $\mathcal{T}_{i}$ and $\mathcal{T}_{j}$, with
corresponding roots $i$ and $j$ (i.e., nodes $i$ and $j$ have
a direct link with the type-A clique). Consider a link $(x,y)$, where
$x\in\mathcal{T}_{i}$ and $y\in\mathcal{T}_{j}$. At least one of
them does not reduce its distance to the type-A clique by establishing
this link. W.l.o.g, we'll assume this is true for player $x$. Therefore,
\begin{eqnarray*}
 &  & \Delta C(x,E+xy)\\
 & = & c+\sum_{z\in\mathcal{T}_{j}}\left(d'(z,x)-d(z,x)\right)\\
 & > & 0
\end{eqnarray*}

as the maximal reduction in distance is $\sqrt{c}$ and the maximal
number of nodes in $\mathcal{T}_{j}$ is also $\sqrt{c}$. Therefore,
it is not beneficial for player $x$ to establish this link, and the
proof is completed.
\end{proof}
Finally, the next proposition provides a lower bound on the number
of link-stable equilibria by a product of $|T_{A}|$ and a polynomial
with a high degree ($\approx2^{\sqrt{c}}$) in $|T_{B}|$.
\begin{prop}
Assume $c_{A}=c_{B}=c$. The number of link-stable equilibria in which
the sub-graph of $T_{B}$ is a forest is at least $o(|T_{A}||T_{B}|^{N_{c}})$,
where $N_{c}=o(2^{\frac{\sqrt{c}}{2}}/\sqrt{c})$ is a function of
$c$ only. Therefore, the number of link-stable equilibria is at least
\textup{$o(|T_{A}||T_{B}|^{N_{c}}).$}\end{prop}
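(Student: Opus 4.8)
The plan is to prove this by exhibiting an explicit, readily enumerated sub-family of link-stable equilibria and lower-bounding its size. The conceptually hard part -- that a given configuration is actually pairwise stable -- is already handled for us: by Lemma~\ref{lem:forest sub-graph-1}, together with stability of the type-A clique (Lemma~\ref{lem:The-longest-distance-2}), the configuration consisting of the type-A clique with the type-B nodes arranged in a forest hanging off the clique is stabilizable as soon as every tree of that forest has depth at most $D_c:=\min\{\sqrt{c/2},\sqrt{c}-3\}$ and at most $S_c:=\min\{c/2,\sqrt{c}\}$ nodes. Hence it suffices to lower-bound the number of labelled forests on $T_B$ of this restricted kind, say with every tree joined to one common node $j\in T_A$.

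First I would fix the relevant combinatorial quantity: let $\mathcal{G}_c$ be the set of isomorphism classes of rooted trees having between $2$ and $S_c$ nodes and depth at most $D_c$, and put $N_c:=|\mathcal{G}_c|$. This is a function of $c$ alone, it is non-empty once $c$ is moderately large (a single edge lies in $\mathcal{G}_c$ since $1\le D_c$ and $2\le S_c$), and a standard asymptotic enumeration of rooted trees under a joint bound on size and height gives $N_c=o\!\left(2^{\sqrt{c}/2}/\sqrt{c}\right)$; if only the qualitative dependence on $c$ is needed this can be replaced by the crude estimate $N_c\le 2^{S_c-1}$.

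Next I would build the family and count it. Fix the anchor $j\in T_A$ ($|T_A|$ choices). Enumerate $\mathcal{G}_c=\{\tau_1,\dots,\tau_{N_c}\}$, choose pairwise disjoint vertex sets $V_1,\dots,V_{N_c}\subseteq T_B$ with $|V_\ell|$ equal to the number of nodes of $\tau_\ell$, impose on each $V_\ell$ a labelled tree isomorphic to $\tau_\ell$, and let all remaining type-B nodes be singletons; every component (including the singletons) is joined to $j$. In the regime $|T_B|\gg 1$ the constant $N_cS_c$ is negligible, so at each step at least $|T_B|-N_cS_c\ge |T_B|/2$ type-B nodes remain available and the number of admissible placements of $\tau_\ell$ is at least $\binom{|T_B|/2}{|V_\ell|}\ge |T_B|/(2S_c)$. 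Distinct choices of the structured tuple $(V_1,\dots,V_{N_c})$ give distinct labelled graphs, because the $\tau_\ell$ are pairwise non-isomorphic and each has at least two nodes, so each $V_\ell$ is recoverable from the graph as its unique component of shape $\tau_\ell$. By the first paragraph every such graph is a link-stable equilibrium whose $T_B$-subgraph is a forest, so for each $j$ we obtain at least $\bigl(|T_B|/(2S_c)\bigr)^{N_c}$ equilibria; summing over the $|T_A|$ anchors yields at least $|T_A|\,\bigl(|T_B|/(2S_c)\bigr)^{N_c}$ link-stable forest equilibria, which is the stated bound, and a fortiori a lower bound on the total number of link-stable equilibria.

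I expect the main difficulty to be purely in the bookkeeping rather than in any conceptual step: one must check that the ``one copy of each admissible shape plus leftover singletons'' forest genuinely satisfies the hypotheses of Lemma~\ref{lem:forest sub-graph-1} (it does -- singletons meet the size and depth bounds trivially) and that the resulting graphs are pairwise distinct (immediate from pairwise non-isomorphism of the $\tau_\ell$ and $j$ being recoverable as the unique clique node with type-B neighbours), and one must carry through the asymptotics of $N_c$ to obtain the precise $o\!\left(2^{\sqrt{c}/2}/\sqrt{c}\right)$ form. The first two are one-liners; the enumeration of $N_c$ is the only genuinely technical point, and it -- together with the standing hypothesis $|T_B|\gg 1$ that keeps $|T_B|-N_cS_c$ comparable to $|T_B|$ -- is where the care is needed.
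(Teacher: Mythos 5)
Your proposal is correct and takes essentially the same route as the paper: both rest on the stabilizability of type-B forests whose trees are shallow and small (Lemma~\ref{lem:forest sub-graph-1}), and both then count forests assembled from the $N_c$ admissible tree shapes anchored at a clique node to obtain the $|T_A|\,|T_B|^{N_c}$ bound. Your one-labelled-copy-of-each-shape placement is simply a cleaner, more explicitly injective version of the paper's $\binom{|T_B|+N_c}{N_c}$ multiset count, and your handling of the $N_c$ asymptotics is no looser than the paper's own.
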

\begin{proof}
For simplicity, we consider the case where $c\gg1$ and count the
number of different forests that are composed of trees up to depth
$\sqrt{c/2}$ and exactly $\sqrt{c}$ nodes. Let's define the number
of different trees by $N_{c}$. Note that $N_{c}$ is independent
of $|T_{B}|.$ The number of different forests of this type is bound
from below by the expression $\left(\begin{array}{c}
|T_{B}|+N_{c}\\
N_{c}
\end{array}\right)$. Using Striling's approximation $\left(\begin{array}{c}
|T_{B}|+N_{c}\\
N_{c}
\end{array}\right)=o(|T_{B}|^{N_{c}})$. The number $N_{C}$ can be bounded in a similar fashion by $\left(\begin{array}{c}
\left\lfloor \sqrt{c}\right\rfloor \\
\left\lfloor \sqrt{c/2}\right\rfloor 
\end{array}\right)=o(2^{\frac{\sqrt{c}}{2}}/\sqrt{c})$, which is the number of trees with $\sqrt{c}$ elements, depth $\sqrt{c/2}$
and only one non-leaf node at each level of tree. 

Each tree can be connected to either one of the type-A nodes, and
therefore the number of possible configurations is at least $o(|T_{A}||T_{B}|^{N_{c}})$
. 
\end{proof}
To sum up, while there are many equilibria, in all of them nodes cannot
be too far apart, i.e., a small-world property. Furthermore, the trees
formed are shallow and are not composed of many nodes.

\subsubsection{\label{sub:PoA and PoS-1}Price of Anarchy \& Price of Stability }

As there are many possible link-stable equilibria, a discussion of
the price of anarchy is in place. First, we explicitly find the optimal
configuration. Although we establish a general expression for this
configuration, it is worthy to also consider the limiting case of
a large network, $|T_{B}|\gg1,|T_{A}|\gg1$. Moreover, typically,
the number of major league players is much smaller than the other
players, hence we also consider the limit $|T_{B}|\gg|T_{A}|\gg1$.

\begin{figure}
\centering{}\includegraphics[width=0.9\columnwidth]{inkscape/optimal_solution_merged}\caption{\label{fig:The-optimal-solution-1} \label{fig:The-optimal-state - monetary transfers-1}The
optimal solution, as described in Lemma \ref{lem:optimal solution-1}.
If $\left(A+1\right)/2<c$ the optimal solution is described by a),
otherwise by b). When monetary transfers (section \ref{sec:Monetary-transfers})
are allowed, both configurations are stabilizable. Otherwise, only
a) is stabilizable.}
\end{figure}

\begin{prop}
\label{lem:optimal solution-1}Consider the network where the type
$B$ nodes are connected to a specific node $j\in T_{A}$ of the type-A
clique. The social cost in this stabilizable network (Fig. \ref{fig:The-optimal-solution-1}(a))
is 
\begin{eqnarray*}
\mathcal{S} & = & 2|T_{B}|\left(|T_{B}|-1+c+\left(A+1\right)(|T_{A}|-1/2)\right)+|T_{A}|\left(|T_{A}|-1\right)\left(c_{A}+A\right).
\end{eqnarray*}

Furthermore, if $|T_{B}|\gg1,|T_{A}|\gg1$ then, omitting linear terms
in $|T_{B}|,|T_{A}|$, 
\[
\mathcal{S}=2|T_{B}|(|T_{B}|+\left(A+1\right)|T_{A}|)+|T_{A}|^{2}\left(c+A\right).
\]
Moreover, if $\frac{A+1}{2}\le c$ then this network structure is
socially optimal and the price of stability is $1$, otherwise the
price of stability is
\[
PoS=\frac{2|T_{B}|(|T_{B}|+\left(A+1\right)|T_{A}|)+|T_{A}|^{2}\left(c_{A}+A\right)}{2|T_{B}|\left(|T_{B}|+\left(\frac{A+1}{2}+c\right)|T_{A}|\right)+|T_{A}|^{2}\left(c_{A}+A\right)}.
\]

Finally, if \textup{$|T_{B}|\gg|T_{A}|\gg1$, }\textup{\emph{then
the price of stability is asymptotically $1$.}}
\end{prop}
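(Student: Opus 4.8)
The plan is to prove the four assertions in turn: that the displayed network is pair-wise stable, that its social cost has the stated closed form, that it is socially optimal precisely when $\frac{A+1}{2}\le c$ (with the displayed price of stability otherwise), and finally the asymptotics.

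\textbf{Stabilizability.} First I would verify immunity to unilateral link removal: every type-$B$ node is a leaf attached to the fixed clique node $j$, so deleting a $(\text{type-}B,\text{type-}A)$ edge disconnects a player and is never profitable, while the type-$A$ subgraph is a clique and hence stable by Lemma~\ref{lem:The-longest-distance-2} (using $1<c_A<A$). For immunity to link addition: a new type-$B$--type-$A$ edge $(i,j')$ with $j'\ne j$ only lowers $d(i,j')$ from $2$ to $1$, and the type-$A$ endpoint's cost change is $c_A-1>0$; a new type-$B$--type-$B$ edge lowers a single distance from $2$ to $1$ and changes each endpoint's cost by $c_B-1>0$. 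In either case at least one endpoint strictly loses, so pair-wise stability condition (b) holds. Thus the network is stabilizable.

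\textbf{Social cost.} This is a bookkeeping step: write $\mathcal S=\sum_i C(i)$ as the sum of (a) the clique's edge-incidence cost $c_A|T_A|(|T_A|-1)$ and internal distance cost $A|T_A|(|T_A|-1)$, since all intra-clique distances equal $1$; (b) the type-$B$ edge-incidence cost $(c_A+c_B)|T_B|=2c|T_B|$; (c) the type-$B$ mutual distance cost $2|T_B|(|T_B|-1)$, since all such distances equal $2$; and (d) the type-$A$--type-$B$ distance cost, which is $2(A+1)|T_B|\bigl(|T_A|-\tfrac12\bigr)$ because each type-$B$ node is at distance $1$ from $j$ and $2$ from the other $|T_A|-1$ clique nodes. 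Summing gives $\mathcal S=2|T_B|\bigl(|T_B|-1+c+(A+1)(|T_A|-\tfrac12)\bigr)+|T_A|(|T_A|-1)(c_A+A)$, and dropping the terms linear in $|T_A|,|T_B|$ yields the large-network form.

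\textbf{Optimality and price of stability.} I would argue in two stages. Stage one: in any socially optimal graph the type-$A$ nodes form a clique, since inserting a missing type-$A$--type-$A$ edge shortens at least two distance terms by at least one each while adding only $2c_A$ to the incidence cost, for a net change at most $2c_A-2A<0$. Stage two: conditioning on the type-$A$ clique, each type-$B$ node needs degree $\ge1$; attaching it as a single leaf of a clique node minimizes its incidence cost, puts it at distance $1$ from that node and $2$ from everyone else (the best a single edge can do), and by symmetry the choice of clique node is immaterial. Any extra edge incident to a type-$B$ node $i$ is either $(i,j')$ to a second clique node, with social-cost change $c_A+c_B-(A+1)=2c-(A+1)$, or a type-$B$--type-$B$ edge, with change $\ge 2c_B-2>0$; moreover any non-leaf type-$B$ structure only lengthens distances. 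Hence when $\frac{A+1}{2}\le c$ the marginal cost $2c-(A+1)\ge0$ of every extra type-$B$--type-$A$ edge makes the star-of-leaves configuration socially optimal, so $\mathrm{PoS}=1$. When $\frac{A+1}{2}>c$ the optimum instead attaches every type-$B$ node to every clique node, making all type-$A$--type-$B$ distances $1$; recomputing its social cost as in the bookkeeping step gives, up to linear terms, $2|T_B|\bigl(|T_B|+(\tfrac{A+1}{2}+c)|T_A|\bigr)+|T_A|^2(c_A+A)$, and dividing the stable-network cost by this value gives the displayed $\mathrm{PoS}$. Finally, when $|T_B|\gg|T_A|\gg1$ both the numerator and denominator of that ratio are dominated by the common term $2|T_B|^2$, so $\mathrm{PoS}\to1$.

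\textbf{Main obstacle.} Everything except the optimality claim is arithmetic; the delicate point is ruling out \emph{topologically different} optimal graphs once the type-$A$ clique is fixed --- i.e., being certain that no arrangement of the type-$B$ nodes using longer paths, shared attachment points, or type-$B$--type-$B$ shortcuts beats the single-leaf configuration. The exchange/marginal-cost computations above are what close this gap, and they are the only step that needs genuine care.
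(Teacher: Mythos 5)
Your proposal is correct and follows essentially the same route as the paper's proof: stability via the clique lemma and the $c-1>0$ marginal-link argument, the same term-by-term bookkeeping of the social cost, optimality via ``type-A clique first, then marginal cost $2c-(A+1)$ of each extra type-B--type-A edge,'' and the same PoS ratio and $2|T_B|^2$-domination asymptotics. You even state the marginal social-cost change with the correct sign, $2c-(A+1)$, where the paper's text has a typo ($2c+1+A$).
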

\begin{proof}
This structure is immune to removal of links as a disconnection of
a $(type-B,type-A)$ link will disconnect the type-B node, and the
type-A clique is stable (lemma \ref{lem:The-longest-distance-2}).
For every player $j$ and $i\in T_{B}$, any additional link $(i,j)$
will result in $\Delta C(j,E+ij)\geq c_{B}-1>0$ since the link only
reduces the distance $d(i,j)$ from 2 to 1. Hence, player $j$ has
no incentive to accept this link and no additional links will be formed.
This concludes the stability proof. 

We now turn to discuss the optimality of this network structure. First,
consider a set of type-A players. Every link reduce the distance of
at least two nodes by at least one, hence the social cost change by
introducing a link is negative, since $2c_{A}-2A<0$. Therefore, in
any optimal configuration the type-A nodes form a complete graph.
The other terms in the social cost are due to the inter-connectivity
of type-B nodes and the type-A to type-B connections. As $deg(i)=1$
for all $i\in T_{B}$ the cost due to link's prices is minimal. Furthermore,
$d(i,j)=1$ and the distance cost to node $j$ (of type A) is minimal
as well. For all other nodes $j'$, $d(i,j')=2$. 

Assume this configuration is not optimal. Then there is a \emph{topologically
different} configuration in which there exists an additional node
$j'\in T_{A}$ for which $d(i,j')=1$ for some $i\in T_{B}$. Hence,
there's an additional link $(i,j)$. The social cost change is $2c+2+\delta_{x_{\alpha},A}(A-1)$
. Therefore, if $\frac{A+1}{2}\le c$ this link reduces the social
cost. On the other hand, if $\frac{A+1}{2}>c$ every link connecting
a type-B player to a type-A player improves the social cost, although
the previous discussion show these link are unstable. In this case,
the optimal configuration is where all type-B nodes are connected
to all the type-A players, but there are no links linking type-B players.
This concludes the optimality proof.

The cost due to inter-connectivity of type A nodes is

\[
c_{A}|T_{A}|\left(|T_{A}|-1\right)+A|T_{A}|\left(|T_{A}|-1\right)=|T_{A}|\left(|T_{A}|-1\right)\left(c_{A}+A\right).
\]

The first expression is due to the cost of $|T_{A}|$ clique's links
and the second is due to distance (=1) between each type-A node. The
distance of each type B nodes to all the other nodes is exactly 2,
except to node $j$, to which its distance is 1. Therefore the social
cost due to type B nodes is 
\begin{align*}
 & 2|T_{B}|(|T_{B}|-1)+2c_{B}|T_{B}|+2\left(A+1\right)|T_{B}|\left(\left(|T_{A}|-1\right)+\left(A+1\right)+2\left(A+1\right)(|T_{A}|-1)\right)\\
= & 2|T_{B}|\left(|T_{B}|-1+c_{B}+\left(A+1\right)(|T_{A}|-1/2)\right).
\end{align*}

The terms on the left hand side are due to (from left to right) the
distance between nodes of type B, the cost of each type-B's single
link, the cost of type-B nodes due to the distance (=2) to all member
of the type-A clique bar $j$ and the cost of type $B$ nodes due
to the distance (=1) to node $j$. The social cost is 
\begin{eqnarray*}
\sum C(i) & = & 2|T_{B}|\left(|T_{B}|-1+c+\left(A+1\right)(|T_{A}|-1/2)\right)+|T_{A}|\left(|T_{A}|-1\right)\left(c_{A}+A\right).
\end{eqnarray*}

To complete the proof, note that if $\frac{A+1}{2}>c$ the latter
term in the social cost of the optimal (and unstable) solution is
\[
2|T_{B}|(|T_{B}|-1)+2c|T_{B}|\left(1+|T_{A}|\right)+\left(A+1\right)|T_{B}||T_{A}|=2|T_{B}|\left(|T_{B}|-1+\left(\frac{A+1}{2}+c\right)|T_{A}|\right).
\]

As the number of links is $|T_{B}|\left(1+|T_{A}|\right)$ and the
distance of type-B to type-A nodes is 1. The optimal social case is
then
\[
2|T_{B}|\left(|T_{B}|-1+\left(\frac{A+1}{2}+c\right)|T_{A}|\right)+|T_{A}|\left(|T_{A}|-1\right)\left(c_{A}+A\right).
\]

Considering all quantities in the limit $|T_{B}|\gg1,|T_{A}|\gg1$
completes the proof.\end{proof}

Next, we evaluate the price of anarchy. The social cost in the stabilizable
topology presented in Fig \ref{fig:A-poor-equilibrium-2}, composed
of a type-A clique and long lines of type-B players, is calculated
in the appendix. The ratio between this value and the optimal social
cost constitutes a lower bound on the price of anarchy. An upper bound
is obtained by examining the social cost in any topology that satisfies
Lemma \ref{lem:The-longest-distance-2}. The result in the large network
limit is presented by the following proposition.

Next, we evaluate the price of anarchy. In order to do that, we use
the following two lemmas. The first lemma evaluates a lower bound
by considering the social cost in the stabilizable topology presented
in Fig \ref{fig:A-poor-equilibrium-2}(b), composed of a type-A clique
and long lines of type-B players. Later on, an upper bound is obtained
by examining the social cost in any topology that satisfies Lemma
\ref{lem:The-longest-distance-2}. 

For simplicity, in the following lemma we assume that $|T_{B}|=\min\left\{ \left\lfloor \sqrt{4c_{A}}\right\rfloor ,\left\lfloor \sqrt{4c_{B}/5}\right\rfloor \right\} m$
where $m\in\mathcal{N}$ 
\begin{lem}
\label{lem:poor eq}Consider the network where the type B nodes are
composed of $m$ long lines of length $k=\min\left\{ \left\lfloor \sqrt{3c_{A}}\right\rfloor ,\left\lfloor \sqrt{4c_{B}/5}\right\rfloor \right\} $,
and all the lines are connected at $j\in T_{A}$. The total cost in
this stabilizable network is 
\begin{eqnarray*}
S & = & |T_{A}|\left(|T_{A}|-1\right)\left(c_{A}/2+A\right)+2c_{B}|T_{B}|+\left(A+1\right)|T_{B}|\left(|T_{A}|-1\right)\left(k+3\right)/2\\
 &  & +|T_{B}|\left(\left(A+1\right)\left(k+1\right)/2+2k-4\right)+2|T_{B}|^{2}(k+2)^{2}-2m
\end{eqnarray*}

if $|T_{B}|\gg1,|T_{A}|\gg1$ then 
\begin{eqnarray*}
S & = & \left(A+1\right)|T_{A}|\left|T_{B}\right|o(c)\\
 &  & +|T_{A}|^{2}\left(c+A\right)+\left|T_{B}\right|^{2}o(c)
\end{eqnarray*}
\end{lem}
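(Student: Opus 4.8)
The plan is to argue in two independent stages: first that the configuration is pairwise-stable, then to evaluate its social cost. Throughout, the network in question is the type-A clique together with $m$ internally vertex-disjoint paths of $k$ type-B nodes, all sharing their head vertex with one clique node $j$ (see Fig.~\ref{fig:A-poor-equilibrium-2}); write $|T_B|=mk$ and index the nodes of each path $1,\dots,k$ outward from $j$.

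\textbf{Stability.} First I would check that no link can be profitably removed: deleting any interior edge of a path, or the edge joining a path's head to $j$, disconnects a set of nodes and so raises the deviator's cost to infinity, while the type-A clique is stable because $1<c_A<A$ (deleting a clique edge changes a distance from $1$ to $2$, a cost change $A-c_A>0$; cf.~Lemma~\ref{lem:The-longest-distance-2}). It then remains to rule out link additions, and by pairwise stability it is enough to produce, for each absent edge, an endpoint that does not strictly gain. I would split into the same three cases as in Example~\ref{example-1} and Proposition~\ref{lem:optimal solution-1}, bounding the relevant reduction by a line-shortcut estimate in the spirit of Lemma~\ref{lem:shortcut benefit}. (i) A clique node $j'$ joining the $p$-th node of a path only shortens distances \emph{within} that path (the chord cannot reach $j$ or another path more cheaply), and optimizing over $p$ gives a maximal reduction $\approx k^2/3$, attained near $p=2k/3$; since $k\le\lfloor\sqrt{3c_A}\rfloor$ this stays below $c_A$, so $j'$ declines. (ii) For a chord between two nodes of one path, the endpoint nearer the head gains only the within-path reduction, at most $\approx k^2/4<c_B$, and declines. (iii) For a chord between nodes of two \emph{different} paths, the one subtlety is that both paths pass through $j$, so such a chord can also shorten distances to all the remaining paths; but that extra benefit accrues only to whichever endpoint lies strictly deeper in its own path, so the other endpoint's gain is a pure path-reduction bounded by $\approx k^2$, which is $<c_B$ because $k\le\lfloor\sqrt{4c_B/5}\rfloor$. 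Hence no chord is formed and the graph is stabilizable.

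\textbf{Social cost.} The pairwise distances are elementary: $1$ between two clique nodes; $p$ from $j$ and $p+1$ from any other clique node to the $p$-th node of a path; $|p-q|$ between the $p$-th and $q$-th nodes of a single path; and $p+q$ between the $p$-th node of one path and the $q$-th node of another, the route passing through $j$. I would sum the distance term of the cost over all ordered pairs --- weight $A$ to a type-A target, weight $1$ otherwise --- add the degree-weighted link cost ($c_A$ on the $\binom{|T_A|}{2}$ clique edges and the $m$ head edges, $c_B$ on the interior path edges, counted with both endpoints), and collect the arithmetic and power sums $\sum p$, $\sum p^2$, $\sum_{p<q}(q-p)$, $\binom{m}{2}$; this reproduces the closed form in the statement. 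For the large-network limit I would keep only the quadratic contributions: the clique contributes $|T_A|^2(c+A)$, the clique-to-path distances contribute $(A+1)|T_A||T_B|\cdot\Theta(k)$, the inter-path distances contribute $|T_B|^2\cdot\Theta(k)$, while the intra-path distances and the link-cost term are linear in $|T_B|$ and are dropped. Since $k=\Theta(\sqrt c)$, each surviving $\Theta(k)$ is $o(c)$, which is precisely the asymptotic expression claimed.

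\textbf{Main obstacle.} The cost evaluation is routine bookkeeping. The real work is the stability half: one must make the chord estimates tight enough that the stated $k$ actually suffices. For the clique-to-path case this means locating the optimal attachment point $p\approx2k/3$ (a cruder end-to-end bound would be too weak to yield $\sqrt{3c_A}$), and for the inter-path case it means arguing carefully --- because all $m$ paths are glued at $j$ --- that the endpoint not enjoying the ``long-range'' savings always has reduction below $c_B$ for $k\le\lfloor\sqrt{4c_B/5}\rfloor$. Those two optimizations over chord endpoints are exactly what pin down the form of $k$.
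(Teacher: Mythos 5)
Your proposal is correct and follows essentially the same route as the paper: the same three-way case split for ruling out chords (clique-to-line, intra-line, inter-line), the same optimization locating the worst clique-to-line attachment near $p=2k/3$ with reduction $\approx k^{2}/3\le c_{A}$, and the same distance bookkeeping for the cost and its $|T_A|^2(c+A)+(A+1)|T_A||T_B|\Theta(k)+|T_B|^2\Theta(k)$ asymptotics. The one place you genuinely diverge is the inter-line chord, and there your version is the more careful one: the paper bounds the gain of the \emph{line-end} player $x_k$ from the chord $(x_k,y_{\lfloor(k+1)/2\rfloor})$ while counting only its distances to the other line, overlooking that for $k\ge 4$ this chord also shortens $x_k$'s route to $j$ and hence to the whole clique and every other line (a potentially huge, $A|T_A|$- and $m$-dependent saving that can make $x_k$ want the link). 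You instead bound the gain of the endpoint that is \emph{not} strictly deeper, whose savings really are confined to the other line's nodes and are at most $\approx k^{2}\le 4c_{B}/5<c_{B}$, and then invoke the bilateral formation rule so that this single refusal kills the chord. That is exactly the repair the paper's argument needs, so your ``main obstacle'' paragraph correctly identifies where the work lies. (Minor quibbles: your intra-line bound should be $\approx k^{2}/3$ rather than $k^{2}/4$ — harmless since $k^{2}/3\le c_{A}\le c_{B}$ — and the constant $4/5$ in the definition of $k$ is what makes the paper's $5k^{2}/4$ estimate close, not your $k^{2}$ one.)
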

\begin{proof}
For simplicity, we assume $c_{B}\leq20c_{A}$. First, for the same
reason as in Prop. \ref{lem:optimal solution-1}, this network structure
is immune to removal of links. Consider $j'\in T_{A},j'\neq j$. Let
us observe the chain $(j,1,2,3,...k)$ where $1,2,3...k\in T_{B}$. 

By establishing a link to some node $k\geq x\geq1$, the change in
cost of player $j'$ is
\[
\Delta C(j',E+j'k)=c_{A}-\sum_{i=1}^{k}\left(d(i,j')-d'(i,j')\right).
\]

Note that this distance is non-zero only for a node $i$ that satisify
\[
i>(x-i)+1
\]

or $i<\left(x+1\right)/2$. The maximal reduction in distance is bounded
by noting that the optimal link is to node $x$ such that $2k/3+1\geq x\geq2k/3-1$.
Hence,
\begin{eqnarray*}
\sum_{i=\left\lceil \left(x+1\right)/2\right\rceil }^{k}\left(d(i,j')-d'(i,j')\right) & \leq & \frac{\left(\left\lceil \left(x+1\right)/2\right\rceil +2+k\right)\left(k-\left\lceil \left(x+1\right)/2\right\rceil \right)}{2}-2\sum_{i=1}^{k-x-1}i\\
 & \leq & \frac{\left(x/2+1+k\right)\left(k-x/2\right)}{2}-(k-x)\left(k-x-2\right)\\
 & \leq & \frac{\left(4k/3+1\right)\left(2k/3+1\right)}{2}-(k-2k/3-1)\left(k-2k/3-3\right)\\
 & \leq & 4k^{2}/9+4k/3+1-(k/3-1)\left(k/3-3\right)\\
 & \leq & 4k^{2}/9+4k/3+1-k^{2}/9-4k/3-4\\
 & \leq & k^{2}/3-3
\end{eqnarray*}

and we have, 
\begin{eqnarray*}
\Delta C(j',E+j'k) & = & c_{A}-k^{2}/3+3\\
 & > & c_{A}-\frac{k^{2}}{3}>0
\end{eqnarray*}
by lemma \ref{lem:shortcut benefit} and noting the distance to player
$j$ is unaffected. Therefore there is no incentive for player $j'$
to add the link $(j',k)$. The same calculation indicates that no
additional link $(i,i')$ will be formed between two nodes on the
same line.

Consider two lines of length $k$, $(j,x_{1},x_{2},...x_{k})$ and
$(j,y_{1},y_{2},...y_{k})$. Consider the addition of the link $(x_{k},y_{\left\lfloor k+1/2\right\rfloor })$.
That is, the addition of a link from an end of one line to the middle
player on another line. This link is optimal in $x_{k}'$s concern,
as it minimizes the sum of distances from it to players on the other
line. The change in cost is (see appendix) 
\[
\Delta C(x_{k},E+x_{k}y_{\left\lfloor k+1/2\right\rfloor })=c_{B}-5k^{2}/4>0.
\]

Note that 
\begin{eqnarray*}
\Delta C(x_{k},E+x_{k}y_{\left\lfloor k+1/2\right\rfloor }) & \leq & \Delta C(x_{k},E+x_{k}y_{i})
\end{eqnarray*}
for any other $i=1..k$, since a player gains the most from establishing
a link to the another line is the player that is furthest the most
from that line, i.e., the player at the end of the line. This concludes
the stability proof.

The total cost due to type-B nodes is 
\begin{align*}
 & 2c_{B}|T_{B}|+\left(A+1\right)|T_{B}|\left(|T_{A}|-1\right)\left(k+3\right)/2\\
 & +|T_{B}|\left(A+1\right)\left(k+1\right)/2+2m(k-1)^{2}+2(m-1)mk^{2}(k+2)^{2}.
\end{align*}
The terms represent (from left to right) the links' cost, the cost
due to the type-B players' distances to the type-A clique's nodes
(except $j$), the cost due to the type-B players' distances from
node $j$, the cost due to intra-line distances and the cost due inter-lines
distances. By using the relation $|T_{B}|=km$ we have
\begin{align*}
=2 & c_{B}|T_{B}|+\left(A+1\right)|T_{B}|\left(|T_{A}|-1\right)\left(k+3\right)/2\\
 & +|T_{B}|\left(A+1\right)\left(k+1\right)/2+2\left(|T_{B}|-m\right)(k-1)+2|T_{B}|^{2}(k+2)^{2}-2|T_{B}|m(k+2)\\
= & 2c_{B}|T_{B}|+\left(A+1\right)|T_{B}|\left(|T_{A}|-1\right)\left(k+3\right)/2\\
 & +|T_{B}|\left(\left(A+1\right)\left(k+1\right)/2+2k-4\right)+2|T_{B}|^{2}(k+2)^{2}-2m.
\end{align*}

The total cost due to type-A is as before
\[
|T_{A}|\left(|T_{A}|-1\right)\left(c_{A}+A\right).
\]

Therefore, 
\begin{eqnarray*}
\sum C(i) & = & |T_{A}|\left(|T_{A}|-1\right)\left(c_{A}/2+A\right)+2c_{B}|T_{B}|+\left(A+1\right)|T_{B}|\left(|T_{A}|-1\right)\left(k+3\right)/2\\
 &  & +|T_{B}|\left(\left(A+1\right)\left(k+1\right)/2+2k-4\right)+2|T_{B}|^{2}(k+2)^{2}-2m\\
 & \rightarrow & |T_{A}|^{2}\left(c_{A}+A\right)+\left(A+1\right)|T_{A}|\left|T_{B}\right|o(k)+\left|T_{B}\right|^{2}o(k).
\end{eqnarray*}
 
\end{proof}
The most prevalent situation is when $|T_{B}|\gg|T_{A}|\gg1$. In
this case we can bound the price of anarchy to be at least $o(k^{2})=o(c_{B})$. 

The next lemma bounds the price of anarchy from above by bounding
the maximal total cost in the a link-stable equilibrium.
\begin{lem}
\label{lem:The-worst-social utility-1}The worst social utility in
a link stable equilibrium is at most 
\begin{eqnarray*}
 &  & |T_{A}|^{2}\left(c_{A}+A\right)+|T_{B}|^{2}\left(c_{B}+\left\lfloor 2\sqrt{c_{B}}\right\rfloor \right)\\
 & + & \left(A+1\right)\left\lfloor 2\sqrt{c}\right\rfloor |T_{A}||T_{B}|
\end{eqnarray*}
\end{lem}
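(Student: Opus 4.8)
The plan is to bound $\mathcal S=\sum_i C(i)$ by using Lemma~\ref{lem:The-longest-distance-2} to control every pairwise distance (and, implicitly, every edge count) in a link-stable equilibrium, and then to collect terms. I would first write $\mathcal S$ as the total link-price $\sum_{ij\in E}\bigl(c_{type(i)}+c_{type(j)}\bigr)$ plus the weighted distance sum, and partition both according to whether the two endpoints of a pair lie both in $T_A$, both in $T_B$, or one in each. Note that in the distance sum a mixed pair $\{i,j\}$ with $i\in T_A,\ j\in T_B$ picks up weight $A$ from the summand centred at $i$ and weight $1$ from the summand centred at $j$, i.e.\ weight $(A+1)$ per unordered mixed pair.

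For the type-A block: since $1<c_A<A$, Lemma~\ref{lem:The-longest-distance-2} says $T_A$ induces a clique, so there are exactly $\binom{|T_A|}{2}$ edges inside $T_A$ and $d(i,j)=1$ for all $i\neq j\in T_A$; this block equals $|T_A|(|T_A|-1)(c_A+A)\le|T_A|^2(c_A+A)$. For the type-B block: Lemma~\ref{lem:The-longest-distance-2} gives $d(i,j)\le\lfloor 2\sqrt{c_B}\rfloor$ for $i,j\in T_B$, and there are at most $\binom{|T_B|}{2}$ intra-$T_B$ edges, each of price $2c_B$, so this block is at most $c_B|T_B|^2+\lfloor 2\sqrt{c_B}\rfloor|T_B|^2=|T_B|^2\bigl(c_B+\lfloor 2\sqrt{c_B}\rfloor\bigr)$. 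For the mixed block: there are at most $|T_A||T_B|$ mixed pairs, each at distance at most $\lfloor 2\sqrt c\rfloor$ (the distance bound of Lemma~\ref{lem:The-longest-distance-2} specialised to a type-A/type-B pair, discussed below), and at most $|T_A||T_B|$ of these pairs carry an edge, each of price $c_A+c_B=2c$; hence the mixed block is at most $(A+1)\lfloor 2\sqrt c\rfloor|T_A||T_B|+2c|T_A||T_B|$, and the second summand is absorbed into the first because a \emph{present} mixed edge contributes distance $1$, not $\lfloor 2\sqrt c\rfloor$, to its pair, freeing slack $(A+1)(\lfloor 2\sqrt c\rfloor-1)$ per such edge, which exceeds $2c$ in the parameter range in which this lemma is invoked. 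Adding the three blocks yields the claimed inequality.

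The one genuinely delicate point is the mixed-pair distance bound: Lemma~\ref{lem:The-longest-distance-2} as stated immediately gives only $d(i,j)\le\lfloor 2\sqrt{c_B}\rfloor$ for any pair meeting $T_B$, and to obtain the sharper $\lfloor 2\sqrt c\rfloor$ for a type-A/type-B pair one must revisit the stability argument directly: an absent edge $(i,j)$ with $i\in T_A,\ j\in T_B$ would be created as soon as \emph{both} endpoints strictly gain, with the type-A endpoint gaining at least $\tfrac14 k(k-2)$ from a length-$k$ shortcut by Lemma~\ref{lem:shortcut benefit} (and the type-B endpoint likewise), so that the joint threshold on $k$ is controlled by the relevant average cost. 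Everything else in the proof — passing between ordered and unordered pair counts, discarding the $O(|T_A|)$ and $O(|T_B|)$ lower-order corrections, and the link-price fold-in just described — is routine bookkeeping and is not where the content lies.
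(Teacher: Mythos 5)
Your proof takes essentially the same route as the paper's: the paper likewise splits the social cost into the type-A clique block, the type-B distance and link blocks (bounded via Lemma \ref{lem:The-longest-distance-2} and the trivial $\binom{|T_B|}{2}$ edge count), and the mixed distance block with the $\left\lfloor 2\sqrt{c}\right\rfloor$ factor, and it in fact simply asserts that mixed bound and omits altogether the mixed-edge price term that you fold in. One caveat on your ``delicate point'': in the basic model a link forms only when \emph{both} endpoints strictly gain, so the stable mixed-pair distance is capped by the \emph{larger} of the two individual thresholds (roughly $2\sqrt{c_A}$ for the type-A end and $\sqrt{(1-2A)^2+4c_B}-2(A-1)$ for the type-B end) rather than by anything controlled by the average cost $c$ --- the averaged criterion is the transferable-utility one of Corollary \ref{lem:edges with monetary transfers.} --- and the asserted $\left\lfloor 2\sqrt{c}\right\rfloor$ bound is recovered only because, in the regime $c_B<A$ where this lemma is invoked for the price of anarchy, the type-B threshold collapses to about $2$ while the type-A threshold is at most $2\sqrt{c_A}\le 2\sqrt{c}$.
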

\begin{proof}
The total cost due to the inter-connectivity of the type-A clique
is identical for all link stable equilibria and is $|T_{A}|\left(|T_{A}|-1\right)\left(c_{A}+A\right)$.
The maximal distance between nodes $i,j$ according to Lemma \ref{lem:The-longest-distance-2}
is $\left\lfloor 2\sqrt{c_{B}}\right\rfloor $ and therefore the maximal
cost due to the distances between type-B nodes is $\left\lfloor 2\sqrt{c_{B}}\right\rfloor |T_{B}|\left(|T_{B}|-1\right)$.
Likewise, the maximal cost due to the distance between type-B nodes
and the type-A clique is $\left(A+1\right)\left\lfloor 2\sqrt{c}\right\rfloor |T_{A}||T_{B}|$.
Finally, the maximal number of links between type-B nodes is $|T_{B}|\left(|T_{B}|-1\right)/2$
and the total cost due to this part is $|T_{B}|\left(|T_{B}|-1\right)c_{B}$.

Adding all the terms we obtain the required result.
\end{proof}
To summarize, we state the previous results in a proposition.
\begin{prop}
\label{thm:summary-of-results-1}If $c_{B}<A$ and $|T_{B}|\gg|T_{A}|\gg1$
the price of anarchy is $\Theta(c_{B})$.
\end{prop}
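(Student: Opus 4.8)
The price of anarchy is by definition the ratio of the social cost at the worst link-stable equilibrium to the optimal social cost $\mathcal{S}_{optimal}$, so the plan is to sandwich it between a lower bound coming from an explicit bad (but stabilizable) configuration and an upper bound valid at \emph{every} stabilizable graph, and then normalize by $\mathcal{S}_{optimal}$ as already computed. First I would record the denominator: by Proposition~\ref{lem:optimal solution-1}, in the regime $|T_B|\gg|T_A|\gg 1$ we have $\mathcal{S}_{optimal}=2|T_B|^2+\Theta(|T_A||T_B|)+\Theta(|T_A|^2)$, and since $|T_B|\gg|T_A|$ this is $\Theta(|T_B|^2)$, with the parameters $A,c$ entering only the lower-order terms. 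Hence it suffices to show that the worst stabilizable social cost is $\Theta(c_B)\cdot|T_B|^2$.

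For the lower bound I would invoke Lemma~\ref{lem:poor eq}: the configuration in which the type-$B$ nodes form $m$ disjoint lines of length $k=\min\{\lfloor\sqrt{3c_A}\rfloor,\lfloor\sqrt{4c_B/5}\rfloor\}$, each attached at one end to a common clique node $j\in T_A$, is stabilizable, and its social cost in the large-network limit is dominated by the inter-line distance term, which is of order $|T_B|^2 k^2$. Since $c_B<A$ forces $c_A<A$, the type-$A$ clique of Lemma~\ref{lem:The-longest-distance-2} is in place, and in the range where the line length is governed by $c_B$ one has $k^2=\Theta(c_B)$, so this configuration has social cost $\Theta(c_B)\cdot|T_B|^2$; dividing by $\mathcal{S}_{optimal}=\Theta(|T_B|^2)$ gives $\mathrm{PoA}=\Omega(c_B)$.

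For the upper bound I would use Lemma~\ref{lem:The-worst-social utility-1}: at any link-stable equilibrium the social cost is at most
\[
|T_A|^2(c_A+A)+|T_B|^2\bigl(c_B+\lfloor 2\sqrt{c_B}\rfloor\bigr)+(A+1)\lfloor 2\sqrt{c}\rfloor\,|T_A||T_B|,
\]
obtained by substituting the diameter bounds of Lemma~\ref{lem:The-longest-distance-2} and the trivial bound on the number of links into the cost function. In the limit $|T_B|\gg|T_A|\gg 1$ the middle term dominates, and since $c_B>1$ we have $c_B+\lfloor 2\sqrt{c_B}\rfloor=\Theta(c_B)$; hence the worst social cost is $O(c_B)\cdot|T_B|^2$ and $\mathrm{PoA}=O(c_B)$. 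Combining the two directions yields $\mathrm{PoA}=\Theta(c_B)$.

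The main obstacle is making the two constants in the $\Theta(\cdot)$ meet, i.e.\ guaranteeing that the bad configuration of Lemma~\ref{lem:poor eq} truly realizes the $c_B$-scaling rather than a $\sqrt{c_B}$- or $c_A$-scaling. The delicate point is that \emph{any} clique node can profitably shortcut into a type-$B$ line once that line exceeds length $\Theta(\sqrt{c_A})$ (this is exactly the computation $\Delta C(j',E+j'x)=c_A-k^2/3>0$ inside the proof of Lemma~\ref{lem:poor eq}), so when $c_B$ is much larger than $c_A$ the admissible line length $k$ is capped by $c_A$, not by $c_B$; covering that range (and dispensing with the simplifying hypothesis $c_B\le 20c_A$ used in Lemma~\ref{lem:poor eq}) requires either a different stabilizable witness or the refinement carried out in \citet{Meirom2013}, and is the one step where the clean $\Theta(c_B)$ statement needs care. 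Everything else is bookkeeping: inserting the explicit formulas of Lemmas~\ref{lem:poor eq} and~\ref{lem:The-worst-social utility-1} and Proposition~\ref{lem:optimal solution-1} and discarding lower-order terms.
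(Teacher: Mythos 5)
Your proof is correct and follows the paper's own argument exactly: the lower bound is obtained from the stabilizable lines-attached-to-the-clique configuration of Lemma~\ref{lem:poor eq}, the upper bound from the diameter-and-link-count estimate of Lemma~\ref{lem:The-worst-social utility-1}, and both are normalized by the $\Theta(|T_B|^2)$ optimal cost of Proposition~\ref{lem:optimal solution-1}. The caveat you flag --- that the admissible line length is $k=\min\{\lfloor\sqrt{3c_A}\rfloor,\lfloor\sqrt{4c_B/5}\rfloor\}$ and is therefore capped by $c_A$ rather than $c_B$ when $c_B\gg c_A$, so the matching lower bound really needs the simplifying hypothesis $c_B\le 20c_A$ --- is a genuine loose end, but it is one the paper itself leaves unresolved, so your treatment is no less complete than the original.
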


\subsection{Basic model - Dynamics}

The Internet is a rapidly evolving network. In fact, it may very well
be that it would never reach an equilibrium as ASs emerge, merge,
and draft new contracts among them. Therefore, a dynamic analysis
is a necessity. We first define the dynamic rules. Then, we analyze
the basin of attractions of different states, indicating which final
configurations are possible and what their likelihood is. We shall
establish that reasonable dynamics converge to \emph{just a few} equilibria.
Lastly, we investigate the speed of convergence, and show that convergence
time is \emph{linear} in the number of players.

\subsubsection{Setup \& Definitions}

At each point in time, the network is composed of a subset $N'\subset T_{A}\cup T_{B}$
of players that already joined the game. The cost function is calculated
with respect to the set of players that are present (including those
that are joining) at the considered time. The game takes place at
specific times, or \emph{turns}, where at each turn only a single
player is allowed to remove or initiate the formation of links. We
split each turn into \emph{acts}, at each of which a player either
forms or removes a single link. A player's turn is over when it has
no incentive to perform additional acts.
\begin{defn}
Dynamic Rule \#1: In player $i$'s turn it may choose to act $m\in\mathcal{N}$
times. In each act, it may remove a link $(i,j)\in E$ or, if player
$j$ agrees, it may establish the link $(i,j)$. Player $j$ would
agree to establish $(i,j)$ iff $C(j;E+(i,j))-C(j;E)<0$.
\end{defn}
The last part of the definition states that, during player's $i$
turn, all the other players will act in a greedy, rather than strategic,
manner. For example, although it may be that player $j$ prefers that
a link $(i,j')$ would be established for some $j'\neq j$, if we
adopt Dynamic Rule \#1 it will accept the establishment of the less
favorable link $(i,j).$ In other words, in a player's turn, it has
the advantage of initiation and the other players react to its offers.
This is a reasonable setting when players cannot fully predict other
players' moves and offers, due to incomplete information \citep{5173479}
such as the unknown cost structure of other players. Another scenario
that complies with this setting is when the system evolves rapidly
and players cannot estimate the condition and actions of other players.

The next two rules consider the ratio of the time scale between performing
the strategic plan and evaluation of costs. For example, can a player
remove some links, disconnect itself from the graph, and then pose
a credible threat? Or must it stay connected? Does renegotiating take
place on the same time scale as the cost evaluation or on a much shorter
one? The following rules address the two limits. 

\begin{defn}
Dynamic Rule \#2a: Let the set of links at the current act $m$ be
denoted as $E_{m}$. A link $(i,j)$ will be added if $i$ asks to
form this link and $C(j;E_{m}+ij)<C(j;E_{m})$. In addition, any link
$(i,j)$ can be removed in act $m.$\end{defn}

The alternative is as follows.
\begin{defn}
Dynamic Rule \#2b: In addition to Dynamic Rule \#2a, player $i$ would
only remove a link $(i,j)$ if $C(i;E_{m}-ij)>C(i;E_{m})$ and would
establish a link if both $C(j;E_{m}+ij)<C(j;E_{m})$ and $C(i;E_{m}+ij)<C(i;E_{m})$. 
\end{defn}
The difference between the last two dynamic rules is that, according
to Dynamic Rule \#2a, a player may perform a strategic plan in which
the first few steps will increase its cost, as long as when the plan
is completed its cost will be reduced.  On the other hand, according
to Dynamic Rule \#2b, its cost must be reduced \emph{at each act},
hence such ``grand plan'' is not possible. Note that we do not need
to discuss explicitly disconnections of several links, as these can
be done unilaterally and hence iteratively.Finally, the following lemma will be useful in the next section.
\begin{lem}
\label{lem:decay time-1}Assume $N$ players act consecutively in
a (uniformly) random order\textup{\emph{ at integer times, which we'll
denote by $t$.}} the probability $P(t)$ that a specific player did
not act $k\mathcal{\in N}$ times by $t\gg N$ decays exponentially.\end{lem}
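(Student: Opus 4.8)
The plan is to read this as the standard concentration fact for a ``balls-into-bins'' random stream and to settle it with a multiplicative Chernoff bound. First I would pin down the sampling model implied by ``$N$ players act consecutively in a uniformly random order at integer times'': at each time $i=1,2,\dots$ the active player is drawn uniformly from the $N$ players, independently of all previous draws. (As noted at the end, this is essentially the only reading under which the statement has content.) Fix a player $p$ and let $S_t$ denote the number of turns that $p$ has had by time $t$. Then $S_t\sim\mathrm{Bin}(t,1/N)$ with mean $\mu_t=t/N$, and the quantity to bound is $P(t)=\Pr[S_t\le k-1]$.

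Next I would apply the lower-tail Chernoff bound $\Pr[S_t\le(1-\delta)\mu_t]\le\exp(-\delta^2\mu_t/2)$, valid for $0<\delta<1$. The hypothesis $t\gg N$ may be read as $t\ge 2kN$ (here $k$ is a fixed constant), so $\mu_t\ge 2k$; choosing $\delta=1-(k-1)/\mu_t$ gives $(1-\delta)\mu_t=k-1$ and, for $k\ge 2$, $\tfrac12\le\delta<1$, whence
\[
P(t)=\Pr[S_t\le k-1]\ \le\ \exp\!\left(-\tfrac18\mu_t\right)\ =\ \exp\!\left(-\tfrac{t}{8N}\right),
\]
which decays exponentially in $t$. (The constant $\tfrac18$ is immaterial; the case $k=1$ is the elementary bound $\Pr[S_t=0]=(1-1/N)^t\le e^{-t/N}$, and on the range $N\ll t<2kN$ the displayed estimate is vacuous, so if a single clean inequality for all $t$ is wanted one writes it as $e^{k/2}\,e^{-t/(8N)}$.)

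For the way the lemma is invoked in Theorem~\ref{cor:credible threat part 3} and the related propositions, where one needs \emph{every} player to have acted $k$ times, I would close with a union bound over the $N$ players: the probability that some player has acted fewer than $k$ times by turn $t$ is at most $N\exp(-t/(8N))$, which for fixed $N$ is still exponentially small in $t$.

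Finally I would add a one-line robustness remark: if ``random order'' instead means that each consecutive block of $N$ turns is a fresh uniformly random permutation of the players---so that $p$ acts exactly once per block---then $S_t\ge\lfloor t/N\rfloor$ deterministically, hence $P(t)=0$ once $t\ge kN$ and the claim is trivial. Thus the only step carrying any mathematical content is the Chernoff estimate, and the sole ``obstacle'' is really bookkeeping: making precise which sampling model is meant and that ``$t\gg N$'' here quantifies $t\ge c_k N$ with $c_k=\Theta(k)$.
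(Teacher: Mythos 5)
Your proof is correct and follows essentially the same route as the paper's: both model the active player at each integer time as an independent uniform draw from the $N$ players, so that the number of times a fixed player has acted by time $t$ is (approximately) Binomial$(t,1/N)$ with mean $t/N$, and both conclude by bounding the lower tail of that count. The only difference is cosmetic--the paper writes the tail as a Poisson CDF $e^{-t/N}\sum_{i=0}^{k}\frac{1}{i!}(t/N)^{i}$ and lets $t\gg N$, whereas you invoke a Chernoff bound on the exact binomial, which is if anything the more careful of the two.
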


W.l.o.g, we'll discuss player 1. Set $p=1/N.$ The probability that
a player did not act $k$ times is given by the CDF Poisson distribution
$f(t,p)$ as 
\[
P(t)=e^{-t/N}\sum_{i=0}^{k}\frac{1}{i!}\left(\frac{t}{N}\right)^{i}
\]
 and taking the limit $t\gg N$ concludes the proof.

\subsubsection{\label{sub:dynamical Results-1}Results}

After mapping the possible dynamics, we are at a position to consider
the different equilibria's basins of attraction. Specifically, we
shall establish that, in most settings, the system converges to the
optimal network, and if not, then the network's social cost is asymptotically
equal to the optimal social cost. The main reason behind this result
is the observation that a disconnected player has an immense bargaining
power, and may force its optimal choice. As the highest connected
node is usually the optimal communication partner for other nodes,
new arrivals may force links between them and this node, forming a
star-like structure. There may be few star centers in the graphs,
but as one emerges from the other, the distance between them is small,
yielding an optimal (or almost optimal) cost.

We outline the main ideas of the proof. The first few type-B players,
in the absence of a type-A player, will form a star. The star center
can be considered as a new type of player, with an intermediate importance,
as presented in Fig. \ref{fig:credible threat corr.-2}. We monitor
the network state at any turn and show that the minor players are
organized in two stars, one centered about a minor player and one
centered about a major player (Fig. \ref{fig:credible threat corr.-2}(a)).
Some cross links may be present (Fig.\,\ref{fig:credible threat corr.-2}).
By increasing its client base, the incentive of a major player to
establish a direct link with the star center is increased. This, in
turn, increases the attractiveness of the star's center in the eyes
of minor players, creating a positive feedback loop. Additional links
connecting it to all the major league players will be established,
ending up with the star's center transformation into a member of the
type-A clique. On the other hand, if the star center is not attractive
enough, then minor players may disconnect from it and establish direct
links with the type-A clique, thus reducing its importance and establishing
a negative feedback loop. The star will become empty, and the star's
center $x$ will be become a stub of a major player, like every other
type-B player. The latter is the optimal configuration, according
to proposition \ref{lem:optimal solution}. We analyze the optimal
choice of the active player, and establish that the optimal action
of a minor player depends on the number of players in each structure
and on the number of links between the major players and the minor
players' star center $x$. The latter figure depends, in turn, on
the number of players in the star. We map this to a two dimensional
dynamical system and inspect its stable points and basins of attraction
of the aforementioned configurations.

\begin{thm}
\label{cor:credible threat part 3-1}If the game obeys Dynamic Rules
\#1 and \#2a, then, in any playing order:

a) The system converges to a solution in which the total cost is at
most 

\begin{eqnarray*}
\mathcal{S} & = & |T_{A}|^{2}\left(c_{A}+A\right)-|T_{A}|\left(2A+c_{A}/2\right)+2|T_{B}|^{2}+|T_{B}|\left(A+2c_{B}\right)+3|T_{A}||T_{B}|\left(A+1\right)+2;
\end{eqnarray*}

furthermore, by taking the large network limit $|T_{B}|\gg|T_{A}|\gg1$,
we have $\mathcal{S}/\mathcal{S}_{optimal}\rightarrow1$ .

b) Convergence to the optimal stable solution occurs if either:

1) \textup{$A\cdot k_{A}>k+1$, }\textup{\emph{where $k\geq0$ is
the number of type-B nodes that first join the network, followed later
by $k_{A}$ consecutive type-A nodes (``initial condition'').}}

2)$A\cdot|T_{A}|>|T_{B}|$ (``final condition'').

c) In all of the above, if every player plays at least once in O(N)
turns, convergence occurs after o(N) steps. Otherwise, if players
play in a uniformly random order, the probability the system has not
converged by turn $t$ decays exponentially with $t$.\end{thm}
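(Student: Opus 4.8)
The plan is to establish, by induction on the turns, that the network always retains a rigid ``template'' shape, and then to collapse the dynamics onto a two-dimensional discrete system whose fixed points and basins of attraction can be read off from elementary inequalities. Throughout, assume first $c_A\ge 2$ (lifted at the end). Fix $k$ to be the first type-A player that ever forms a link with a type-B node, let $x$ denote the type-B ``star center'' (the first type-B node to accumulate links from other type-B nodes), and assume $(k,x)\in E$ (the branch $(k,x)\notin E$ is deferred). The invariant to be proved is: at the end of every turn the graph is the union of (i) the type-A clique, possibly still incomplete; (ii) a set $S$ of type-B nodes each linked to $x$; (iii) a set $L$ of type-B nodes each linked directly to $k$; and (iv) a set $D\subseteq T_A$ of major players linked directly to $x$ (Fig.~\ref{fig:credible threat corr.}). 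For $t\le 2$ this is immediate.

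For the inductive step let $r$ be the active player. If $r\in T_A$: since $1<c_A<A$, every clique link is formed or kept as in Proposition~\ref{lem:optimal solution}; the only relevant optional link is $(r,x)$, added exactly when $\Delta C(r,E+rx)=c_A-|S|-1<0$ by \eqref{eq: term1}, so $|D|$ changes by $\pm1$; a link to a generic type-B node never pays, as it shortens a single distance only. If $r\in T_B$, $r\neq x$: a fresh (hence disconnected) arrival, or --- under Dynamic Rule~\#2a --- an attached node that first disconnects itself, strictly prefers $x$ to $k$ exactly when the expression in \eqref{eq: term 2} is positive, so afterwards exactly one of $(r,x),(r,k)$ is present and no further type-B link is worthwhile, leaving the template intact with $|S|$ or $|L|$ updated by one. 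If $r=x$: it cannot drop a type-B link without disconnecting the graph, it only wants additional major-player links (again governed by \eqref{eq: term1}), and it refuses a link to a node of $L$ since $c_B>2$. This closes the induction; for $(k,x)\notin E$ the same analysis applies after allowing the cross-tier links of Fig.~\ref{fig:cross-tiers}, which only lower the social cost and do not change the state variables.

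Once all players have arrived, $|S|+|L|+1=|T_B|$ and $m_A=|T_A|$, so the state reduces to the pair $(|S|,|L|)$: the step in $|S|$ depends on $(|S|,|L|)$ through \eqref{eq: term 2}, the step in $|D|$ on $|S|$ alone through \eqref{eq: term1}, and each act moves the state by one grid unit along an axis. The two nullclines partition the grid into four regions (Fig.~\ref{fig:credible threat corr.}(b)): where \eqref{eq: term1} is positive and \eqref{eq: term 2} negative, a positive feedback loop drives the state to $(|S|=|T_B|-1,\ |D|=|T_A|)$ --- the center $x$ being absorbed into the clique --- whereas in the opposite region a negative feedback loop empties the star and lands at $(|S|=0,\ |D|=1)$, which is precisely the optimal configuration of Proposition~\ref{lem:optimal solution}; these are the only fixed points. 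For part (b), tracing \eqref{eq: term1} and \eqref{eq: term 2} through the initial segment (first $k$ type-B arrivals, then $k_A$ type-A arrivals) forces the optimal basin once $A k_A>k+1$, and evaluating the nullclines at the final population yields the sufficient condition $A|T_A|>|T_B|$. For part (a), I compute the social cost at the non-optimal fixed point term by term --- the clique, the $|T_B|-1$ star stubs, the clique-to-$x$ distances, the distance-$2$ star-to-clique and star-to-star pairs, and the $|T_A|$ major links to $x$ --- obtaining \eqref{eq:cost at star}, then simplify and bound to \eqref{eq:social cost}; since each term is of order $|T_B|^2+|T_A||T_B|+|T_A|^2$ while $\mathcal{S}_{optimal}=2|T_B|^2(1+o(1))$ when $|T_B|\gg|T_A|\gg1$, the ratio tends to $1$.

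For part (c): after every player has acted about three times no act can reduce a cost, so the dynamics halt; hence an $O(N)$-fair schedule converges in $O(N)$ turns, and a uniformly random schedule converges with an exponentially decaying tail by Lemma~\ref{lem:decay time}. Finally, to drop the assumption $c_A\ge 2$: for $c_A<2$ a type-A player still forms $(r,x)$ per \eqref{eq: term1}, and a star node that would gain from a major link may first form $(r,j)$ with some $j\in T_A$, which $j$ then replaces by $(j,x)$ on its turn; the reachable states and the two fixed points are unchanged. The main obstacle is the structural induction itself --- verifying that Dynamic Rule~\#2a's ``grand plans'' (disconnect-then-reconnect, and the multi-act turns permitted by Rule~\#1) can never produce a graph outside the four-set template, and carefully handling the bookkeeping in the branch $(k,x)\notin E$. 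Everything downstream of the invariant is a finite case analysis on a two-dimensional grid together with routine cost accounting.
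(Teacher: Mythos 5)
Your proposal is correct and follows essentially the same route as the paper's own proof: the same four-set structural invariant established by induction on turns with the same three-way case analysis on the active player, the same reduction to a two-dimensional discrete system with nullclines given by \eqref{eq: term1} and \eqref{eq: term 2} and the same two fixed points, the same term-by-term cost accounting at the non-optimal equilibrium, and the same convergence and $c_A<2$ arguments. The only cosmetic difference is your labeling of the optimal fixed point as $(|S|,|D|)=(0,1)$, which matches the main-text version of the proof.
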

\begin{proof}
Assume $c_{A}\geq2$. Denote the first type-A player that establish
a link with a type-B player as $k$. First, we show that the network
structure is composed of a type-A (possibly empty) clique, a set of
type-B players $S$ linked to player $x$, and an additional (possibly
empty) set of type-B players $L$ connected to the type-A player $k$.
See Fig. \ref{fig:credible threat corr.-2}(a) for an illustration.
In addition, there is a set $D$ type-A nodes that are connected to
node $x$, the star center. After we establish this, we show that
the system can be mapped to a two dimensional dynamical system. Then,
we evaluate the social cost at each equilibria, and calculate the
convergence rate. We first assume $(k,x)\in E$ and discuss the case
$(k,x)\notin E$ later. 

\begin{figure}
\centering{}.\includegraphics[width=1\columnwidth]{dynamic_proof}\caption{\label{fig:credible threat corr.-2}a) The network structures described
in Theorem \ref{cor:credible threat part 3}. The type-A clique contains
$|T_{A}|=4$ nodes (squares), and there are $|S|=5$ nodes in the
star (red circles). There are $|L|=2$ nodes that are connected directly
to node $k$ (yellow circles). The number of type-A nodes that are
connected to node 1, the star center, is $|D|=2$ (green squares).
b) The phase state of Theorem \ref{cor:credible threat part 3}. The
dotted green nullcline seperates the region in which $|S|$ increase
or decrease. Similiary, the dotted red line is the nullcline for the
regions in which $|D|$ increase or decrease. When monetary transfers
are forbidden allowed this nullcline is shifted, and is presented
by the dashed red line. (Proposition \ref{prop:monetary-dyanmics-1}).
\label{fig:The-phase-state-2}}
\end{figure}

We prove by induction. At turn $t=2$, after the first two players
joined the network, this is certainly true. Denote the active player
at time $t$ as $r.$ Consider the following cases:

1. $r\in T_{A}$: Since $1<c_{A}<A,$ all links to the other type-A
nodes will be established (lemma \ref{lem:optimal solution}) or maintained,
if $r$ is already connected to the network. Clearly, the optimal
link in $r$'s concern is the link with star center $x$. As $c_{B}<A$
every minor player will accept a link with a major player even if
it reduces its distance only by one. Therefore, the link $(r,x)$
is formed if the change of cost of the major player $r$,
\begin{equation}
\Delta C(r,E+rx)=c_{A}-|S|-1\label{eq: term1-3}
\end{equation}
is negative. In this case, the number of type A players connected
to the star's center, $|D|$, will increase by one. If this expression
is positive and player $r$ is connected to at least another major
player (as otherwise the graph is disconnected), the link will be
dissolved and $|D|$ will be reduced by one. It is not beneficial
for $r$ to form an additional link to any type-B player, as they
only reduce the distance from a single node by one (see the discussion
in lemma \ref{lem:optimal solution} in the appendix). 

2. $r\in T_{B},\, r\neq x$ : First, assume that $r$ is a newly arrived
player, and hence it is disconnected. Obviously, in its concern, a
link to the star's center, player $x$, is preferred over a link to
any other type-B player. Similarly, a link to a type-A player that
is linked with the star's center is preferred over a link with a player
that maintains no such link.

We claim that either $(r,k)$ or $(r,x)$ exists. Denote the number
of type-A player at turn $t$ as $m_{A}.$ The link $(r,x)$ is preferred
in $r$'s concern if the expression 

\begin{equation}
C(r,E+rk)-C(r,E+rx)=-A(1+m_{A}-|D|)+1+|S|-|L|\label{eq: term 2-2}
\end{equation}

is positive, and will be established as otherwise the network is disconnected.
If the latter expression is negative, $(r,k)$ will be formed. The
same reasoning as in case 1 shows that no additional links to a type-B
player will be formed. Otherwise, if $r$ is already connected to
the graph, than according to Dynamic Rule \#2a, $r$ may disconnect
itself, and apply its optimal policy, increasing or decreasing $|L|$
and $|S|$.

3. $r=x$, the star's center: $r$ may not remove any edge connected
to a type-B player and render the graph disconnected. On the other
hand, it has no interest in removing links to major players. On the
contrary, it will try to establish links with the major players, and
these will be formed if eq. \ref{eq: term1-3} is negative. An additional
link to a minor player connected to $k$ will only reduce the distance
to it by one and since $c_{B}>2$ player $x$ would not consider this
move worthy.

The dynamical parameters that govern the system dynamics are the number
of players in the different sets, $|S|$, $|L|$, and $|D|$. Consider
the state of the system after all the players have player once. Using
the relations $|S|+|L|+1=|T_{B}|,\: m_{A}=|T_{A}|$ we note the change
in $|S|$ depends on $|S|$ and $|L|$ while the change in $|D|$
depends only on $|S|.$ We can map this to a 2D dynamical, discrete
system with the aforementioned mapping. In Fig.\,\ref{fig:The-phase-state-2}
the state is mapped to a point in phase space $(|S|,|L|)$. The possible
states lie on a grid, and during the game the state move by an single
unit either along the $x$ or $y$ axis. There are only two stable
points, corresponding to $|S|=0,|D|=0$, which is the optimal solution
(Fig. \ref{fig:The-optimal-solution}(a)), and the state $|S|=|T_{B}|-1$
and $|D|=|T_{B}|$. 

If at a certain time expression \ref{eq: term1-3} is positive and
expression \ref{eq: term 2-2} is negative (region 3 in Fig.\,\ref{fig:The-phase-state-2}(b)),
the type-B players will prefer to connect to player $x$. This, in
turn, increases the benefit a major player gains by establishing a
link with player $x.$ The greater the set of type-A that have a direct
connection with $x$, having $|D|$ members, the more utility a direct
link with $x$ carries to a minor player. Hence, a positive feedback
loop is established. The end result is that all the players will form
a link with $x$. In particular, the type-A clique is extended to
include the type-B player $x$. Likewise, if the reverse condition
applies, a feedback loop will disconnect all links between node $x$
to the clique (except node $k$) and all type-B players will prefer
to establish a direct link with the clique. The end result in this
case is the optimal stable state. The region that is relevant to the
latter domain is region 1. 

However, there is an intermediate range of states, described by region
2 and region 4, in which the player order may dictate to which one
of the previous states the system will converge. For example, starting
from a point in region 4, if the type-A players move first, changing
the $|D|$ value, than the dynamics will lead to region 1, which converge
to the optimal solution. However, if the type-B players move first,
then the system will converge to the other equilibrium point.

We now turn to calculate the social cost at the different equilibria.
If $|D|=|T_{A}|$ and $|S|=|T'_{B}|-1$, The network topology is composed
of a $|T_{A}|$ members clique, all connected to the center $x$,
that, in turn, has $|T_{B}|-1$ stubs. The total cost in this configuration
is 
\begin{eqnarray}
S & = & |T_{A}|\left(|T_{A}|-1\right)\left(c_{A}+A\right)+2c_{B}|T_{B}|+\left(A+1\right)|T_{A}|+2\left(|T_{B}|-1\right)\nonumber \\
 &  & +2\left(|T_{B}|-1\right)\left(A+1\right)+2\left(|T_{B}|-2\right)\left(|T_{B}|-1\right)+\left(c_{B}+c_{A}\right)|T_{A}|/2\label{eq:cost at star-2}
\end{eqnarray}

where the costs are, from the left to right: the cost of the type-A
clique, the cost of the type-B star's links, the distance cost $(=1)$
between the clique and node $x$, the distance $(=1)$ cost between
the star's members and node $x$, the distance $(=2)$ cost between
the clique and the star's member, the distance $(=2)$ cost between
the star's members, and the cost due to major player link's to the
start center $x$. Adding all up, we have for the total cost
\begin{eqnarray}
\mathcal{S} & \leq & |T_{A}|\left(|T_{A}|-1\right)\left(c+A\right)+2c_{B}|T_{B}|+\left(A+1\right)\left(3|T_{A}||T_{B}|+|T_{B}|\right)+2\left(|T_{B}|-1\right)^{2}.\label{eq:social cost-1}
\end{eqnarray}

Convergence is fast, and as soon as all players have acted three times
the system will reach equilibrium. If every player plays at least
once in $o(N$) turns convergence occurs after $o(N)$ turns, otherwise
the probability the system did not reach equilibrium by time $t$
decays exponentially with $t$ according to lemma \ref{lem:decay time}
(in the appendix).

We now relax our previous assumption $c_{A}\geq2$. If $c_{A}\leq2$
and the active player $r\in T_{A}$ then it will form a link with
the star's center according to eq. \ref{eq: term1-3}. If $r\in S$
it may establish a link $(r,j)$ with a type A player, which will
later be replaced, in $j$'s turn, with the link $(j,x)$ according
to the previous discussion. In the appendix we discuss explicitly
the case where $(k,x)\notin E$ and show that in this case, additional
links may be formed, e.g., a link between one of $k'$s stubs, $i\in L$,
and the star's center $x$, as presented in Fig.\,\ref{fig:cross-tiers-2}.
These links only reduce the social cost, and do not change the dynamics,
and the system will converge to either one of the aforementioned states.
Taking the limit $T_{B}\rightarrow\infty$ and $T_{B}\in\omega\left(T_{A}\right)$
in eq. \ref{eq:social cost-1}, we get that $\mathcal{S}/\mathcal{S}_{optimal}\rightarrow1$,
and this concludes the proof.

\begin{figure}
\centering{}\includegraphics[width=0.8\columnwidth]{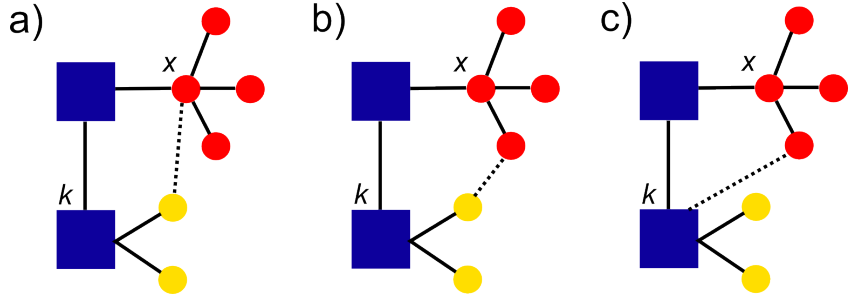}\caption{\label{fig:cross-tiers-2}Additional feasible cross-tiers links, as
described in the appendix. The star players $S$ are in red, the set
$L$ is in yellow. a) a link between the star center and $i\in L$.
b) a cross-tier link $(i,j)$ where $i\in S,j\in L$. c) a minor player
- major player link, $(i,j)$ where $i\in T_{A}$ and $j\in S.$ }
\end{figure}

We now discuss explicitly the case where, at some point, the link
$(k,x)$ is removed and assume that $c_{A}>2$, $c_{B}\geq3$. In
this case, the nullcline described by eq. \ref{eq: term 2-2} is replaced
by 
\begin{equation}
C(r,E+rk)-C(r,E+rx)=-A(1+m_{A}-|D|)+2\left(1+|S|-|L|\right).\label{eq: term 2 -kx removed-1}
\end{equation}

This changes the regions according to Fig. \ref{fig:The-dynamical-regions-1}.
Region 1, which is the basin of attraction for the optimal configuration,
increases its area, on the expense of region 4. The dynamical discussion
as described for the case $(k,x)\in E$ is still applicable, and if
the player play in a specific order, than the state vector $(|S|,|D|)$
will be in either region 1 or region 3 after $\Theta(1)$ turns. If
the players play in random order, then the system might not converge
only if player $k$ will play in every $\Theta(1)$ turns. This probability
decays exponentially, according to lemma \ref{lem:decay time-1}.

\begin{figure}
\centering{}\includegraphics[width=0.4\paperwidth]{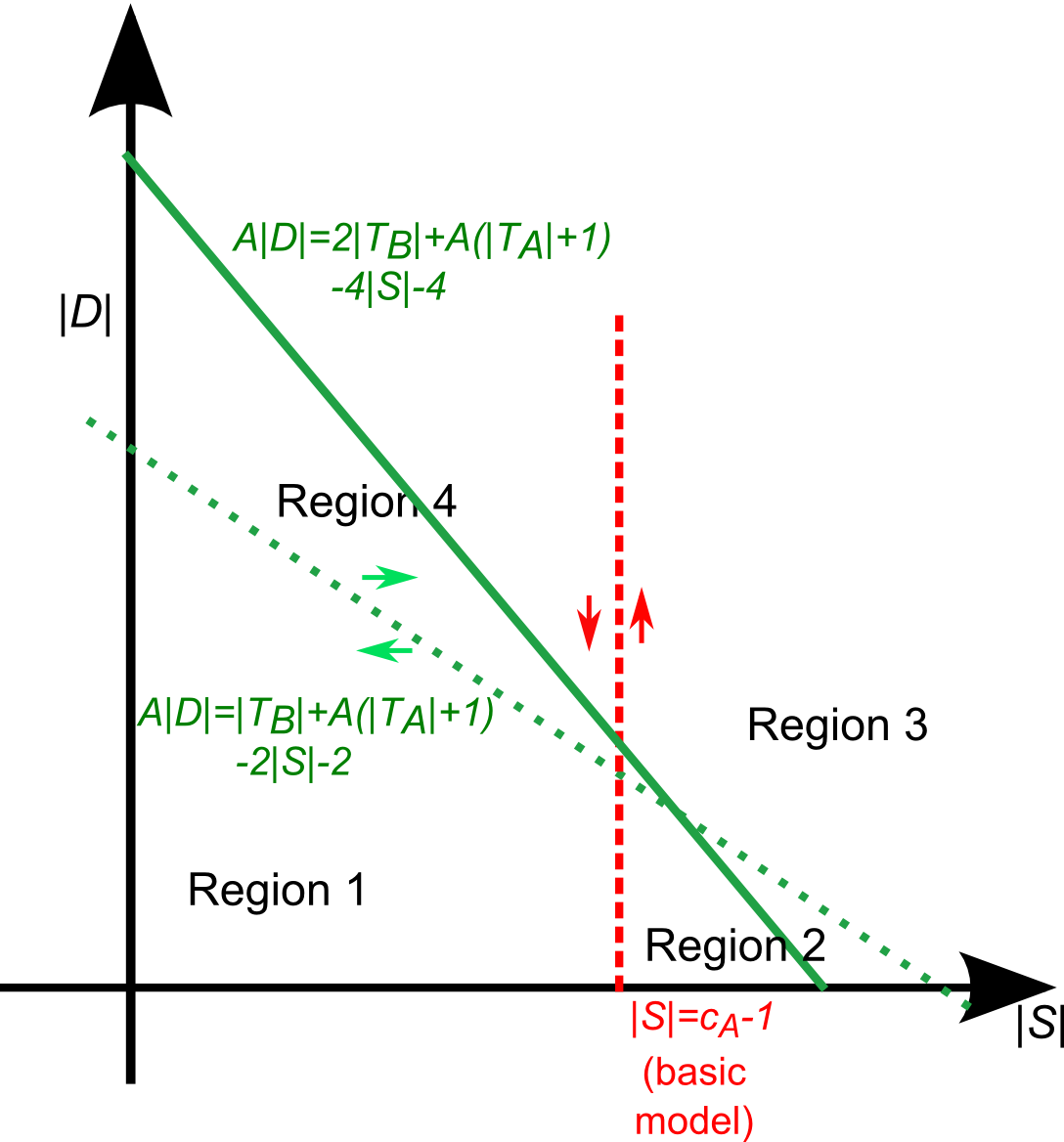}\caption{\label{fig:The-dynamical-regions-1}The dynamical regions with or
without the link $(k,x)$. The dashed green represents the new type-B
player preference nullcline when $(k,x)\in E$, according to eq. \ref{eq: term 2-2}.
According to eq. \ref{eq: term 2 -kx removed-1}, the nullcline when
$(k,x)\notin E$ is the solid green line.}
\end{figure}

In order to complete the proof, we now address the case $c_{A}\leq2$.
If $r\in S$, then $r$ will establish a link with $k\in T_{A}$,
as the distance $d(r,k)$ is reduced by two. In this case, $r$ is
a member of both $S$ and $L$, and we address this by the transformation
$|S|\leftarrow|S|$, $|L|\leftarrow|L|+1$ and $|T_{B}|\leftarrow|T_{B}|+1.$
Similarly, if $r\in L$ then it will establish links with the star
center $x$ if and only if $c_{B}\leq2$. The analogous transformation
is, $|S|\leftarrow|S|+1$, $|L|\leftarrow|L|$ and $|T_{B}|\leftarrow|T_{B}|+1.$
If $r\in T_{A}$ then it will act according to eq. \ref{eq: term1-3}.
Finally, if $r=x$ than it may both establish link with players in
$L$ and with major players according to the aforementioned discussion.

If $2\leq c_{A}\leq c_{B}$ than a link between player $k$ leaves,
$i\in L$ and a star's leaf, $j\in S$ is feasible and will be formed
when either parties are selected as the active player. Every player
$i$ may participate in only a single link of this type, as after
its establishment the maximal distance between player $i$ to every
other player is three, and an additioanl link will result in a reduction
of the sum of distances by at most two. As before, additional links
between the star center's and the major players may be formed according
to eq. \ref{eq: term1-3}. 

Consider a link $(i,j)$ between $i\in L$ and a star's leaf, $j\in S$.
Neither $x$ nor $k$ has an incentive to disconnet either $(x,j)$
or $(k,s)$ as the distance is increased by at least three. Similarly,
all the aforementioned links $(i,j)$ will not create an incentive
for a link removal $(i,j')$ or $(j,j')$ by any former partner $j'$
of the involved partied $i,j$.\end{proof}

If the star's center has a principal role in the network, then links
connecting it to all the major league players will be established,
ending up with the star's center transformation into a member of the
type-A clique. This dynamic process shows how an effectively new major
player emerges out of former type-B members in a natural way. Interestingly,
Theorem \ref{cor:credible threat part 3-1} also shows that there
exists a transient state with a better social cost than the final
state. In fact, in a certain scenario, the transient state is better
than the optimal stable state.

So far we have discussed the possibility that a player may perform
a strategic plan, implemented by Dynamic Rule \#2a. However, if we
follow Dynamic Rule \#2b instead, then a player may not disconnect
itself from the graph. The previous results indicate that it is not
worthy to add additional links to the forest of type-B nodes. Therefore,
no links will be added except for the initial ones, or, in other words,
renegotiation will always fail. The dynamics will halt as soon as
each player has acted once. Formally:

\begin{prop}
\label{prop:credible theat part 4-2}If the game obeys Dynamic Rules
\#1 and \#2b, then the system will converge to a solution in which
the total cost is at most 
\begin{eqnarray*}
\mathcal{S} & = & |T_{A}|\left(|T_{A}|-1\right)\left(c+A\right)+2\left(|T_{B}|-1\right)^{2}+\left(A+1\right)\left(3|T_{A}||T_{B}|-|T_{A}|+|T_{B}|\right)+2c|T_{B}|\,.
\end{eqnarray*}
Furthermore, for $|T_{B}|\gg|T_{A}|\gg1$, we have $\mbox{\ensuremath{\mathcal{S}/\mathcal{S}_{optimal}\leq3/2.}}$
Moreover, if every player plays at least once in O(N) turns, convergence
occurs after o(N) steps. Otherwise, e.g., if players play in a random
order, convergence occurs exponentially fast. \end{prop}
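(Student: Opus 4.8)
The plan is to reuse, almost verbatim, the structural induction from Theorem~\ref{cor:credible threat part 3-1}, and then exploit the extra rigidity that Dynamic Rule~\#2b imposes. First I would prove, by induction on the turn number, that at every turn the network is exactly a (possibly empty) type-A clique, a star $S$ of type-B nodes centred at some node $x$, a (possibly empty) set $L$ of type-B stubs attached to the first type-A node $k$ that ever linked to a type-B node, and a set $D\subseteq T_A$ of clique members also joined to $x$ (Fig.~\ref{fig:credible threat corr.-2}(a)); as in the previous theorem I would take $(k,x)\in E$ as the main case and defer $(k,x)\notin E$ --- which only adds the cross-tier links of Fig.~\ref{fig:cross-tiers-2} and leaves the cost bound intact --- to a short separate argument. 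The point that makes the induction go through is that a \emph{newly arrived} player is indifferent to whether Dynamic Rule~\#2a or~\#2b is in force: being disconnected, it simply grabs its single optimal link, which is $(r,k)$ or $(r,x)$ according to the sign of~\ref{eq: term 2-2}, exactly as before. So the only thing left to re-examine is the behaviour of players that are already in the graph.

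The heart of the argument is then the observation that, under Rule~\#2b, an existing type-B player is \emph{frozen}: it may not first disconnect itself to improve its bargaining position, so its only admissible move is the addition of a single link that strictly lowers its own cost at that very act. But the analysis in Theorem~\ref{cor:credible threat part 3-1} already rules every such move out --- the star centre $x$ rejects a link to any $i\in L$ because it would shorten $d(i,x)$ by only two while $c_B>2$, symmetrically $k$ rejects links to members of $S$, and no type-B--type-B link inside $S$ or $L$ helps since it shortens the distance to a single node by one. Hence the first time a type-B player acts after some type-A player is present, it settles permanently as a leaf of $k$ (joining $L$) or of $x$ (joining $S$); type-A players only ever keep their clique edges and possibly the single edge to $x$ governed by~\ref{eq: term1-3}. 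Consequently the dynamics halt as soon as every player has acted once, which gives convergence in $O(N)$ turns under any schedule in which each player acts within $O(N)$ turns, and, for a uniformly random schedule, the exponential tail of Lemma~\ref{lem:decay time-1}.

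With the terminal configuration pinned down to ``type-A clique plus the two stars $S$ and $L$'', the cost bound follows by crude counting of the four contributions to $\mathcal{S}$: the clique contributes $|T_A|(|T_A|-1)(c_A+A)$; the type-A--type-B distances, each at most two, contribute at most $2(A+1)|T_A||T_B|$; the $|T_B|$ type-B stub links contribute $2c_B|T_B|$; and the type-B--type-B distances, each at most three, contribute $O(|T_B|^2)$, the worst value over the reachable splits being attained (up to lower-order terms) at the balanced split $|L|=|S|=|T_B|/2$. Summing these and using $c_B\ge c_A$ yields the claimed upper bound on $\mathcal{S}$; in the limit $|T_B|\to\infty$ with $|T_A|\in\omega(1)$ and $|T_B|\in\omega(|T_A|)$ only the $\Theta(|T_B|^2)$ terms survive, and comparing with $\mathcal{S}_{optimal}=2|T_B|^2+o(|T_B|^2)$ from Proposition~\ref{lem:optimal solution-1} gives $\mathcal{S}/\mathcal{S}_{optimal}\le 3/2$. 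The main obstacle I expect is the $c_A<2$ case omitted from the main text: there a star leaf $r\in S$ may profitably add a link to a type-A node, and when $2\le c_A\le c_B$ a link between some $i\in L$ and some $j\in S$ may form; I would argue, paralleling the appendix extension of Theorem~\ref{cor:credible threat part 3-1}, that every such link only lowers the social cost, that once it is in place both its endpoints are within distance three of everyone so no second link of that kind is worthwhile, and that it creates no incentive for any edge removal --- so the bound is unaffected.
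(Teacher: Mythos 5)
Your proposal is correct and follows essentially the same route as the paper's own proof: the same structural induction inherited from Theorem~\ref{cor:credible threat part 3-1}, the same observation that new arrivals are unaffected by the choice between Rules \#2a and \#2b while existing type-B players are frozen (so the dynamics halt once everyone has acted), the same crude distance counting with the balanced split $|L|=|S|=|T_B|/2$ as the worst case, and the same deferral of the $c_A<2$ and cross-tier-link cases. The only remark worth adding is that your counting reproduces the $3|T_B|^2$-type bound actually derived in the paper's proof (eq.~\ref{eq:bound for 2b-2}) rather than the $2(|T_B|-1)^2$ term appearing in the proposition's displayed formula --- an internal inconsistency of the paper, not of your argument, and one that does not affect the asymptotic claim $\mathcal{S}/\mathcal{S}_{optimal}\leq 3/2$.
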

\begin{proof}
We discuss the case $c_{A}\geq2$ and $c_{B}\geq3$. The extension
for $1>c_{B}>3$ appears in the appendix. The first part of the proof
follows the same lines of the previous theorem (Theorem \ref{cor:credible threat part 3-1}).
We claim that at any given turn, the network structure is composed
of the same structures as before. Here, we discuss the scenario where
$(k,x)\in E,$ and we address the other possibility in the appendix. 

We prove by induction. Clearly, at turn one the induction assumption
is true. Note that for newly arrived players, are not affected by
either Dynamic Rules \#2a or \#2b. Hence, we only need to discuss
the change in policies of existing players. The only difference from
the dynamics described in the Theorem \ref{cor:credible threat part 3-1}
is that the a type-B players may not disconnect itself. In this case,
as the discussion there indicates the star center $x$ will refuse
a link with $i\in L$ as it only reduce $d(i,x)$ by two. Equivalently,
$k$ will refuse to establish additional links with $i\in|S|.$

In other words, as soon the first batch of type A player arrives,
all type-B players will become stagnant, either they become leaves
of either node $k$, $|L|$, or members of the star $|S|$, according
to the the sign of \ref{eq: term 2 -kx removed-1} at the time they.
The maximal distance between a type-A player and a type B player is
$2$. The maximal value of the type B - type B term is the social
cost function is when $|L|=|S|=|T_{B}|/2$. In this case, this term
contributes $3|T_{B}|^{2}$ to the social cost. Therefore, the social
cost is bounded by 
\begin{equation}
\mathcal{S}=|T_{A}|\left(|T_{A}|-1\right)\left(c_{A}+A\right)+3|T_{B}|^{2}+2c_{B}|T_{B}|+2|T_{A}||T_{B}|\left(A+1\right)\label{eq:bound for 2b-2}
\end{equation}

where we included the type-A clique's contribution to the social cost
and used $c_{B}\geq c_{A}.$ 

Assume that at some point the link $(k,x)$ was removed. In this case,
the new type-B arrival preference is changed according to eq. \ref{eq: term 2 -kx removed-1}.
Nevertheless, this change does not create an incentive for new type-B
to type-B links, and the previous conclusion holds: as soon as all
the type-B players have joined the game, they become stagnant, and
the game holds.

Consider now the case $c_{A}\leq2.$ and assume that the link $(k,x)$
exists. As in Theorem \ref{cor:credible threat part 3-1}, if the
active player is $r\in S$ it may establish a \emph{single} link $(r,j)$
with a type A player $j\in T_{A}$. As long as there is one player
$i\in S$ that is not connected to player $j\in T_{A}$, i.e., $(i,j)\notin E$,
then at player $j$'s turn, a link to the star's center $x$ will
reduced the costs of both parties. Following that, player $j$ will
remove the links to players in $S$. If, for every $i\in S$ the link
$(i,j)\in E$, then $j$ is a new star center, and every $i\in S$
will disconnect its link with $x$ at its turn. The end result is
that either for every $j\in T_{A}$ the link $(j,x)$ exists, or that
$S=\emptyset$ and the type-B nodes are leaves of various type-A nodes.
A direct calculation shows that the previous bound for the social
cost is still effective. The discussion in Theorem \ref{cor:credible threat part 3-1}
shows that if $c_{B}\leq2$, then player $x$ will not disconnect
any link to $i\in S,$ as it increases its cost by at least two. If
there exists a link $(i,j)$ with $j\in T_{B}$ and $c_{B}\geq2$
than player $x$ may disconnect the link $(i,x)$, which will only
accelerate the convergece to the aforementioned state, where $S=\emptyset$.
For every player $k\in T_{A}$ denote the set of type-B players that
have a direct link with it as $|L_{k}|$. If if $c_{B}\leq2$, there
will be additional links between $i\in L_{k}$ and $j\in L_{k'}$
for $k\neq k'.$ As before, any of the aforementioned link does not
affect the connection preference of a new type-B, which is set by
\ref{eq: term 2-2} where $|L|\leftarrow\max\{|L_{K}\}$ is the largest
set of a type-B player that is connected to $k\in T_{A}$.

If $3\geq c_{B}\geq c_{A}\geq2$ and $(k,x)\in E$, then only links
between $i\in S$ and $j\in L$ are feasible. We have shown that such
links only reduce the social cost, do not incite link removals, and
do not effect the considerations of new type-B player. 

Taking the limit $N\rightarrow\infty$ in eq. \ref{eq:bound for 2b-2}
and using $T_{A}\in\omega(1)$, $T_{B}\in\omega(T_{A})$, we obtain
$\mbox{\ensuremath{\mathcal{S}/\mathcal{S}_{optimal}\leq3/2}}$. \end{proof}

Theorem \ref{cor:credible threat part 3-1} and Proposition \ref{prop:credible theat part 4-2}
shows that the intermediate network structures of the type-B players
are not necessarily trees, and additional links among the tier two
players may exist, as found in reality. Furthermore, our model predicts
that some cross-tier links, although less likely, may be formed as
well. If Dynamic Rule \#2a is in effect, These structures are only
transient, otherwise they might remain permanent. 

The dynamical model can be easily generalized to accommodate various
constraints. Geographical constraints may limit the service providers
of the minor player. The resulting type-B structures represent different
geographical regions. Likewise, in remote locations state legislation
may regulate the Internet infrastructure. If at some point regulation
is relaxed, it can be modeled by new players that suddenly join the
game.

\subsection{Monetary transfers }

\label{sec:Monetary-transfers-1}

So far we assumed that a player cannot compensate other players for
an increase in their costs. However, contracts between different ASs
often do involve monetary transfers. Accordingly, we turn to consider
the effects of introducing such an option on the findings presented
in the previous sections. As before, we first consider the static
perspective and then turn to the dynamic perspective.

\subsubsection{Statics}

In the previous sections we showed that, if $A>c_{A}>1,$ then it
is beneficial for each type-A player to be connected to all other
type-A players. We focus on this case. 

Monetary transfers allow for a redistribution of costs. It is well
known in the game theoretic literature that, in general, this process
increases the social welfare.Indeed, the next lemma, shows that in
this setting, the maximal distance between players is smaller, compared
to Lemma \ref{lem:The-longest-distance}.
\begin{lem}
\label{lem:The-longest-distance-1-1}Allowing monetary transfers,
the longest distance between nodes i,$j\in T_{B}$ is $max\{\left\lfloor 2\sqrt{c}\right\rfloor ,1\}$.
The longest distance between nodes $i,j\in T_{A}$ is bounded by 
\[
\max\left\{ \left\lfloor 2\left(\sqrt{(A-1)^{2}+c}-(A-1)\right)\right\rfloor ,1\right\} 
\]
The longest distance between node $i\in T_{A}$ and node $j\in T_{B}$
is bounded by
\[
\max\left\{ \left\lfloor \left(\sqrt{(A-1)^{2}+4c}-(A-1)\right)\right\rfloor ,1\right\} 
\]

\end{lem}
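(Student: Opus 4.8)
The plan is to follow the proof of Lemma~\ref{lem:The-longest-distance-2} almost verbatim, but to exploit the weaker stability requirement under monetary transfers: by the edge condition of Corollary~\ref{lem:edges with monetary transfers.}, a chord $(i,j)$ is present in every stable graph unless $\Delta C(i,E+ij)+\Delta C(j,E+ij)\ge 0$. So for each of the three kinds of pairs I assume the two nodes are the endpoints of a shortest path of length $L$, add the chord between them (or, when sharper, a chord to a nearby clique vertex), and force the \emph{sum} of the two endpoints' cost changes to be nonnegative.

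Two preliminary facts do most of the work. First, since $1<c_A<A$, Lemma~\ref{lem:The-longest-distance-2} gives that the type-$A$ nodes form a clique, so on any shortest path the type-$A$ vertices form a contiguous block of length at most two; in particular the interior of a shortest path joining two type-$B$ nodes consists of type-$B$ vertices only, and a shortest path joining a type-$A$ to a type-$B$ node (or two type-$A$ nodes) has all its \emph{interior} vertices of type $B$. Second, Lemma~\ref{lem:shortcut benefit} bounds the gain from the chord: if the path has $k$ vertices, adding the chord between its ends cuts each endpoint's unweighted sum of path-distances by at least $\tfrac{k(k-2)}{4}$, and the true (weighted) gain is larger by $(A-1)$ times the reduction in the distance to each type-$A$ vertex on the path.

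Now take $i\in T_A$, $j\in T_B$ at distance $L$ and add $(i,j)$. This link raises $\deg(i)$ and $\deg(j)$ each by one, so the two link-cost terms sum to $c_A+c_B=2c$ --- this is exactly where the parameter $c$ enters the bounds. By Lemma~\ref{lem:shortcut benefit} on the (type-$B$-interior) path, each of $\Delta C(i,E+ij),\Delta C(j,E+ij)$ is at most its link cost minus $\tfrac{L^2-1}{4}$, and $\Delta C(j,\cdot)$ drops by a further $(A-1)(L-1)$ because its distance to the type-$A$ endpoint $i$ shrinks from $L$ to $1$. Hence $\Delta C(i,E+ij)+\Delta C(j,E+ij)\le 2c-\tfrac{L^2-1}{2}-(A-1)(L-1)$; requiring this to be $\ge 0$ for stability gives $L^2+2(A-1)L\le 4c+O(1)$, i.e. $L\le\big\lfloor\sqrt{(A-1)^2+4c}-(A-1)\big\rfloor$ (with $L\le 2$ always permitted). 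For $i,j\in T_A$ the interior of the path is again type $B$, the chord costs $c_A$ to each endpoint, and \emph{both} endpoints save the extra $(A-1)(L-1)$, so $2c_A-\tfrac{L^2-1}{2}-2(A-1)(L-1)\ge 0$, which gives $L\le 2\big(\sqrt{(A-1)^2+c_A}-(A-1)\big)\le 2\big(\sqrt{(A-1)^2+c}-(A-1)\big)$ since $c_A\le c$ (and since $1<c_A<A$ already forces the type-$A$ clique, the sharp value is $1$ and the formula is just a convenient upper bound). For $i,j\in T_B$ the chord $(i,j)$ alone, costing $c_B$ to each endpoint, already yields $2c_B-\tfrac{L^2-1}{2}\ge 0$, i.e. $L\le\lfloor 2\sqrt{c_B}\rfloor$; to get the sharper $\lfloor 2\sqrt c\rfloor$ one instead uses the chord from $i$ to the type-$A$ clique vertex $a$ lying on (or one step off) the $i$--$j$ path: its combined link cost is $c_A+c_B=2c$, and the Lemma~\ref{lem:shortcut benefit} estimate on the $i$--$a$ subpath --- together with the fact that after the chord $i$ is within distance two of the entire clique --- drives the joint cost change negative once that subpath exceeds $2\sqrt c$; combining this over the two halves of the path and absorbing the $\mathrm{mod}$ and floor corrections yields $L\le\max\{\lfloor 2\sqrt c\rfloor,1\}$.

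The main obstacle is this last constant-chasing in the type-$B$--type-$B$ case: the obvious chord only gives $2\sqrt{c_B}$, and squeezing it down to the advertised $2\sqrt c$ needs the shortcut routed through a clique vertex and a careful recombination of the two sub-path estimates together with the rounding terms from Lemma~\ref{lem:shortcut benefit}. Everything else is the routine (if slightly tedious) bookkeeping of the $A$-weights and of the contiguous-block structure of type-$A$ vertices on shortest paths.
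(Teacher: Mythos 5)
Your treatment of the $T_A$--$T_A$ and $T_A$--$T_B$ pairs is correct and is essentially the paper's own argument: add the chord, invoke the edge criterion of Corollary \ref{lem:edges with monetary transfers.} to sum the two endpoints' cost changes, bound the distance savings via Lemma \ref{lem:shortcut benefit}, add the $(A-1)$-weighted saving for each type-$A$ endpoint, and solve the resulting quadratic. The link-cost sum $c_A+c_B=2c$ (resp.\ $2c_A\le 2c$) is exactly where $c$ enters, and your quadratics reproduce the stated bounds.

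The gap is in the $T_B$--$T_B$ case, and your proposed repair does not close it. The detour ``chord from $i$ to the type-$A$ clique vertex $a$ lying on (or one step off) the $i$--$j$ path'' presupposes that some clique vertex lies on or adjacent to a shortest $i$--$j$ path, but nothing forces this: two type-$B$ nodes in the same tree hanging off the clique are joined by a shortest path consisting entirely of type-$B$ vertices, and the nearest clique vertex can be far from that path. Even when such an $a$ exists, forcing the chords $(i,a)$ and $(a,j)$ only bounds $d(i,a)$ and $d(a,j)$ by the mixed-pair bound $\sqrt{(A-1)^2+4c}-(A-1)$, whose sum is not $\le 2\sqrt{c}$ in general (for $A$ close to $1$ it approaches $4\sqrt{c}$), so ``combining over the two halves'' cannot yield the advertised constant. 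For what it is worth, the paper's own proof of this case is precisely your ``obvious chord'' computation with the link cost written as $c$ rather than $c_B$, namely $\Delta C(i,E+ij)+\Delta C(j,E+ij)<2\left(c-k^{2}/4\right)$; since each endpoint is type $B$ the honest total link cost is $2c_B\ge 2c$, so that step is exact only when $c_A=c_B$, and what the direct chord rigorously delivers is the $\left\lfloor 2\sqrt{c_B}\right\rfloor$ you obtained. Your instinct that something extra is needed to reach $\left\lfloor 2\sqrt{c}\right\rfloor$ is therefore right, but the clique-vertex detour is not that something, and the paper does not supply it either.
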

Assume $d(i,j)=k\geq\left\lfloor 2\sqrt{c}\right\rfloor >1$ and $j\in T_{B}$.
Then there exist nodes $(x_{0}=i,x_{1,}x_{2},..x_{k}=j)$ such that
$d(i,x_{\alpha})=\alpha$. By adding a link $(i,j)$ the change in
cost of node $i$ is, according to lemma \ref{lem:The-longest-distance}
is
\begin{proof}
\begin{align*}
 & \Delta C(i,E+ij)\\
= & c-\frac{k\left(k-2\right)+mod(k,2)}{4}\\
< & c-\frac{k\left(k-2\right)}{4}\\
< & 0
\end{align*}
 Therefore, it is of the interest of player $i$ to add the link. 

Consider the case that $j\in T_{A}$ and 
\[
d(i,j)=k\geq\left\lfloor 2\left(\sqrt{(A-1)^{2}+c}-(A-1)\right)\right\rfloor >1
\]
 The change in cost after the addition of the link $(i,j)$ is 
\begin{align*}
 & c-\sum_{\alpha=1}^{k}d(i,x_{\alpha})(1+\delta_{x_{\alpha},A}(A-1))\\
 & +\sum_{\alpha=1}^{k}d'(i,x_{\alpha})(1+\delta_{x_{\alpha},A}(A-1))\\
= & c-\sum_{\alpha=1}^{k}\left(d(i,x_{\alpha})-d'(i,x_{\alpha})\right)(1+\delta_{x_{\alpha},A}(A-1))\\
< & c-\sum_{\alpha=1}^{k}d(i,x_{\alpha})+\sum_{\alpha=1}^{k}d'(i,x_{\alpha})\\
= & c-\left(1+k\right)k/2-kA+1+\left(1+k/2\right)k/2+A-1\\
= & c-k^{2}/4-k(A-1)
\end{align*}

If $i,j\in T_{B}$ then
\[
\Delta C(i,E+ij)+\Delta C(j,E+ij)<2\left(c-k^{2}/4\right)<0
\]

for $k\geq\left\lfloor 2\sqrt{c}\right\rfloor $ and the link will
be established.

Likewise, if $i,j\in T_{A}$ and 
\[
k\geq\left\lfloor 2\left(\sqrt{(A-1)^{2}+c}-(A-1)\right)\right\rfloor 
\]
 the link $(i,j)$ will be formed.

If $i\in T_{A}$, $j\in T_{B}$ and 
\[
k\geq\left\lfloor \left(\sqrt{(A-1)^{2}+4c}-(A-1)\right)\right\rfloor 
\]
 then 
\begin{eqnarray*}
\Delta C(i,E+ij)+\Delta C(j,E+ij) & < & 2c-k^{2}/2-k(A-1)\\
 & < & 0
\end{eqnarray*}

This concludes our proof.
\end{proof}
Indeed, the next proposition indicates an improvement on Proposition
\ref{lem:optimal solution}. Specifically, it shows that the optimal
network is always stabilizable, even when $\frac{A+1}{2}>c$. Without
monetary transfers, the additional links in the optimal state (Fig.
\ref{fig:The-optimal-state - monetary transfers}), connecting a major
league player with a minor league player, are unstable as the type-A
players lack any incentive to form them. By allowing monetary transfers,
the minor players can compensate the major players for the increase
in their costs. It is worthwhile to do so only if the social optimum
of the two-player game implies it. The existence or removal of an
additional link does not inflict on any other player, as the distance
between every two players is at most two.
\begin{prop}
\label{prop:optimality under monetary-1}\textup{\emph{The price of
stability is $1$.}} If $\frac{A+1}{2}\leq c\,,$ then Proposition
\ref{lem:optimal solution} holds. Furthermore, if $\frac{A+1}{2}>c$,
then the optimal stable state is such that all the type $B$ nodes
are connected to all nodes of the type-A clique. In the latter case,
the social cost of this stabilizable network is $\mathcal{S}=2|T_{B}|\left(|T_{B}|+\left(\frac{A+1}{2}+c\right)|T_{A}|\right)+|T_{A}|^{2}\left(c+A\right).$Furthermore,
if $|T_{B}|\gg1,|T_{A}|\gg1$ then, omitting linear terms in $|T_{B}|,|T_{A}|$,
$\mathcal{S}=2|T_{B}|(|T_{B}|+\left(A+c\right)|T_{A}|)+|T_{A}|^{2}\left(c+A\right).$\end{prop}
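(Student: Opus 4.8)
The plan is to reduce the statement to the pairwise edge criterion of Corollary~\ref{lem:edges with monetary transfers.}, applied to the configurations already identified as socially optimal in Proposition~\ref{lem:optimal solution}. Since the monetary terms cancel in $\mathcal{S}=\sum_{i}C(i)$, the social cost, and with it the set of optimal topologies, is unchanged by introducing transfers: for $\frac{A+1}{2}\le c$ the optimum is the single-center star (all of $T_{B}$ attached to one fixed $j\in T_{A}$, with $T_{A}$ a clique), and for $\frac{A+1}{2}>c$ it is the configuration in which $T_{A}$ is a clique, every node of $T_{B}$ is joined to every node of $T_{A}$, and no $T_{B}$--$T_{B}$ link is present. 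In both configurations the diameter is $2$, so adding or deleting a single link changes exactly one pairwise distance (by $1$); hence it suffices to verify, edge by edge, that one optimal configuration is stable in the monetary model. That gives price of stability $1$, and the cost value is the one already obtained in the optimality part of Proposition~\ref{lem:optimal solution}.

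For $\frac{A+1}{2}\le c$ I would check that the single-center star, stabilizable without transfers, stays stable with transfers. Removals only become harder, since Corollary~\ref{lem:edges with monetary transfers.} now requires the pair's joint cost to decrease: a clique link has $\Delta C(j,E-jj')+\Delta C(j',E-jj')=2(c_{A}-A)<0$, and deleting the link from the center to a node of $T_{B}$ disconnects that node. The only absent links are $T_{B}$--$T_{B}$ pairs, whose joint addition value is $\Delta C(i,E+ii')+\Delta C(i',E+ii')=2(c_{B}-1)>0$, and links $(i,j')$ with $i\in T_{B}$ and $j'\in T_{A}$ not the center, whose joint addition value is $(c_{B}-A)+(c_{A}-1)=2c-A-1\ge 0$ by the hypothesis. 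So Corollary~\ref{lem:edges with monetary transfers.} forbids every deviation, the star is stabilizable, and Proposition~\ref{lem:optimal solution} carries over verbatim.

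The substantive case is $\frac{A+1}{2}>c$, where the complete $T_{A}$--$T_{B}$ configuration is optimal but was \emph{not} stabilizable without transfers. Its only absent links are $T_{B}$--$T_{B}$ pairs, with joint addition value $2(c_{B}-1)>0$, so nothing can be added. For removals, a clique link has joint value $2(c_{A}-A)<0$, and a link $ij$ with $i\in T_{B}$, $j\in T_{A}$, once deleted, leaves $d(i,j)=2$ through any other clique node, so $\Delta C(i,E-ij)+\Delta C(j,E-ij)=(c_{B}-A)+(c_{A}-1)=2c-A-1<0$, the strict inequality being exactly the hypothesis $\frac{A+1}{2}>c$. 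Hence nothing is removed, the configuration is stabilizable, and being optimal it yields price of stability $1$. The social cost follows by the same enumeration as in Proposition~\ref{lem:optimal solution}: the clique contributes $|T_{A}|(|T_{A}|-1)(c_{A}+A)$, and each $T_{B}$ node has degree $|T_{A}|$, sits at distance $1$ from every node of $T_{A}$ and at distance $2$ from every other $T_{B}$ node; collecting these terms reproduces the displayed value, and omitting the terms linear in $|T_{A}|,|T_{B}|$ gives the large-network form.

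I expect the only delicate step to be the removal check in the second regime: one must use the \emph{joint} condition of Corollary~\ref{lem:edges with monetary transfers.} and not a unilateral one, since a type-$A$ endpoint acting alone would gladly drop a $T_{B}$--$T_{A}$ link (its own term $c_{A}-1$ is positive, as $c_{A}>1$), while the type-$B$ node it would isolate is willing to compensate it enough to keep the link; it is exactly the joint inequality $2c<A+1$ that makes the link survive. Everything else is the routine bookkeeping already carried out for Proposition~\ref{lem:optimal solution}.
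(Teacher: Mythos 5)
Your proposal is correct and follows essentially the same route as the paper: it verifies stability of the two candidate optimal configurations (single-center star when $\frac{A+1}{2}\le c$, complete type-A--type-B attachment when $\frac{A+1}{2}>c$) edge by edge via the joint criterion of Corollary~\ref{lem:edges with monetary transfers.}, with the hypothesis entering exactly through the sign of $2c-A-1$, and then reuses the cost bookkeeping of Proposition~\ref{lem:optimal solution}. If anything, your explicit computation of the joint addition value $(c_{B}-A)+(c_{A}-1)=2c-A-1$ for a non-center $T_{B}$--$T_{A}$ link is cleaner than the paper's displayed per-player terms, and your closing remark about needing the joint rather than unilateral removal condition in the second regime is precisely the point the paper relies on.
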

\begin{proof}
For the case $\frac{A+1}{2}\leq c$ , it was shown in Prop. \ref{lem:optimal solution}
that the optimal network is a network where all the type $B$ nodes
are connected to a specific node $j\in T_{A}$ of the type-A clique
(Fig. \ref{fig:The-optimal-state - monetary transfers}(a)) and that
this network is stabilizable. Therefore, we only need to address its
stability under monetary transfers. We apply the criteria described
in Corollary \ref{lem:edges with monetary transfers.} and show that
for every two players $i,j'$ such that $(i,j')\notin E$ we have
\[
\Delta C(i,E+ij)+\Delta C(j',E+ij')>0.
\]

If $i\in T_{B}$ then 
\[
\Delta C(i,E+ij')=c_{B}-1>0.
\]

If $i\in T_{A}$ we have that 
\[
\Delta C(i,E+ij')=c_{A}-A>0.
\]

Then, for $i\in T_{B}$ and either $j'\in T_{A}$ or $j'\in T_{B}$
we have $\Delta C(i,E+ij)+\Delta C(j',E+ij')>0$ and the link would
not be established. For every edge $(i,j)\in E$ we have that both
$\Delta C(i,E+ij)<0$ and $\Delta C(j,E+ij)<0$ (Prop. \ref{lem:optimal solution})
and therefore
\[
\Delta C(i,E+ij)+\Delta C(j,E+ij)<0.
\]

Assume $\frac{A+1}{2}>c$. It was shown in Proposition \ref{lem:optimal solution}
that the optimal network is a network where every type $B$ player
is connected to all the members of the type-A clique (Fig. \ref{fig:The-optimal-state - monetary transfers}(b)).
Under monetary transfers, this network is stabilizable, since for
for $i\in T_{B}$ , $j\in T_{A}$
\begin{eqnarray*}
 &  & \Delta C(i,E+ij)+\Delta C(j',E+ij')\\
 & = & 2c-A-1<0
\end{eqnarray*}

and the link $(i,j)$ will be formed. The previous discussion shows
that it is not beneficial to establish links between two type-B players.
Therefore, this network is stabilizable.

In conclusion, in both cases, the price of stability is 1.
\end{proof}
In the network described by Fig. \ref{fig:The-optimal-state - monetary transfers},
the minor players are connected to multiple type-A players. This emergent
behavior, where ASs have multiple uplink-downlink but very few (if
at all) cross-tier links, is found in many intermediate tiers.

Next, we show that, under mild conditions on the number of type-A
nodes, the price of anarchy is $3/2$, i.e., \emph{a fixed number}
that does not depend on any parameter value. As the number of major
players increases, the motivation to establish a direct connection
to a clique member increases, since such a link reduces the distance
to all clique members. As the incentive increases, players are willing
to pay more for this link, thus increasing, in turn, the utility of
the link in a major player's perspective. With enough major players,
all the minor players will establish direct links. Therefore, any
stable equilibrium will result in a very compact network with a diameter
of at most three. This is the main idea behind the following theorem.

\begin{thm}
\label{prop:maximal distance from clique with money-1}The maximal
distance of a type-B node from a node in the type-A clique is $\max\left\{ \left\lfloor \sqrt{\left(A|T_{A}|\right)^{2}+4cA|T_{A}|}-A|T_{A}|\right\rfloor ,2\right\} $.
Moreover, if \textup{$|T_{B}|\gg1,|T_{A}|\gg1$ }\textup{\emph{then
the price of anarchy is upper-bounded by 3/2.}}\end{thm}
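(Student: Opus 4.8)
The plan is to mimic the structure of the proof of Lemma \ref{lem:The-longest-distance-2}, but now using the monetary-transfer stability criterion from Corollary \ref{lem:edges with monetary transfers.} and exploiting the fact that the entire type-A clique (which is stable since $1<c_{A}<A$) acts as a ``reservoir'' of short paths. The key observation is this: if a type-B node $i$ is at distance $k$ from the nearest clique node, there is a path $i=x_{0},x_{1},\dots,x_{k}$ with $x_{k}$ in the clique, and if $i$ forms the link $(i,x_{k})$ directly, then $i$ shortens its distance not only to the $\sim k^{2}/4$ nodes on the path (by Lemma \ref{lem:shortcut benefit}, contributing $\approx c$-scale savings), but also shortens its distance to \emph{every one of the $|T_{A}|-1$ other clique nodes} from roughly $k+1$ to $2$, and each such saving is weighted by $A$. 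So $\Delta C(i,E+ix_{k}) \le c_{B} - \tfrac14 k(k-2) - A(|T_{A}|-1)(k-2)$ roughly, where the dominant new term is $A|T_{A}|k$. Meanwhile $\Delta C(x_{k},E+ix_{k})\ge c_{A}-(\text{small})$, but when we sum the two, the large negative contribution from $i$'s side dominates as soon as $k$ exceeds the stated threshold. Setting $c_{B}\le c$-type quantities aside and solving the quadratic $\tfrac14 k^{2}+A|T_{A}|\,k/... - c... =0$ gives the root $\sqrt{(A|T_{A}|)^{2}+4cA|T_{A}|}-A|T_{A}|$, matching the claimed bound; the $\max\{\cdot,2\}$ accounts for the trivial case where even distance-$2$ links are never worth forming.

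\medskip

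Concretely, first I would fix a type-B node $i$ whose distance to the clique is $k\ge 3$ and pick the shortest path $x_{0}=i,\dots,x_{k}$ with $x_{k}$ in the clique. I would then compute $\Delta C(i,E+ix_{k})$ by splitting the sum of distance changes into (a) the nodes on the path, handled by Lemma \ref{lem:shortcut benefit}, giving a reduction of order $k^{2}/4$ at unit weight, and (b) the remaining $|T_{A}|-1$ clique nodes, each of whose distance to $i$ drops from at least $k$ (really $\ge k-1$; I would be careful with the exact offset exactly as in Lemma \ref{lem:The-longest-distance-2}) down to $2$, each at weight $A$, giving a reduction of order $A|T_{A}|k$. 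Then I would bound $\Delta C(x_{k},E+ix_{k})\le c_{A}$ crudely (a clique node's distances only decrease). Adding: $\Delta C(i,\cdot)+\Delta C(x_{k},\cdot) \le c_{A}+c_{B} - \tfrac14 k(k-2) - A|T_{A}|(k-2)$, which is negative once $k\ge \sqrt{(A|T_{A}|)^{2}+4cA|T_{A}|}-A|T_{A}|$, so by Corollary \ref{lem:edges with monetary transfers.} the link must have been formed — contradiction. Hence the maximal node-to-clique distance is the claimed quantity (or $2$, whichever is larger).

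\medskip

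For the price-of-anarchy half, in the limit $|T_{B}|\gg|T_{A}|\gg1$ the first part forces the diameter of the whole graph to be $O(1)$: every type-B node is within a bounded distance of the clique, so any two type-B nodes are within a bounded distance of each other, and in fact when $|T_{A}|$ is large the threshold drops so that the node-core distance is exactly $2$. I would argue that then, in \emph{any} stable equilibrium, each type-B node has distance exactly $2$ (or less) to every clique node and distance at most $4$ (really, I expect exactly $2$ once we use $c_{A}<A$ plus largeness of $|T_{A}|$) to every other type-B node, and degree close to $1$ on the type-B side in the worst case but degree up to $|T_{A}|$ in the best case; this pins the worst-case social cost between $2|T_{B}|^{2}+\Theta(|T_{A}||T_{B}|)$ and $3|T_{B}|^{2}+\Theta(|T_{A}||T_{B}|)$, while $\mathcal{S}_{optimal}$ from Proposition \ref{lem:optimal solution} (or its monetary version, Proposition \ref{prop:optimality under monetary-1}) has leading term $2|T_{B}|^{2}$. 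Taking the ratio of leading $|T_{B}|^{2}$ coefficients gives the bound $3/2$.

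\medskip

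The main obstacle I anticipate is getting the quadratic in $k$ to produce \emph{exactly} the stated root rather than something off by lower-order terms — in particular handling the ``$k$ vs.\ $k-1$ vs.\ $k+1$'' bookkeeping for the distance from $i$ to the far clique nodes (is it $k$, or $k-1$, before the link, and $2$ after?), and the floor function, with the same level of care as the proof of Lemma \ref{lem:The-longest-distance-2}. A secondary subtlety is justifying in the PoA step that the type-B--to--type-B distance is bounded (and, in the large-$|T_{A}|$ regime, that it is exactly $2$), since Lemma \ref{lem:The-longest-distance-1-1} bounds that distance by $\lfloor 2\sqrt{c}\rfloor$ which is a \emph{constant} in the relevant limit — that constant washes out of the ratio, so the $3/2$ is safe, but it must be spelled out that the $|T_{B}|^{2}$ coefficient in the worst case is at most $3$ (from $|L|=|S|=|T_{B}|/2$ as in Proposition \ref{prop:credible theat part 4-2}) and never worse.
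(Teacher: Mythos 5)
Your first half --- the node-to-clique distance bound --- is essentially the paper's own argument: propose a direct link from the distant type-B node to a clique node, split the savings into the $k^{2}/4$ path-shortcut term from Lemma \ref{lem:shortcut benefit} and the $A$-weighted term $\approx (k-2)A|T_{A}|$ from collapsing the distance to the rest of the clique down to $2$, bound the clique node's side crudely by $c_{A}$, and invoke Corollary \ref{lem:edges with monetary transfers.} to force the link. The only real difference is that the paper aims the proposed link at a \emph{non-nearest} clique node $j$, with the nearest clique node $x_{k-1}$ sitting on the path; this collects the $k^{2}/4$ shortcut savings at \emph{both} endpoints (hence the $k^{2}/2$ in the paper's quadratic) and, more importantly, directly yields that the distance to \emph{every} clique node, not merely the nearest, is at most $2$ in the large-$|T_{A}|$ regime --- a fact the PoA step needs. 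Your bookkeeping worries about reproducing the exact root are warranted but immaterial; the paper's own algebra does not match its stated formula to lower order either.

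The price-of-anarchy half has a genuine gap. The bound $3/2$ is exactly the statement that the worst-case type-B--to--type-B distance is $3$: that distance is the coefficient of the dominant $|T_{B}|^{2}$ term in the worst stable social cost, set against $2|T_{B}|^{2}$ in the optimum, so it emphatically does \emph{not} ``wash out of the ratio.'' None of your three candidate bounds delivers $3$: the triangle inequality through the clique gives $4$ (hence only PoA $\le 2$); Lemma \ref{lem:The-longest-distance-1-1} gives the constant $\left\lfloor 2\sqrt{c}\right\rfloor$, which exceeds $3$ whenever $c\geq 4$; and Proposition \ref{prop:credible theat part 4-2} describes configurations reachable under a particular dynamic, so it cannot bound the cost over \emph{all} stable equilibria, which is what the (static) price of anarchy requires. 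The paper closes this step by asserting that once every type-B node is within distance $2$ of every clique node the type-B--to--type-B distance is $3$, and then writing the worst-case cost as $3|T_{B}|\left(|T_{B}|-1\right)+\dots$; to recover $3/2$ rather than $2$ you must supply that diameter claim explicitly (and note that it needs more than the plain triangle inequality, which only gives $4$).
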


\begin{proof}
Assume $d(i,j)=k>2$, where $i\in T_{B}$ and $j\in T_{A}$ but node
$j$ is not the nearest type-A node to \emph{i}. Therefore, there
exists a series of nodes $(x_{0}=i,x_{1},...,x_{k-1},x_{k}=j)$ such
that $x_{k-1}$ is a member of the type-A clique.

The change in player $j$'s cost by establishing $(i,j)$ is 
\begin{eqnarray*}
\Delta C(j,E+ij) & = & c_{A}-\sum_{\alpha=1}^{k}d(j,x_{\alpha})(1+\delta_{x_{\alpha},A}(A-1))\\
 &  & +\sum_{\alpha=1}^{k}d'(j,x_{\alpha})(1+\delta_{x_{\alpha},A}(A-1))\\
 & < & c_{A}-\sum_{\alpha=1}^{k}\left(d(j,x_{\alpha})-d'(j,x_{\alpha})\right)\\
 & < & c_{A}-k^{2}/4.
\end{eqnarray*}

The corresponding change in player $i$'s cost is
\[
\Delta C(i,E+ij)<c_{B}-k^{2}/4-(2k-4)A-\left(k-2\right)A\cdot\left(|T_{A}|-2\right).
\]

The first term is the link cost, the second and third terms are due
to change of distance from players $x_{k-1},x_{k}$ and the last term
express the change of distance form the rest of the type-A clique.
As
\[
\Delta C(i,E+ij)<c_{B}-k^{2}/4-\left(k-2\right)A\cdot|T_{A}|
\]

the total change in cost is 
\begin{eqnarray*}
 &  & \Delta C(j,E+ij)+\Delta C(i,E+ij)\\
 & < & 2c-k^{2}/2-\left(k-2\right)A\cdot|T_{A}|\\
 & < & 0
\end{eqnarray*}

for 
\[
k<\sqrt{\left(A|T_{A}|\right)^{2}+4cA|T_{A}|}-A|T_{A}|
\]

Note that as the number of member in the type-A clique, $\left|T_{A}\right|$,
increases, the right expression goes to 0, in contradiction to our
initial assumption. Therefore, in the large network limit the maximal
distance of a type-B node from a node in the type-A clique is 2. In
this case, the maximal distance between two type-A nodes is 1 (as
before), between type-A and type-B nodes is 2 and between two type-B
nodes is 3. The maximal social cost in an equilibrium is
\begin{eqnarray*}
\mathcal{S} & < & 3|T_{B}|\left(|T_{B}|-1\right)+|T_{B}|c_{B}\\
 &  & +2|T_{B}||T_{A}|\left(A+1\right)+|T_{A}|\left(|T_{A}|-1\right)\left(c_{A}+A\right)
\end{eqnarray*}

For $|T_{B}|\gg1,|T_{A}|\gg1$ we have 
\[
S=3|T_{B}|^{2}+2|T_{B}||T_{A}|\left(A+1\right)+|T_{A}|^{2}\left(c_{A}+A\right).
\]

comparing this with the optimal cost in this limit 
\[
S_{opt}=2|T_{B}|^{2}+2|T_{B}||T_{A}|\left(A+1\right)+|T_{A}|^{2}\left(c_{A}+A\right)
\]

we obtain the required result.
\end{proof}
This theorem shows that by allowing monetary transfers, the maximal
distance of a type-B player to the type-A clique depends inversely
on the number of nodes in the clique and the number of players in
general. The number of ASs increases in time, and we may assume the
number of type-A players follows. Therefore, we expect a decrease
of the mean ``node-core distance'' in time. Our data analysis, which
appears in the appendix, indicates that this real-world distance indeed
decreases in time.

\subsubsection{Dynamics\label{sub: monetary Dynamics-1}}

We now consider the dynamic process of network formation under the
presence of monetary transfers. For every node $i$ there may be several
nodes, indexed by \emph{j, }such that $\Delta C(j,ij)+\Delta C(i,ij)<0,$
and player \emph{i }needs to decide on the order of players with which
it will ask to form links. We point out that the order of establishing
links is potentially important. The order by which player player $i$
will establish links depends on the pricing mechanism. There are several
alternatives and, correspondingly, several possible ways to specify
player \emph{i's }preferences\emph{, }each leading to a different
dynamic rule. 

Perhaps the most naive assumption is that if for player $j,$ $\Delta C(j,ij)>0$,
then the price it will ask player $i$ to pay is $P_{ij}=\max\{\Delta C(j,ij),0\}.$
In other words, if it is beneficial for player $j$ to establish a
link, it will not ask for a payment in order to do so. Otherwise,
it will demand the minimal price that compensates for the increase
in its costs. This dynamic rule represents an efficient market. This
suggests the following preference order rule.

\begin{defn}
Preference Order \#1: Player $i$ will establish a link with a player
$j$ such that $\Delta C(i,ij)+min\{\Delta C(j,ij),0\}$ is minimal.
The price player $i$ will pay is $P_{ij}=max\{\Delta C(j,ij),0\}$.\end{defn}

As established by the next theorem, Preference Order \#1 leads to
the optimal equilibrium fast. In essence, if the clique is large enough,
then it is worthy for type-B players to establish a direct link to
the clique, compensating a type-A player, and follow this move by
disconnecting from the star. Therefore, monetary transfers increase
the fluidity of the system, enabling players to escape from an unfortunate
position. Hence, we obtain an improved version of Theorem \ref{cor:credible threat part 3-1}.
\begin{thm}
Assume the players follow Preference Order \#1 and Dynamic Rule \#1,
and either Dynamic Rule \#2a or \#2b. If $\frac{A+1}{2}>c$, then
the system converges to the optimal solution. If every player plays
at least once in O(N) turns, convergence occurs after o(N) steps.
Otherwise, e.g., if players play in a random order, convergence occurs
exponentially fast. \end{thm}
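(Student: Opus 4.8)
The plan is to follow the three-stage template of the proof of Theorem~\ref{cor:credible threat part 3} --- a structural invariant maintained by induction over turns, an analysis of how its parameters evolve, and a convergence-rate estimate via Lemma~\ref{lem:decay time} --- while exploiting the single algebraic fact that $\frac{A+1}{2}>c$ is the same as $2c-A-1<0$. By Corollary~\ref{lem:edges with monetary transfers.}, this inequality says that any type-A--type-B link whose removal would leave the two endpoints at graph-distance two is jointly cost-reducing: its joint change of cost is $c_A+c_B-A-1=2c-A-1<0$ (the type-B endpoint saves the weight-$A$ unit of distance; the type-A endpoint loses $c_A-1>0$ but is compensated by the transfer that Preference Order~\#1 prescribes). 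Note also that $\frac{A+1}{2}>c$ together with $c_A>1$ forces $c_B<A$. The target configuration is the optimal one of Proposition~\ref{prop:optimality under monetary}: every type-B node adjacent to every node of the type-A clique (Fig.~\ref{fig:The-optimal-state - monetary transfers}(b)). As in Theorem~\ref{cor:credible threat part 3} I would prove the statement first under $c_A\ge 2$ and then remove this assumption exactly as the appendix does there.

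\emph{Structural invariant.} By induction on the turn I would track the network as the type-A clique; a set $D\subseteq T_A$ of clique nodes joined to a distinguished type-B ``star center'' $x$; a set $S$ of type-B nodes attached only to $x$; a set $L$ of type-B nodes attached only to a fixed type-A node $k$; and a set $F$ of type-B nodes each adjacent to \emph{all} of $T_A$ --- the structures of Fig.~\ref{fig:credible threat corr.}(a) enlarged by $F$; the $(k,x)\notin E$ branch and the cross-tier links of Fig.~\ref{fig:cross-tiers} are treated as in the appendix and only decrease the social cost. The induction step case-splits on the active player, using Preference Order~\#1 to fix which link is offered first, and rests on three facts. (a) For $r\in T_A$ the link $(r,x)$ has joint change $2c-A-|S|-1<0$ for every $|S|\ge 0$, so $D=T_A$ is maintained. (b) For a type-B node not yet adjacent to all of $T_A$, adding any missing clique link has joint change $2c-A-1<0$; hence --- whether it is a disconnected arrival, or (under Dynamic Rule~\#2a by disconnecting and re-optimizing, under~\#2b act by act via the joint criterion of Corollary~\ref{lem:edges with monetary transfers.}) an existing member of $S$ or $L$ --- in its turn it attaches to all of $T_A$ and joins $F$, forming no other link since once in $F$ all its distances are $\le 2$ and any extra link has nonnegative joint change (at least $2(c_B-1)>0$ against a type-B partner). (c) A stale link between $x$ (or $k$) and an $F$-node, both endpoints already adjacent to the whole clique, is removed at the holder's turn, its joint removal change being $2c_B-2>0$, whereas a clique--$F$ link is never removed ($2c-A-1<0$).

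\emph{Dynamics and convergence.} From the invariant, once all players have arrived and every type-B player has acted once we have $S=L=\emptyset$ and $F=T_B$; by (a), $D=T_A$; by (c), a further constant number of rounds prunes the stale links of $x$, leaving exactly the graph of Proposition~\ref{prop:optimality under monetary}, which is an equilibrium there. So the dynamics halt precisely at the optimum, under either Dynamic Rule~\#2a or~\#2b. If every player acts at least once in every $O(N)$ turns, this takes $O(N)$ turns. For a uniformly random order, Lemma~\ref{lem:decay time} bounds the probability that a fixed player has not yet acted the required constant number of times by turn $t$ by a quantity decaying exponentially in $t$, and a union bound over the $N$ players yields the claimed exponential convergence.

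\emph{Main obstacle.} The delicate part is the structural induction under Dynamic Rule~\#2b combined with Preference Order~\#1: one must check that the multi-step plan ``add every missing clique link, then shed the stale link to $x$'' has each individual act jointly cost-improving in the precise sense of Corollary~\ref{lem:edges with monetary transfers.} --- an act that is only weakly improving, e.g.\ a transfer that leaves the paid player indifferent, needs an infinitesimal overpayment or a tie-breaking convention --- and that the order in which stale links disappear makes the terminal graph \emph{exactly} optimal rather than merely asymptotically optimal. Careful bookkeeping of the $(k,x)\notin E$ case and of the cross-tier links of Fig.~\ref{fig:cross-tiers}, as in the appendix treatment of Theorem~\ref{cor:credible threat part 3}, belongs to this step as well.
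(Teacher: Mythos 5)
Your proposal is correct in its essential algebra --- the two inequalities that drive everything, $2c-A-1<0$ for forming a type-A--type-B (or type-A--type-A) link across distance $\geq 2$ and $2c-2>0$ for dissolving a type-B--type-B link once both endpoints sit at distance two through the clique, are exactly the ones the paper uses --- but you take a genuinely heavier route than the paper does. You import the whole star-center invariant of Theorem \ref{cor:credible threat part 3} (the sets $S$, $L$, $D$, augmented by your $F$, plus the $(k,x)\notin E$ branch and the cross-tier links of Fig.~\ref{fig:cross-tiers}) and prove it is maintained turn by turn. The paper's proof shows this machinery is unnecessary here: since $\Delta C(i,E+ij)+\Delta C(j,E+ij)\leq 2c+(1-d(i,j))\bigl(2+\delta_{i,A}(A-1)+\delta_{j,A}(A-1)\bigr)\leq 2c-(A+1)<0$ whenever $d(i,j)\geq 2$ and at least one endpoint is type-A, \emph{independently of the surrounding topology}, every active player simply forms all its links to $T_A$ in its turn; after every player has acted twice all type-B pairs are at distance two via the clique, so every type-B--type-B link has joint removal gain $2c-2>0$ and is shed; three rounds per player suffice, and the terminal graph is exactly the configuration of Proposition \ref{prop:optimality under monetary}, with the rate following from Lemma \ref{lem:decay time} as in your last paragraph. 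Your invariant approach buys nothing and creates the bookkeeping you flag as the ``main obstacle'' --- in particular your claim that an $F$-node forms no further links (joint change ``at least $2(c_B-1)>0$'') fails for a partner still at distance three, e.g.\ a member of $S$ not yet attached to the clique when $c_B<2$; such transient type-B--type-B links are harmless only because they all dissolve once everyone is clique-adjacent, which is precisely the paper's direct argument and makes the invariant redundant. Also note the slip in your fact (c): a link between $k\in T_A$ and an $F$-node is a clique--type-B link and is part of the optimum, never a ``stale'' link with joint removal gain $2c_B-2$; only the $(x,F)$ links are stale.
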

\begin{proof}
Assume it is player $i$'s turn. For every player $j$ such that $(i,j)\notin E$,
we have that $d(i,j)\geq2$. By establishing the link $(i,j)$ the
distance is reduced to one and 
\begin{eqnarray*}
 &  & \Delta C(j,E+ij)+\Delta C(i,E+ij)\\
 & \leq & 2c+\left(1-d(i,j)\right)\left(2+\delta_{i,A}(A-1)+\delta_{j,A}(A-1)\right).
\end{eqnarray*}

This expression is negative if either $i\in T_{A}$ or $j\in T_{A}$,
as 
\[
2c-A-1<0
\]
Therefore, if player $i\in T_{A}$ it will form links all other players,
whereas if $i\in T_{B}$ it will form links with all the type-A players.
Finally, after every player has played twice, every type-B player
has established links to all members of the type-B clique. Therefore,
the distance between every two type-B players is at most two. Consider
two type-B players, $i,j\in T_{B}$ for which the link $(i,j)$ exists.
If the link is removed, the distance will grow from one to two, per
the previous discussion. But,

\begin{eqnarray*}
 &  & \Delta C(j,E+ij)+\Delta C(i,E+ij)\\
 & = & 2c+2\left(1-2\right)\\
 & > & 0
\end{eqnarray*}
 Hence, this link will be dissolved. This process will be completed
as soon as every type-B player has played at least three times. If
every player plays at least once in $o(N$) turns convergence occurs
after $o(N)$ turns, otherwise the probability the system did not
reach equilibrium by time $t$ decays exponentially with $t$ according
to lemma \ref{lem:decay time-1}. 

The resulting network structure is composed of a type-A clique, and
every type-B player is connected to all members of the type-A clique
(Fig.\ref{fig:The-optimal-solution}(b)). As discussed in Prop. \ref{prop:optimality under monetary-1},
this structure is optimal and stable.
\end{proof}
Yet, the common wisdom that monetary transfers, or utility transfers
in general, should increase the social welfare, is contradicted in
our setting by the following proposition. Specifically, there are
certain instances, where allowing monetary transfers yields a decrease
in the social utility. In other words, if monetary transfers are allowed,
then the system may converge to a sub-optimal state.

\begin{prop}
Assume $\frac{A+1}{2}\leq c$. Consider the case where monetary transfers
are allowed and the game obeys Dynamic Rules \#1,\#2a and Preference
Order \#1. Then:

a) The system will either converge to the optimal solution or to a
solution in which the social cost is 
\begin{eqnarray*}
\mathcal{S} & = & |T_{A}|\left(|T_{A}|-1\right)\left(c_{A}+A\right)+2\left(|T_{B}|-1\right)^{2}+\left(A+1\right)\left(3|T_{A}||T_{B}|-|T_{A}|+|T_{B}|\right)+2c|T_{B}|.
\end{eqnarray*}
For $|T_{B}|\rightarrow\infty$, $|T_{B}|\in\omega\left(|T_{A}|\right)$
we have $S/S_{optimal}\rightarrow1\,$. In addition, if one of the
first $\left\lfloor c-1\right\rfloor $ nodes to attach to the network
is of type-A then the system converges to the optimal solution. 

b) For some parameters and playing orders, the system converges to
the optimal state if monetary transfers are forbidden, but when transfers
are allowed it fails to do so. This is the case, for example, when
the first $k$ players are of type-B, and $2c-A-1<k<c-1$.\end{prop}

\begin{proof}
a) We claim that, at any given turn $t$, the network is composed
of the same structures as in Theorem \ref{cor:credible threat part 3-1}.
We use the notation described there. See Fig. \ref{fig:credible threat corr.-2}
for an illustration. First, assume that the link $(k,x)$ exists.

We prove by induction. At turn $t=1$ the induction hypothesis is
true. We'll discuss the different configurations at time $t$. 

1. $r\in T_{A}$: As before, all links to the other type-A nodes will
be established or maintained, if $r$ is already connected to the
network. The link $(r,x)$ will be formed if the change of cost of
player $r$,
\begin{equation}
\Delta C(j,E+ij)+\Delta C(i,E+ij)=2c-A-|S|-1\label{eq: term1-1-1-1}
\end{equation}
is negative. In this case $|S|$ will increase by one. If this expression
is positive and $(r,x)\in E,$ the link will be dissolved and $|D|$
will be reduced. It is not beneficial for $r$ to form an additional
link to any type-B player, as they only reduce the distance from a
single node by 1 and $\frac{A+1}{2}\leq c$. 

2. $r\in T_{B}$ :The discussion in Theorem \ref{cor:credible threat part 3-1}
shows that a newly arrived may choose to establish its optimal link,
which would be either $(r,k)$ or $(r,x)$ according to the sign of
expression \ref{eq: term 2-2}. As otherwise the graph is disconnected,
such link will cost nothing. Similarly, if $r$ is already connected,
it may disconnect itself as an intermediate state and use its improved
bargaining point to impose its optimal choice. Hence, the formation
of either $(r,k)$ or $(r,x)$ is not affected by the inclusion of
monetary transfers to the basic model. Assume the optimal move for
$r$ is to be a member of the star, $r\in S$. If 
\begin{eqnarray}
\Delta C(k,E+kr)+\Delta C(r,E+kr) & = & 2c-A|m_{A}|-1-|L|\label{eq:monetary, S-1}
\end{eqnarray}
 is negative, than this link will be formed. In this case, $r$ is
a member of both $S$ and $L$, and we address this by the transformation
$|S|\leftarrow|S|$, $|L|\leftarrow|L|+1$ and $|T_{B}|\leftarrow|T_{B}|+1.$
Similarly, if $r\in L$ than it will establish links with the star
center $x$ if and only if 
\begin{eqnarray}
2c & < & |S|+1.\label{eq:monetary, L-1}
\end{eqnarray}
The analogous transformation is, $|S|\leftarrow|S|+1$, $|L|\leftarrow|L|$
and $|T_{B}|\leftarrow|T_{B}|+1.$ The is also true if $r=x$ and
the latter condition is satisfied. We have shown in Theorem \ref{cor:credible threat part 3-1}
, that such links only reduce the social cost, do not incite link
removals, and do not effect the considerations of new type-B player. 

Consider the case that at some point the link $(k,x)$ was removed.
The new player preference nullcline is described by eq. \ref{eq: term 2 -kx removed-1}.
Now, if $r\in S$, it has an increased incentive to establish a link
with $k$, as the nodes in $L$ are farther away from it. In this
case, the condition to establish the link $(k,r)$ is 
\begin{eqnarray*}
\Delta C(k,E+kr)+\Delta C(r,E+kr) & = & 2c-A|m_{A}|-1-2|L|<0
\end{eqnarray*}
(compared to eq. \ref{eq:monetary, S-1}) . Similarly, if $r\in L$
the criteria for establishing the link $(r,x)$ is $2c<2|S|+1$ (compared
to eq. \ref{eq:monetary, L-1}). The transformation described above
should be applied in either case.

As before, as soon as all the players have played two time, the system
will be in either region 1 or region 3, and from there convergence
occurs after every player has played once more.

b) If dynamic rule \#2a is in effect, the nullcline represented by
eq. \ref{eq: term1-1-1-1} is shifted to the left compared to the
nullcline of eq. \ref{eq: term1-3}, increasing region 3 and region
2 on the expanse of region 1 and region 4. Therefore, there are cases
where the system would have converge to the optimal state, but allowing
monetary transfers it would converge to the other stable state. Intuitively,
the star center may pay type-A players to establish links with her,
reducing the motivation for one of her leafs to defect and in turn,
increasing the incentive of the other players to directly connect
to it. Hence, monetary transfers reduce the threshold for the positive
feedback loop discussing in Theorem \ref{cor:credible threat part 3-1}.\end{proof}

The latter proposition shows that the emergence of an effectively
new major league player, namely the star center, occurs more frequently
with monetary transfers, although the social cost is hindered. 

A more elaborate choice of a price mechanism is that of ``strategic''
pricing. Specifically, consider a player $j^{*}$ that knows that
the link $(i,j^{*})$ carries the least utility for player $i$. It
is reasonable to assume that player $j$ will ask the minimal price
for it, as long as it is greater than its implied costs. We will denote
this price as $P_{ij^{*}}$. Every other player $x$ will use this
value and demand an additional payment from player $i$, as the link
$(i,x)$ is more beneficial for player $i$. Formally,

\begin{defn}
Pricing mechanism \#1: Set $j^{*}$ as the node which maximize $\Delta C(i,E+ij*)$.
Set $P_{ij^{*}}=max\{-\Delta C(j*,E,ij*),0\}$. Finally, set
\[
\alpha_{ij}=\Delta C(i,E+ij)-\left(\Delta C(i,E+ij^{*})+P_{ij^{*}}\right)
\]

The price that player $j$ will require in order to establish \emph{$(i,j)$
}is\emph{ }
\[
P_{ij}=max\{0,\alpha_{ij},-\Delta C(j,E+ij)\}
\]
\end{defn}

As far as player $i$ is concerned, all the links $(i,j)$ with $P_{ij}=\alpha_{ij}$
carry the same utility, and this utility is greater than the utility
of links for which the former condition is not valid. Some of these
links have a better connection value, but they come at a higher price.
Since all the links carry the same utility, we need to decide on some
preference mechanism for player $i$. The simplest one is the ``cheap''
choice, by which we mean that, if there are a few equivalent links,
then the player will choose the cheapest one. This can be reasoned
by the assumption that a new player cannot spend too much resources,
and therefore it will choose the ``cheapest'' option that belongs
to the set of links with maximal utility.

\begin{defn}
Preference order \#2: Player $i$ will establish links with player
$j$ if player $j$ minimizes $\Delta\tilde{C}(i,ij)=\Delta C(i,ij)+P_{ij}$
and $\Delta\tilde{C}(i,ij)<0$.

If there are several players that minimize $\Delta\tilde{C}(i,ij)$,
then player $i$ will establish a link with a player that minimizes
$P_{ij}$. If there are several players that satisfy the previous
condition, then one out of them is chosen randomly.\end{defn}
Note that low-cost links have a poor ``connection value'' and therefore
the previous statement can also be formulated as a preference for
links with low connection value. 

We proceed to consider the dynamic aspects of the system under such
conditions.

An immediate result of this definition is the following.
\begin{lem}
If node $j^{*}$ satisfies
\[
\Delta C(j*,ij*)<0
\]
then the link $(i,j^{*})$ will be formed. If there are few nodes
that satisfy this criterion, a link connecting $i$ and one of this
node will be picked at random.\end{lem}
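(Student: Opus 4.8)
The plan is to unwind the definitions in Pricing Mechanism \#1 and Preference Order \#2 and show that the hypothesis $\Delta C(j^*, E+ij^*) < 0$ forces $P_{ij^*} = 0$, hence collapses the price structure so that player $i$'s utility from $(i,j^*)$ equals the raw cost change and is strictly negative, while no competing link can beat it. First I would recall that $j^*$ is, by definition, the node maximizing $\Delta C(i, E+ij^*)$, i.e. the link with the \emph{least} benefit (largest cost change) to player $i$. Since $\Delta C(j^*, E+ij^*) < 0$ by hypothesis, the minimal price $j^*$ demands is $P_{ij^*} = \max\{-\Delta C(j^*, E+ij^*), 0\} = 0$. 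Then for every other node $j$, the quantity $\alpha_{ij} = \Delta C(i, E+ij) - (\Delta C(i, E+ij^*) + P_{ij^*}) = \Delta C(i, E+ij) - \Delta C(i, E+ij^*) \le 0$ because $j^*$ maximizes that term; consequently $P_{ij} = \max\{0, \alpha_{ij}, -\Delta C(j, E+ij)\} = \max\{0, -\Delta C(j, E+ij)\}$, the ordinary efficient-market price.

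The second step is to compare $i$'s effective cost change $\Delta\tilde C(i,ij) = \Delta C(i, E+ij) + P_{ij}$ across all candidate links. For $j = j^*$ we get $\Delta\tilde C(i, ij^*) = \Delta C(i, E+ij^*) + 0 = \Delta C(i, E+ij^*)$. For any other $j$, either $\Delta C(j, E+ij) \ge 0$, in which case $P_{ij} = -\Delta C(j,E+ij)$... wait, rather $P_{ij} = 0$ and $\Delta\tilde C(i,ij) = \Delta C(i, E+ij) \ge \Delta C(i, E+ij^*)$; or $\Delta C(j, E+ij) < 0$, in which case $P_{ij} = 0$ and again $\Delta\tilde C(i,ij) = \Delta C(i,E+ij) \ge \Delta C(i, E+ij^*)$. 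Either way, $j^*$ attains the minimum of $\Delta\tilde C(i,ij)$ (possibly tied with others that have the same raw value). Now I need that this minimum is negative: since $j^*$ was assumed to have $\Delta C(j^*, E+ij^*) < 0$, and in fact the two-player core being non-empty is equivalent to $\Delta C(i,E+ij^*) + \Delta C(j^*, E+ij^*) < 0$, I should check whether $\Delta C(i, E+ij^*) < 0$ follows. If $i$ has already decided to form \emph{some} link (as in the dynamic setup where a disconnected player must connect), then by construction the set of admissible $j$ with $\Delta C(i,E+ij) + \Delta C(j, E+ij) < 0$ is nonempty, and $j^*$ is chosen among these; combined with $\Delta C(j^*, E+ij^*) < 0$ this gives $\Delta\tilde C(i, ij^*) = \Delta C(i,E+ij^*) < -\Delta C(j^*,E+ij^*) \le 0$... actually more carefully $\Delta C(i, E+ij^*) < -\Delta C(j^*, E+ij^*)$ and the right side is positive, so this alone does not give negativity. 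The cleanest route is: among the links $i$ considers, $j^*$ maximizes $\Delta C(i,\cdot)$ over the \emph{admissible} set, so there is at least one admissible link, and if $j^*$ itself is admissible with $\Delta C(j^*, \cdot)<0$ then $\Delta\tilde C(i,ij^*) < 0$ exactly when $\Delta C(i,E+ij^*) < 0$, which holds because for an admissible link $\Delta C(i, E+ij^*) < -\Delta C(j^*,E+ij^*)$; if instead $\Delta C(j^*,E+ij^*) < 0$ makes the pair beneficial with $i$ losing, we would need a separate sign argument. I expect the intended reading is the simpler one where $i$ is (or acts as) disconnected so $\Delta C(i, E+ij) < 0$ for the relevant links; under that reading $\Delta\tilde C(i, ij^*) < 0$ is immediate.

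Finally, invoking Preference Order \#2: player $i$ establishes a link with the node minimizing $\Delta\tilde C(i,ij)$ provided it is negative, breaking ties by minimal $P_{ij}$, and then arbitrary ties uniformly at random. We have shown $j^*$ achieves the minimum value of $\Delta\tilde C$, that this value is negative, and that $P_{ij^*} = 0$ is minimal among all candidates (prices are nonnegative). Hence $(i,j^*)$ is formed, and if several nodes $j$ share both the minimal $\Delta\tilde C$ value and price zero, the tie-break picks one of them uniformly at random. This yields exactly the claimed statement. The main obstacle is the sign bookkeeping in the second paragraph — making precise that $\Delta\tilde C(i,ij^*)$ is genuinely negative rather than merely minimal — which hinges on correctly identifying the admissible candidate set under the operative dynamic rule (Dynamic Rule \#1 together with the disconnected-player bargaining position established in Theorem~\ref{cor:credible threat part 3-1}); everything else is a direct substitution into the two definitions.
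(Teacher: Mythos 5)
Your overall strategy is the paper's: show $P_{ij^{*}}=0$, conclude $\Delta\tilde{C}(i,ij^{*})=\Delta C(i,E+ij^{*})$, argue that the pricing mechanism lifts every competitor's effective cost change up to at least this baseline so that $j^{*}$ is selected under Preference Order \#2, and let the tie-break give the random-selection clause. The first step is fine, but the comparison step has a genuine error. You compute $\alpha_{ij}\leq0$ (correctly, from the formula as printed, since $j^{*}$ maximizes $\Delta C(i,E+ij)$), conclude $P_{ij}=\max\{0,-\Delta C(j,E+ij)\}$, and then assert in both cases that $\Delta\tilde{C}(i,ij)=\Delta C(i,E+ij)\geq\Delta C(i,E+ij^{*})$. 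That inequality is backwards: by the very definition of $j^{*}$ as the maximizer, $\Delta C(i,E+ij)\leq\Delta C(i,E+ij^{*})$ for every other $j$, so under your reading any competitor charging price zero is weakly (generically strictly) preferred to $j^{*}$ and the lemma would be false. The argument only closes if $\alpha_{ij}$ is read as the nonnegative surplus $\left(\Delta C(i,E+ij^{*})+P_{ij^{*}}\right)-\Delta C(i,E+ij)$ that player $j$ extracts on top of covering its own cost --- the reading forced by the surrounding prose (``demand an additional payment'') and by the claim that all links with $P_{ij}=\alpha_{ij}$ carry the same utility. Under that reading $\Delta\tilde{C}(i,ij)=\Delta C(i,E+ij)+P_{ij}\geq\Delta C(i,E+ij^{*})+P_{ij^{*}}=\Delta\tilde{C}(i,ij^{*})$ for every $j$, which is exactly the paper's one-line argument; your derivation discards the $\alpha_{ij}$ term, which is precisely the term that makes $j^{*}$ the minimizer. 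The tie-break by minimal $P_{ij}$, with $P_{ij^{*}}=0$, then yields the random choice among qualifying nodes.

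Your separate worry about whether $\Delta\tilde{C}(i,ij^{*})<0$, as Preference Order \#2 requires before any link is formed, is legitimate: the hypothesis $\Delta C(j^{*},E+ij^{*})<0$ says nothing about the sign of $\Delta C(i,E+ij^{*})$, and the paper's proof does not address it either. In the contexts where the lemma is used (a newly arrived or effectively disconnected player choosing its first link) one does have $\Delta C(i,E+ij)<0$ for the candidate links, but as a freestanding statement the lemma needs that as an extra hypothesis; your hand-wave there should be made explicit rather than left as ``the intended reading.''
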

\begin{proof}
As 
\[
P_{ij^{*}}=max\{-\Delta C(j*,E,ij*),0\}=0
\]
 $\Delta\tilde{C}(i,ij)$ is maximal when $\Delta C(i,ij)$ is, which
is for node $j^{*}$.
\end{proof}
The resulting equilibria following this preference order are very
diverse and depend heavily on the order of acting players. The only
general statement that can are of the form of Lemma \ref{lem:The-longest-distance-1-1}.
Before we elaborate, let us provide another useful lemma.
\begin{lem}
\label{lem:shortcut-1}Assume that according to the preference order
player $i$ will establish the link $(i,j^{*})$. If there is a node
$x$ such that
\begin{eqnarray*}
\Delta C(i,(E+ij*)+ix) & < & 0\\
\Delta C(x,(E+ij*)+ix) & < & 0\\
\Delta C(i,(E+ix)+ij^{*})+\Delta C(j^{*},(E+ix)+ij^{*}) & > & 0
\end{eqnarray*}

Then effectively the link $(i,x)$ will be formed instead of $(i,j^{*})$.
\end{lem}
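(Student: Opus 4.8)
The plan is to realize the word ``effectively'' as a concrete sequence of three acts inside player $i$'s turn — form $(i,j^{*})$, then form $(i,x)$, then dissolve $(i,j^{*})$ — whose net effect on the link set is $E\mapsto E+ix$, i.e.\ exactly the same final graph as if $(i,x)$ had been formed in place of $(i,j^{*})$. The first act is given: by hypothesis the preference order selects $(i,j^{*})$, so after this act the link set is $E+ij^{*}$ (in particular $ij^{*}\notin E$, and we may also assume $x\neq j^{*}$ and $ix\notin E$, else there is nothing to prove).

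For the second act, work in the graph $E+ij^{*}$. The first two displayed hypotheses say $\Delta C(i,(E+ij^{*})+ix)<0$ and $\Delta C(x,(E+ij^{*})+ix)<0$, so their sum is negative and the monetary-transfer establishment criterion (Corollary~\ref{lem:edges with monetary transfers.}) forces the link $(i,x)$ to be formed. (Under Dynamic Rule~\#1 the greedy acceptance test for $x$ is precisely the second inequality, and under either Dynamic Rule~\#2a or~\#2b the first inequality shows the act does not increase $i$'s cost, so it is one of the acts $i$ is willing to perform in its turn.) After this act the link set is $F:=E+ij^{*}+ix$.

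For the third act, use the bookkeeping identity immediate from the $\Delta C$ notation of Definition~\ref{cost-definition}: since $F\setminus ij^{*}=E+ix$, for any node $y$
\[
\Delta C\bigl(y,(E+ix)+ij^{*}\bigr)=C(y,F)-C(y,E+ix)=\Delta C\bigl(y,F-ij^{*}\bigr).
\]
Applying this to $y=i$ and $y=j^{*}$ and invoking the third hypothesis,
\[
\Delta C\bigl(i,F-ij^{*}\bigr)+\Delta C\bigl(j^{*},F-ij^{*}\bigr)=\Delta C\bigl(i,(E+ix)+ij^{*}\bigr)+\Delta C\bigl(j^{*},(E+ix)+ij^{*}\bigr)>0,
\]
which is exactly the monetary-transfer removal criterion (Corollary~\ref{lem:edges with monetary transfers.}) for the link $(i,j^{*})$ in the graph $F$. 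Hence $(i,j^{*})$ is removed and the link set becomes $F\setminus ij^{*}=E+ix$. Chaining the three acts, the turn carries $E$ to $E+ix$ while $(i,j^{*})$ appears only transiently, which is the claim.

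The main obstacle is keeping the three acts legitimate within the chosen dynamic rule rather than merely counting steps. Under Dynamic Rule~\#2a player $i$ may perform all three acts in its own turn, since there the deletion of an incident link is unconditional; under Dynamic Rule~\#2b, however, $i$ removes $(i,j^{*})$ only if it individually benefits, which the hypotheses do not guarantee, so one must instead appeal to Corollary~\ref{lem:edges with monetary transfers.} at $j^{*}$'s next turn, observing that intervening moves of other players do not change the two-node surplus computed above because all relevant distances are bounded by a small constant (Lemma~\ref{lem:The-longest-distance-1-1}). A secondary point, not needed for the statement, is that $(i,x)$ is not immediately dissolved afterwards; this can be checked directly from the negativity of $\Delta C(i,(E+ij^{*})+ix)+\Delta C(x,(E+ij^{*})+ix)$ once one accounts for the effect of having deleted $ij^{*}$, but since the lemma only asserts the substitution of $x$ for $j^{*}$ it need not be argued here.
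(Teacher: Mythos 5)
Your proposal is correct and follows essentially the same route as the paper, which disposes of this lemma in two sentences: the first two inequalities ensure $(i,x)$ is formed after $(i,j^{*})$, and the third triggers the removal criterion of Corollary~\ref{lem:edges with monetary transfers.} for $(i,j^{*})$, leaving the net link set $E+ix$. Your version merely makes the bookkeeping identity $\Delta C\bigl(y,(E+ix)+ij^{*}\bigr)=\Delta C\bigl(y,(E+ij^{*}+ix)-ij^{*}\bigr)$ and the Dynamic Rule caveats explicit, which is a faithful elaboration rather than a different argument.
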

The first two inequalities state that after establishing the link
$(i,j^{*})$ the link $(i,x)$ will be formed as well. However, the
last inequality indicates that after the former step, it is worthy
for player $i$ to disconnect the link $(i,j^{*})$. 

We proceed to consider the dynamic aspects of the system under such
conditions.

\begin{prop}
\label{prop:monetary-dyanmics-1-2}Assume that:

A) Players follow Preference Order \#2 and Dynamic Rule \#1, and either
Dynamic Rule \#2a or \#2b. 

B) There are enough players such that $2c<T_{A}\cdot A+T_{B}^{2}/4$.

C) At least one out of the first $m$ players is of type-A, where
$m$ satisfies $m\geq\sqrt{A^{2}+4c-1}-A$.

Then, if the players play in a non-random order, the system converges
to a state where all the type-B nodes are connected directly to the
type-A clique, except perhaps lines of nodes with summed maximal length
of $m$. In the large network limit, $\mathcal{S}/\mathcal{S}_{optimal}<3/2+c$
.

D) If $2c>(A-1)+|T_{B}|/|T_{A}|$ then the bound in (C) can be tightened
to $\mathcal{S}/\mathcal{S}_{optimal}<3/2$.\end{prop}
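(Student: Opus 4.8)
The plan is to maintain, by induction on the turn number, a structural invariant: at every turn the current graph consists of the (partially populated) type-A clique, a set of type-B nodes each of which is a direct neighbour of some clique member, and at most one ``leftover'' type-B line (possibly split into two half-lines by its attachment point). I would prove this invariant, then read off the final topology and compute its social cost. Throughout I use that, under Pricing mechanism \#1, a link that strictly benefits its other endpoint is free, and that Preference Order \#2 then breaks ties toward the cheapest such link.

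\emph{Bootstrap phase.} First I would examine the turns before the first type-A player arrives. A freshly arriving type-B player is disconnected, so any link it forms strictly helps its partner (whose distance to the newcomer drops from infinity) and is therefore free; among the free, maximal-utility links, Preference Order \#2 picks a peripheral attachment, so the type-B component grows as a line. By condition (C) fewer than $m$ type-B players act before the first type-A player, so this component is a line of fewer than $m$ nodes (and its diameter obeys Lemma~\ref{lem:The-longest-distance-1-1}). When the first type-A player $k$ arrives, the free link of largest utility to $k$ attaches it to an interior vertex of the line, splitting it into two half-lines of total length $<m$; by Lemma~\ref{lem:The-longest-distance-2} every later type-A arrival links to $k$ and to the other major players, forming the clique. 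Every later type-B arrival is again disconnected, so the same zero-price argument forces it to attach directly to a clique member. This establishes the invariant.

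\emph{Steady state and cost.} Next I would argue that once all players have joined, the only admissible moves under Dynamic Rule \#2a (and, a fortiori, \#2b) are a leftover-line node peeling off to the clique, and cost-neutral re-assignments of which clique member a direct neighbour hangs from. For a line node at depth $\ell$, Lemma~\ref{lem:shortcut benefit} gives that the joint cost change of adding a direct link to a clique member is $2c$ minus a quantity of order $\ell(A|T_A|+|T_B|)$ up to lower-order terms, and condition (B) makes this negative exactly once $\ell$ exceeds the threshold built into $m$; hence every surviving line has length at most $m$ (summed, because of the two half-lines). Each such move weakly lowers $\mathcal{S}$ and none recurs indefinitely, so the dynamics terminate. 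Plugging the final topology into $\mathcal{S}=\sum_i C(i)$: type-A--type-A distances equal $1$, type-A--type-B distances are at most $2$ for direct neighbours and at most $m+1$ for the at most $m$ line nodes, and type-B--type-B distances are at most $3$ except for the $O(m)$ line nodes; the dominant contribution is between $2|T_B|^2$ and $3|T_B|^2$, whereas $\mathcal{S}_{optimal}=2|T_B|^2+o(|T_B|^2)$ by Proposition~\ref{prop:optimality under monetary-1}. In the limit $|T_B|\to\infty$, $|T_B|\in\omega(|T_A|)$, keeping the line and link-cost terms as slack yields $\mathcal{S}/\mathcal{S}_{optimal}<3/2+c$. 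For part (D), the hypothesis $2c>(A-1)+|T_B|/|T_A|$ rules out any type-B node profiting from a second clique neighbour, which fixes the distribution of direct neighbours and lets one discard the slack, improving the bound to $3/2$.

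\emph{Main obstacle.} The delicate part is the steady-state analysis. Unlike in Theorem~\ref{cor:credible threat part 3-1}, I must exclude a competing type-B star center: because Preference Order \#2 makes the pre-clique component a \emph{line} rather than a star, the first type-A player becomes the unique hub, and this must be extracted carefully from the signed cost comparisons of Pricing mechanism \#1 for interior versus peripheral attachment. Equally delicate is calibrating the line-node defection threshold so that surviving lines have length at most $m$ rather than merely $O(\sqrt c)$; this is precisely where conditions (B) and (C) are used together, and I expect most of the work to lie in promoting the heuristic ``benefit $\approx\ell(A|T_A|+|T_B|)$'' to a rigorous inequality at the single critical depth.
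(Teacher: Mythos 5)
Your overall strategy matches the paper's: induct on turns to show the graph is always a type-A clique plus one short type-B line plus directly-attached type-B nodes, use a length-$m$ threshold to stop the line from growing, and then bound $\mathcal{S}$ termwise against the optimum of Proposition \ref{prop:optimality under monetary-1}. The cost accounting at the end, and the argument for part (D) (even spreading of direct neighbours over the clique kills the $2c|T_A||T_B|$ link-cost slack), are essentially the paper's.

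However, there is a genuine gap in how you apply the pricing mechanism, and it sits at the heart of the induction. Your stated premise --- ``a link that strictly benefits its other endpoint is free'' --- describes Preference Order \#1 (the efficient-market rule), not the strategic Pricing Mechanism underlying Preference Order \#2, which is what this proposition assumes. Under the strategic mechanism only the \emph{least useful} link $j^*$ is offered at the bare compensation price; every other player $j$ demands the premium $\alpha_{ij}$ that equalizes the net utility $\Delta C(i,E+ij)+P_{ij}$ across all links. If all beneficial links really were free, Preference Order \#2 would select the maximal-utility partner and the pre-clique type-B component would be a \emph{star}, collapsing this proposition into the setting of Theorem \ref{cor:credible threat part 3-1}; it is precisely the premium structure that forces each newcomer onto the end of the line (the unique zero-price, utility-equalized link). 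Your bootstrap sentence (``among the free, maximal-utility links \dots picks a peripheral attachment'') is internally inconsistent for the same reason, and your claim that the first type-A arrival attaches to an \emph{interior} vertex, splitting the line into two half-lines, is wrong under the actual tie-breaking: it too attaches to an end, and later type-A arrivals form the end-of-line link only to dissolve it after joining the clique (this is what Lemma \ref{lem:shortcut-1} is for). Relatedly, the paper bounds the line length not by having existing line nodes ``peel off'' with a benefit of order $\ell(A|T_A|+|T_B|)$, but by showing that once the line has length $x\geq m$ a \emph{new} type-B arrival's joint cost change for a direct clique link is below $2c-m_XA-x^2/4<0$ (conditions (B) and (C) guarantee this threshold is reached), so it attaches to the clique instead of extending the line. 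The final topology and the $3/2+c$ bound survive these corrections, but as written the induction does not follow from the mechanism you invoke.
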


\begin{proof}
We prove by induction. Assume that the first type-B player is player
$k_{A}$ and the first type-A player is $k_{B}$. We first prove that
the structure up-to the first move of $\max\{k_{A},k_{B}\}$ is a
type-A clique, and an additional line of maximal length $\left|k_{B}-k_{A}\right|$
of type-B players connected to a single type-A player.

If $k_{A}>k_{B}=1$, there is a set of type-B players that play before
the first type-A joins the game, we claim they form a line. For the
first player it is true. Consider player $x$. The least utility it
may obtain is by establishing a link to a node at the end of the line,
and therefore it will connect first to one of the ends, as this would
be the cheapest link. W.l.o.g, assume that it connects to $x-1$.
The most beneficial additional link it may establish is $(1,x)$ but
according to Corollary \ref{lem:shortcut benefit}
\begin{eqnarray*}
 &  & \Delta C(x,E+1x)+\Delta C(1,E+1x)\\
 & = & 2\Delta C(1,E+1x)\\
 & = & 2c--\frac{x\left(x-2\right)+mod(x,2)}{2}\\
 & > & 2c-\frac{m\left(m-2\right)+1}{2}\\
 & > & 0
\end{eqnarray*}

and therefore no additional link will be formed. Player $k_{A}$ will
establish a link with a node at one of the line's ends, say to player
$k_{A}-1$, and as 
\begin{eqnarray*}
 &  & \Delta C(k_{A},E+1k_{A})+\Delta C(1,E+1k_{A})\\
 & > & 2c-\frac{m\left(m-2\right)+1}{2}-(m-2)(A-1)
\end{eqnarray*}

there would be no additional links between player $k_{A}$ and a member
of the line.

If $k_{A}<k_{B}=1$ then the first $k_{B}-1$ type-A player will form
a clique (same reasoning as in lemma \ref{prop:optimality under monetary-1}),
and player $k_{B}$ (of type-B) will form a link to one of the type-A
players randomly. This completes the induction proof.

For every new player, the link with the least utility is the link
connecting the new arrival and the end of the type-B strand. For a
type-A player, using lemma \ref{lem:shortcut-1}, immediately after
establishing this link it will be dissolved and the new player will
join the clique. Type B players will attach to the end of the line
and the line's length will grow. Assume that when player $x$ turn
to play there are $m_{x}$ type-A players in the clique. By establishing
the link $(i,x)$ we have (Fig. \ref{fig:suggested shortcut-1}),
\begin{eqnarray*}
\Delta C(x,E+ix) & < & c_{B}-m_{X}\cdot A\\
\Delta C(i,E+ix) & = & c_{A}-\frac{\left(x+1\right)\left(x-1\right)+mod(x+1,2)}{2}.
\end{eqnarray*}

For $x\geq m\geq\sqrt{A^{2}+4c-1}-A$
\[
\Delta C(x,ix)+\Delta C(i,ix)<2c-m_{X}\cdot A-x^{2}/4<0
\]

and therefore according to lemma \ref{lem:shortcut-1} player $x$
will connect directly to one of the nodes in the clique instead of
line\textquoteright{}s end. After all the players have played once,
the structure formed is a type-A clique, a line of maximum length
$m$, and type-B nodes that are connected directly to at least one
of the members of the clique. In a large network, the line's maximal
length is $o(\sqrt{T_{B}})$ and is negligible in comparison to terms
that are $o(T_{B})$. 

\begin{figure}
\centering{}\includegraphics[width=0.7\columnwidth]{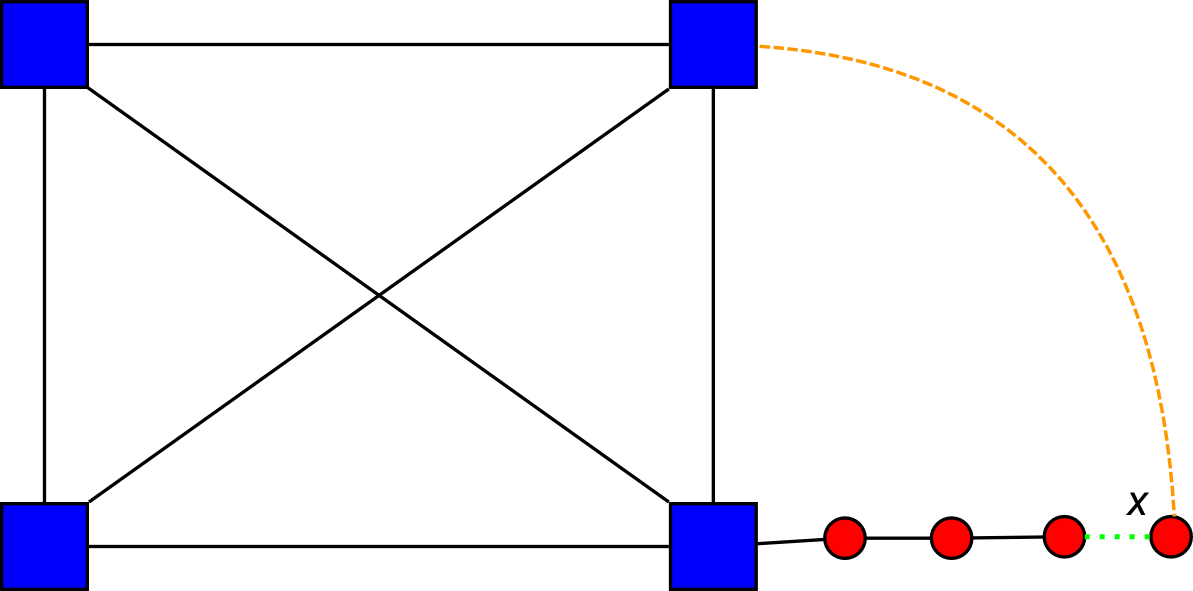}\caption{\label{fig:suggested shortcut-1}The suggested dynamics in Prop. \ref{prop:monetary-dyanmics-1-2}.
Here, $x=4$ and $m_{X}=4$. The link to the clique is dashed in orange,
the canceled link is in dotted green.}
\end{figure}

The only possible deviation in this sub-graph is establishing additional
links between a type-B player and clique members (other than the one
it is currently linked to). This will only reduce the overall distance.
Hence we can asymptotically bound the social cost by 
\begin{eqnarray*}
\mathcal{S} & < & 3|T_{B}|^{2}+2|T_{B}||T_{A}|\left(A+1\right)\\
 &  & +2|T_{B}||T_{A}|c+|T_{A}|^{2}\left(c_{A}+A\right)
\end{eqnarray*}

where the first term expresses the (maximal) distance between type-B
nodes, the second the distance cost between the type-A clique and
the type-B nodes, the third is the cost of links between type-A players
and type-B players and the third is the cost of the type-A clique.
Comparing this to the optimal solution (Proposition \ref{prop:maximal distance from clique with money-1})
under the assumption that $T_{B}>T_{A}$, we have
\[
\mathcal{S}/\mathcal{S}_{optimal}<\frac{3}{2}+c.
\]

This concludes the proof.

D) We can improve on the former bound by noting that according to
Preference Order \#2, when disconnecting from the long line a player
will reconnect to the type-A player that has the least utility in
his concern (and hence requires the lowest price). In other words,
it will connect to one of the nodes that carry the least amount of
type-B nodes at that moment. Therefore, to each type-A node roughly
$|T_{B}|/|T_{A}|$ type-B nodes will be connected. This allows us
to provide the following corollary. 

According to the aforementioned discussion by establishing a link
between node $j\in T_{B}$ and $i\in T_{A}$ (where $j$ in not connected
directly to $i$) the change of cost is 
\begin{eqnarray*}
\Delta C(i,ij) & = & c_{A}-A-\frac{|T_{B}|}{|T_{A}|}\\
\Delta C(j,ij) & = & c_{B}-1
\end{eqnarray*}

but
\[
\Delta C(i,ij)+\Delta C(j,ij)=2c-(A-1)-\frac{|T_{B}|}{|T_{A}|}>0
\]

and the link would not be establish. Hence, we can neglect the term
$2|T_{B}||T_{A}|c$ and 
\[
\sum C(i)<3|T_{B}|^{2}+2|T_{B}||T_{A}|\left(A+1\right)+|T_{A}|^{2}\left(c_{A}+A\right)
\]

By comparing with the optimal solution we obtained the required result.
\end{proof}
\begin{figure}
\centering{}\includegraphics[width=0.7\columnwidth]{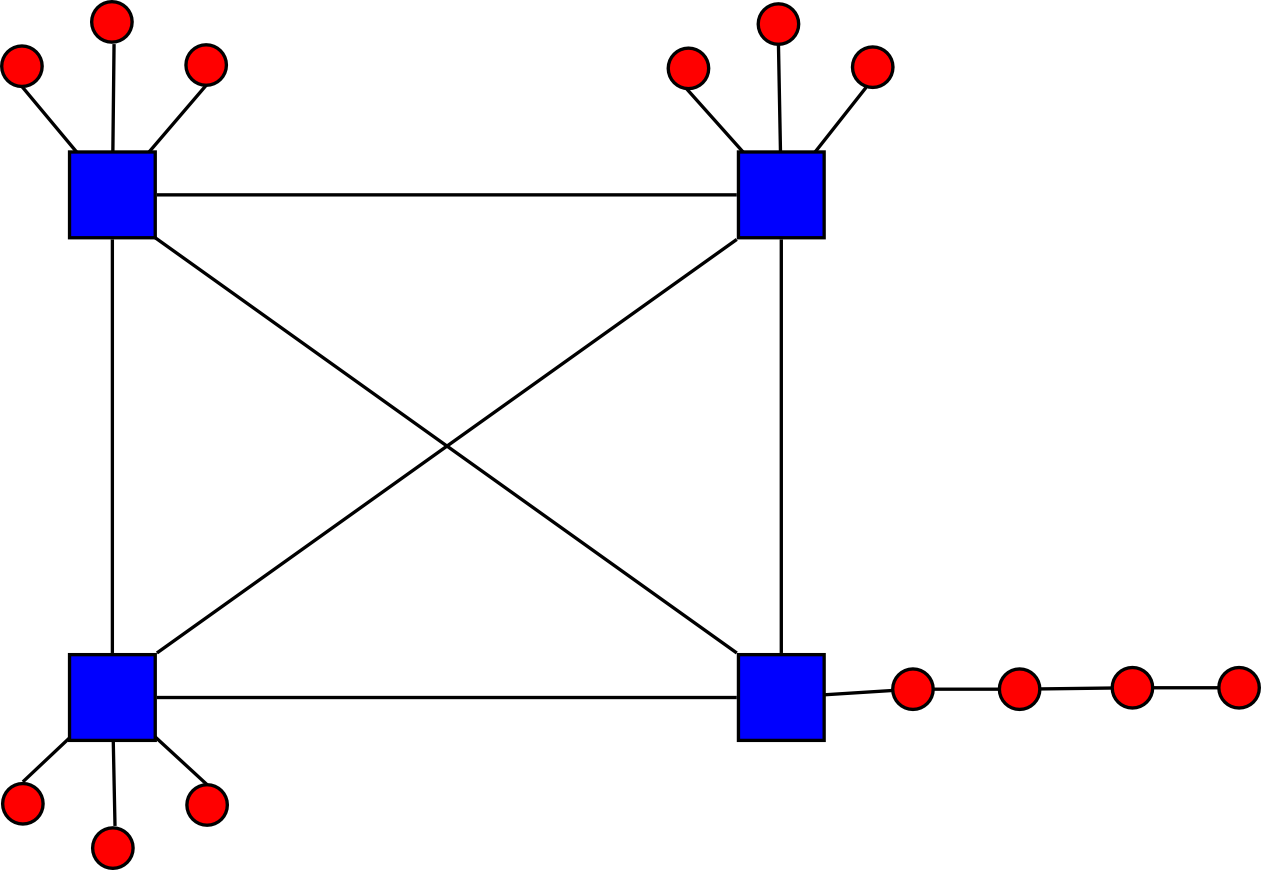}\caption{The network structure described in Prop. \ref{prop:monetary-dyanmics-1-2}.
The type-B players connect almost uniformly to the different members
of the type-A clique. In addition, there is one line of type-B players. }
\end{figure}

In order to obtain the result in Proposition \ref{prop:maximal distance from clique with money-1},
we had to assume a large limit for the number of type-A players. Here,
on the other hand, we were able to obtain a similar result yet without
that assumption, i.e., solely by dynamic considerations. 

It is important to note that, although our model allows for monetary
transfers, in \emph{every} resulting agreement between major players
no monetary transaction is performed. In other words, our model predicts
that the major players clique will form a \emph{settlement-free} interconnection
subgraph, while in major player - minor player contracts transactions
will occur, and they will be of a transit contract type. Indeed, this
observation is well founded in reality.

\subsection{Detailed Calculations}

\subsubsection*{In Proposition \ref{lem:optimal solution}:}

For the network presented in Fig. \ref{fig:The-optimal-solution}(a),
the number of links in the type-A clique is the same as the number
of pairs, which is ${|T_{A}| \choose 2}$. Each edge is counted twice
at the social cost summation (as part of its members cost) and therefore
the social cost due to the number of edges in the type-A clique is
\[
2c{|T_{A}| \choose 2}=c|T_{A}|\left(|T_{A}|-1\right).
\]

Likewise, the distance of each player in the type-A clique from every
other type-A player is one, and the distance cost is counted twice,
and we obtain the type-A clique distance cost term $A|T_{A}|\left(|T_{A}|-1\right)$.

As the distance between every two type-B players is 2, and there are
${|T_{B}| \choose 2}$ pairs, we obtain in a similar fashion the type-B
to type-B distance cost term 
\[
2\cdot2{|T_{B}| \choose 2}=2|T_{B}|\left(|T_{B}|-1\right).
\]

There are $|T_{B}|$ type-B players and their distance to $|T_{A}|-1$
clique members is 2, and their distance to a single clique member
is one. The distance is multiplied by $A$ when the sum is over a
type-B player, and it is multiplied by $1$ when the sum is over a
type-A player. Therefore, the social cost contribution of the type-A
to type-B distance term is 
\[
2\left(A+1\right)|T_{B}|\left(|T_{A}|-1\right)+\left(A+1\right)|T_{B}|.
\]

Finally, there are $|T_{B}|$ links connecting every type-B player
to the clique, and each edge is summed twice, and we obtain the type-B
players link's cost

\[
2c|T_{B}|.
\]

If, however, one observes the network presented in Fig. \ref{fig:The-optimal-solution}(b),
then every type-B players is connected by $|T_{A}|$ edges to the
type-A clique. Therefore, the last two expressions are replaced,

\[
\left(A+1\right)|T_{B}||T_{A}|+2c|T_{B}||T_{A}|
\]

where the right term is the distance term and the second is the link's
cost term, and we have taken into account the double summation.

\subsubsection*{In Lemma \ref{lem:poor eq}:}

For simplicity, assume $k$ is odd. Without the edge $(x_{k,}y_{\left(k+1\right)/2})$,
the sum of distances from player $x_{k}$ to all players along the
$(y_{1},y_{2},...y_{k})$ is 
\[
\sum_{i=k+1}^{2k}i=\frac{k\left(3k+1\right)}{2}
\]

after the addition of the link $(x_{k,}y_{\left\lfloor \left(k+1\right)/2\right\rfloor })$,
the the sum of distances is 
\[
2\sum_{i=1}^{\left(k+1\right)/2}i-1=\frac{k^{2}}{4}+k-\frac{1}{4}
\]

and therefore the reduction is the sum of distances bounded from above
by $5k^{2}/4$.
\end{document}